\documentclass[11pt,letterpaper]{article}
\usepackage[utf8]{inputenc}
\usepackage[english]{babel}
\usepackage{csquotes}
\usepackage{placeins}
\usepackage{nicematrix}

\usepackage[margin=0.75in]{geometry} 

\usepackage{setspace}
\usepackage{graphicx}

\usepackage{amsmath, amssymb, amsfonts}
\usepackage{amsthm, mathrsfs, mathtools}
\usepackage{algorithm}
\usepackage{algorithmic}
\numberwithin{equation}{section}
\usepackage{thmtools}

\usepackage[colorlinks=true, allcolors=blue]{hyperref}
\usepackage[capitalise]{cleveref}

\usepackage[backend=biber,style=alphabetic, minbibnames=99,sorting=nyt,sortcites=true,maxbibnames=99]{biblatex}
\usepackage{titling}

\usepackage{authblk}

\usepackage{enumitem} 
\usepackage{caption} 
\usepackage{subcaption} 
\usepackage{booktabs} 
\usepackage{microtype} 
\usepackage{dsfont} 
\usepackage{faktor} 
\usepackage{comment}

\usepackage[normalem]{ulem}

\theoremstyle{plain}
\newtheorem{theorem}{Theorem}[section]
\newtheorem{lemma}[theorem]{Lemma}
\newtheorem{corollary}[theorem]{Corollary}
\newtheorem{proposition}[theorem]{Proposition}
\newtheorem{conjecture}[theorem]{Conjecture}

\theoremstyle{definition}

\theoremstyle{remark}
\newtheorem{remark}[theorem]{Remark}     
\newtheorem*{remark*}{Remark}         
\newtheorem*{conjecture*}{Conjecture}

\newcommand{\bra}[1]{\ensuremath{\langle#1\rvert}}
\newcommand{\fbra}[1]{\ensuremath{\lparen#1\rvert}} 
\newcommand{\ket}[1]{\ensuremath{\lvert#1\rangle}}
\newcommand{\fket}[1]{\ensuremath{\lvert#1\rparen}} 
\newcommand{\braket}[2]{\langle #1  |#2\rangle}
\newcommand{\ketbra}[2]{|#1\rangle\langle #2|  }

\newcommand{\fketbra}[2]{|#1\rparen\lparen #2|  }

\newcommand{\fsandwich}[3]{\lparen #1|#2 |#3\rparen  }

\newcommand{\abs}[1]{\lvert#1\rvert}

\def\id{\mathds{1}}

\newcommand{\vacuumket}{\fket{\Omega}}

\newcommand{\cH}{\mathcal{H}}

\newcommand{\ii}{\mathrm{i}}

\newcommand{\Romannum}[1]{\uppercase\expandafter{\romannumeral #1\relax}}
\usepackage{xcolor}

\iftrue
\newcommand{\xiangling}[1]{{\color{orange}$\big[\![$~\raisebox{.7pt}{X}\!\!\:\raisebox{-.7pt}{X}: \textit{#1}~$]\!\big]$}}
\newcommand{\marco}[1]{{\color{magenta}$\big[\![$~\raisebox{.7pt}{M}\!\!\:\raisebox{-.7pt}{O}: \textit{#1}~$]\!\big]$}}
\newcommand{\fatemeh}[1]{{\color{teal}$\big[\![$~\raisebox{.7pt}{F}\!\!\:\raisebox{-.7pt}{{M}}: \textit{#1}~$]\!\big]$}}
\newcommand{\sadra}[1]{{\color{red!80!black}$\big[\![$~\raisebox{.7pt}{S}\!\!\:\raisebox{-.7pt}{Bo}: \textit{#1}~$]\!\big]$}}
\newcommand{\tommaso}[1]{{\color{green!80!black}$\big[\![$~\raisebox{.7pt}{T}\!\!\:\raisebox{-.7pt}{G}: \textit{#1}~$]\!\big]$}}
\newcommand{\lucas}[1]{{\color{blue}$\big[\![$~\raisebox{.7pt}{L}\!\!\:\raisebox{-.7pt}{T}: \textit{#1}~$]\!\big]$}}
\newcommand{\salman}[1]{{\color{violet}$\big[\![$~\raisebox{.7pt}{S}\!\!\:\raisebox{-.7pt}{Be}: \textit{#1}~$]\!\big]$}}

\fi

\hypersetup{
  pdftitle    = {Fermions are fundamentally more nonlocal than Bosons},
  pdfauthor   = {Fatemeh Moradi Kalarde; Sadra Boreiri; Xiangling Xu; Lucas Tendick; Salman Beigi; Paolo Perinotti; Tommaso Guaita; Marc-Olivier Renou},
  pdfsubject = {Quantum information theory; fermionic information theory; quantum nonlocality},
  pdfkeywords = {quantum information theory, fermionic information theory, fermions, bosons, distinguishable particles, quantum nonlocality, quantum correlations, quantum networks, self-testing, token-counting game, distributed computing, febits, particle statistics, network nonlocality.}
  pdflang     = {en},
  pdfcreator  = {LaTeX with hyperref},
  pdfdisplaydoctitle = true
}

\title{\bfseries{Fermions are fundamentally more nonlocal than Bosons}}

\author[1$\dagger$]{Fatemeh Moradi Kalarde}
\author[2]{Sadra Boreiri}
\author[1]{Xiangling Xu}
\author[1]{Lucas Tendick}
\author[3]{Salman Beigi}
\author[4,5]{Paolo Perinotti}
\author[6]{Tommaso Guaita}
\author[1]{Marc-Olivier Renou}

\affil[1]{Inria, CPHT, LIX, CNRS, Ecole polytechnique, Institut Polytechnique de Paris, Palaiseau, France}
\affil[2]{Department of Applied Physics, University of Geneva, Geneva, Switzerland}
\affil[3]{School of Mathematics, Institute for Research in Fundamental Sciences (IPM), Tehran,
Iran}
\affil[4]{Dipartimento di Fisica dell'Universit\`a di Pavia, via Bassi 6, 27100 Pavia}
\affil[5]{Istituto Nazionale di Fisica Nucleare, Gruppo IV, via Bassi 6, 27100 Pavia}
\affil[6]{Dahlem Center for Complex Quantum Systems, Freie Universität Berlin, Arnimallee 14, 14195 Berlin, Germany}
\affil[$\dagger$]{\scriptsize\texttt{fatemeh.moradi-kalarde@inria.fr}}

\date{\vspace{-3.00em}}

\allowdisplaybreaks

\begin{document}
\maketitle

\begin{abstract}
Bell's theorem shows that entangled quantum particles can exhibit correlations that classical particles cannot reproduce without an additional nonlocal resource, such as communication. In this sense, quantum particles are fundamentally more nonlocal than classical ones, and entanglement becomes unavoidable in physics. Here we prove the analogous result within quantum theory itself: indistinguishable fermions transmitted through a quantum network can generate correlations that distinguishable particles or indistinguishable bosons cannot reproduce without additional communication. In the same sense, fermions are fundamentally more nonlocal than bosons or distinguishable particles, motivating fermionic anticommutation and indistinguishability as unavoidable operational resources. Our result further implies that fermions can strictly surpass all qubit-based protocols for certain distributed computing tasks, demonstrating that a complete understanding of information processing requires going beyond qubits to fermionic information carriers---\emph{febits}.
\end{abstract}

\crefname{section}{Supplementary Material}{Supplementary Materials}
\Crefname{section}{Supplementary Material}{Supplementary Materials}

\par\addvspace{1.6\baselineskip}

\begin{minipage}{0.92\linewidth}
\emph{``One can speculate that, in principle, electrons might not
be fundamental particles but, rather, excitations in a (nonperturbative) system [of] bosons.
Of course, this is only a logical possibility which may or may not be true.''}

\par\medskip

{\raggedleft
Sergey Bravyi and Alexei Kitaev\\
\emph{Annals of Physics} \textbf{298}, 210--226 (2002)\par}
\end{minipage}%

\par\addvspace{1.0\baselineskip}

Physics describes nature in terms of particles and their interactions. In the 19th century, these were conceived as classical objects governed by gravity and electromagnetism. 
However, inconsistencies---notably the instability of the Rutherford model of the atom \cite{Rutherford1911Scattering}---led to the emergence of quantum physics.
The early successes of quantum theory~\cite{Planck1901, Einstein1905, Bohr1913A, Bohr1913B} already suggested that nature cannot be described by classical particles alone.
Yet in 1935, Einstein, Podolsky and Rosen argued that entanglement and other puzzling quantum phenomena indicated that quantum theory might be incomplete, and could admit a deeper classical description \cite{EPR1935}. 
Bell's theorem ruled out this possibility in 1964~\cite{Bell1964}: from only observed correlations and a locality principle, namely that information is carried by physical systems propagating at finite speed, it showed that entangled quantum particles can generate correlations that no local classical description can reproduce~\cite{BellNonlocalityReview}.
Any classical reproduction of these correlations therefore requires an additional nonlocal interactions, such as communication. 
In this sense, quantum particles are fundamentally more nonlocal than classical ones. 

Within quantum theory, particles themselves fall into distinct classes; see Supplemental Material~\ref{appendix:ParticleFormalism}.
A pair of particles may be either distinguishable, or indistinguishable with either bosonic or fermionic exchange statistics.
Strong physical arguments already suggest that indistinguishable fermions cannot be reduced to distinguishable particles or bosons, since they obey anticommutation relations that cannot be simply reproduced by commuting bosons
\cite{jordan_uber_1928}.
Yet, in contrast to Bell's theorem, no such separation has been derived solely from observed correlations with the above locality assumption.
The issue is sharpened by the discovery of superfast encodings \cite{bravyi_fermionic_2002}, which led Bravyi and Kitaev to speculate that fermions, such as electrons, might not be fundamental particles, but rather admit a deeper bosonic description.
Here we show that, for independently prepared systems, such a description is impossible: locally interacting fermionic particles can generate correlations that no bosonic or distinguishable-particle model can reproduce without additional nonlocal interactions or states.
Our proof relies on the same type of minimal assumptions as Bell's theorem, but applies them to a different setting: several distant fermionic systems are exchanged through a quantum network \cite{Tavakoli2022NetworkReview} whose global configuration is not known prior to the experiment.
Our main result 
can be informally stated as:
\par\bigskip

\begin{minipage}{0.94\textwidth}
\emph{
There exists a Gedankenexperiment in which several independently prepared fermions are locally exchanged through a communication network whose global configuration is unknown in advance.
The resulting correlations cannot be reproduced by bosons or distinguishable particles under the same independent preparation and local-exchange conditions.}
\end{minipage}

\bigskip
\noindent
The elementary operations required by our proposed Gedankenexperiment have already been demonstrated experimentally \cite{gonzalez-cuadra_fermionic_2023, bluvstein_quantum_2022}, for instance, with cold atoms, suggesting that such experiments may become accessible in the coming decades.

When viewed as information carriers, bosons and distinguishable particles are essentially equivalent to collections of qubits and can be described within standard quantum information theory\footnotemark[1].

Indistinguishable fermions are fundamentally different: when information is coded in the occupation of fermionic modes, their anticommutation relations give rise to information-theoretic features with no qubit analogue \cite{Verstraete2003Superselection,Bauls2009,DAriano_2014,Dariano2016, friis2016reasonable}. 
Our result thus establishes a fundamental separation between fermionic and qubit-based quantum information\footnotemark[2], in direct analogy with Bell's separation between quantum and classical physics.
From a computer science perspective, our result shows that qubits do not exhaust the full computational power of nature, as we exhibit a distributed computing problem \cite{PelegDistributedComputing, Naor1995} in which fermionic information processing outperforms all qubit-based protocols by requiring strictly less communication. 
More broadly, we demonstrate that a complete understanding of information processing in decentralized distributed settings cannot be achieved within standard qubit-based quantum information theory alone. It instead requires going beyond qubits to information encoded in the presence or absence of indistinguishable fermions in modes---\emph{febits}.

To present our main result, let us imagine a debate between two scientists, Fiona and Quentin. Quentin believes that nature admits fundamentally a description in terms of qubits, whereas Fiona argues that identical fermions are inevitable. 
Fiona exhibits a fermionic Gedankenexperiment and challenges Quentin to reproduce her observations using qubits. For Fiona's experiment, however, any Quentin's qubit-based strategy obeying the same locality constraints must fail this challenge. This formulation mirrors Bell's theorem: just as Bell disproved the conjecture of Einstein, Podolsky and Rosen that classical bits suffice to describe nature by exhibiting quantum correlations that no local hidden-variable model can reproduce, Fiona disproves the conjecture of Quentin that qubits suffice to describe nature by exhibiting fermionic correlations that no qubit-based model can reproduce.

\footnotetext[1]{As detailed in \cref{app:Q-QITinInternalDegOfFreedom}, not all information carried by fermions is fermionic in the information-theoretic sense. Information encoded in internal degrees of freedom, such as the spin of an electron (a typical type of fermion), is described by ordinary qubits within standard quantum information theory.}
\footnotetext[2]{We use the term \emph{qubit-based quantum information} here for simplicity. This does \emph{not} mean we restrict quantum information to only two-dimensional systems, but we rather allow to use any number of qubits (possibly infinite for modeling bosons and continuous variable systems) to represent higher dimensional systems.}

To construct her conclusive experiment (which will be introduced in detail later), Fiona starts by initially considering an experiment involving only two parties, $A_1$ and $A_2$, each initially holding a box, representing the concept of \emph{a mode} in physics, containing a fermion. 
They then exchange their boxes in order to reveal the anticommutation of fermions. 
More precisely, in second quantization\footnotemark[3]
the initial state is written as
\[
\ket{\psi_0} = f_1^\dagger f_2^\dagger \ket{\mathrm{vac}},
\]
where $f_k^\dagger$ is the \emph{creation operator} of a fermion in the $k$-th box, and $\ket{\mathrm{vac}}$ denotes the vacuum state in which all boxes are empty. For simplicity, we assume throughout that all fermions occupy the same internal state (e.g. the same spin state), so that only the occupation of the spatial modes is relevant. After $A_1$ and $A_2$ exchange their boxes, the state becomes
\[
\ket{\psi_1} = f_2^\dagger f_1^\dagger \ket{\mathrm{vac}} = -\ket{\psi_0},
\]
where we used the fermionic anticommutation relation $f_1^\dagger f_2^\dagger = - f_2^\dagger f_1^\dagger$. Fiona aims to reveal this $-1$ exchange phase experimentally, which has no counterpart for qubits or indistinguishable bosons, whose creation operators commute. Although the $-1$  phase is here global, and thus undetectable in quantum mechanics, Fiona could use a refined experiment that can reveal it by introducing a phase reference.
\begin{figure}[t]
    \centering
    \includegraphics[width=0.94\linewidth]{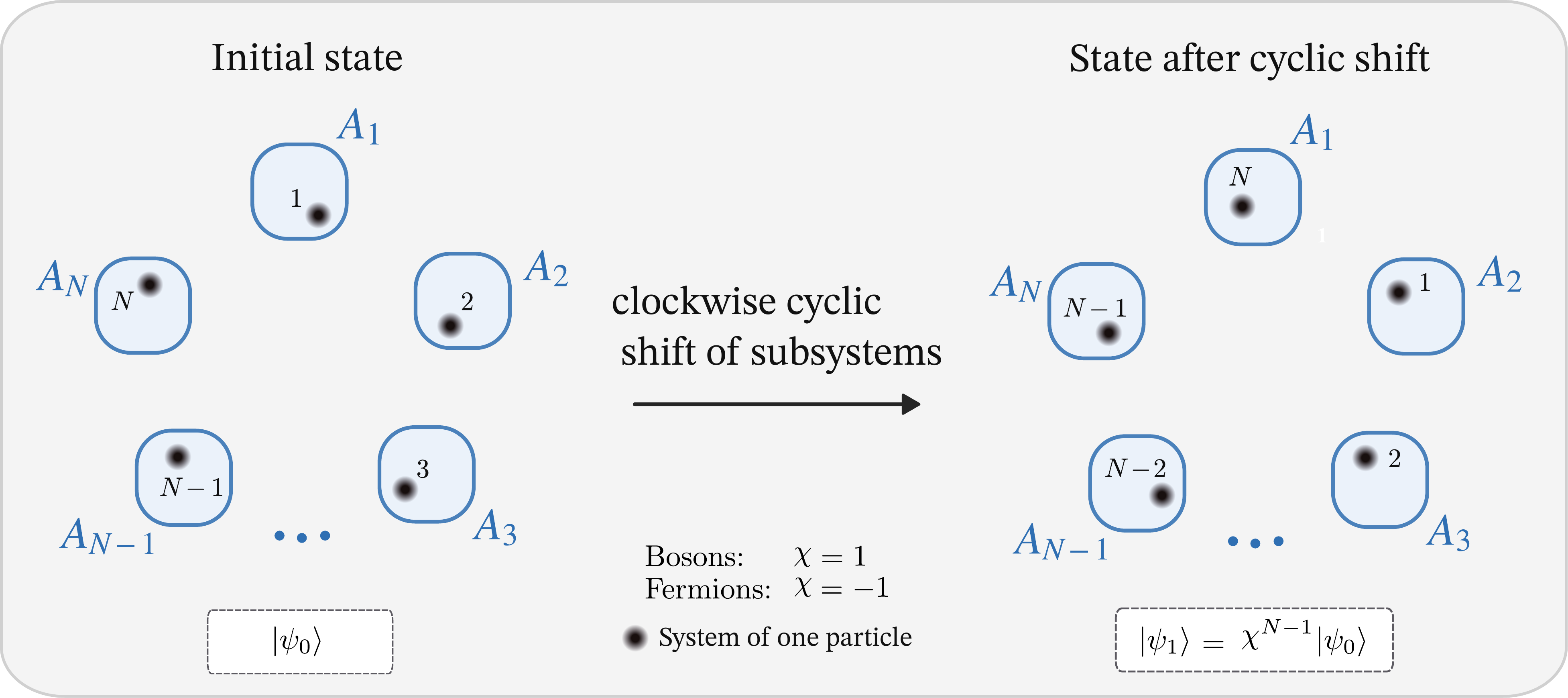}
    \caption{
\textbf{Cyclic exchange and fermionic phase.}
\textbf{(a)} \(N\) parties \(A_1,\ldots,A_N\) initially hold one particle (represented by the black disk) each, in a global state \( |\psi_0\rangle = a_1^\dagger \cdots a_N^\dagger |\mathrm{vac}\rangle \). $a_i^\dagger$ denotes the bosonic or fermionic creation operator of particle at mode $i$, depending on the particle type. \textbf{(b)} After a clockwise cyclic shift of the particles, the state becomes \( |\psi_1\rangle =
a_N^\dagger a_1^\dagger \cdots a_{N-1}^\dagger |\mathrm{vac}\rangle
= \chi^{N-1} |\psi_0\rangle \). For bosons, \(\chi=1\), while for fermions \(\chi=-1\), giving a parity-dependent global phase. 
} 
\label{fig:cyclic-fermionic-phase}
\end{figure}

\footnotetext[3]{
Second quantization provides the natural language for discussing fermionic nonlocality and entanglement. In first quantization, the wavefunction of two spatially separated independently prepared fermions must be antisymmetrized. However, this antisymmetrization alone does not imply the presence of shared resources between the preparation devices. In this sense, independently prepared distant fermions remain operationally independent. $f_1^\dagger f_2^\dagger\ket{\mathrm{vac}}$ is the mathematical description of a two-fermion state, not a preparation procedure in which one physically starts from $\ket{\mathrm{vac}}$ and sequentially applies the creation operators. The ordering of the creation operators is a convention and does not represent a temporal ordering of physical operations. See Supplemental Material~\ref{appendix:FirstQuantization}.
}

\footnotetext[4]{\label{indistinguishability-footnote}A common misconception is that
fermionic anticommutation phenomena only occur when their trajectories physically cross. Instead, indistinguishability is about the impossibility of assigning identities to the particles, which remains true even for far apart identical particles. As explained in Supplemental Material~\ref{appendix:FirstQuantization}, the distant \emph{modes} containing the particles are distinguishable, but the  particles they may contain are not.
}

Indeed, Ref.~\cite{roos2017revealing} shows that this fermionic exchange phase can be detected by first placing each fermion in a coherent superposition between two spatially separated boxes, for instance using a fermionic beam splitter. The parties then exchange one box from each pair while keeping the other fixed. These unexchanged boxes provide a phase reference, allowing the exchange phase acquired by the exchanged boxes to be observed through interference. Similar ideas are used in~\cite{Blasiak2025,Blasiak2019,Marletto2019} to show the nonlocal character of indistinguishable particles over classical physics, tracing back to the results in~\cite{PhysRevA.46.2229, PhysRevLett.68.1251}. Crucially, the indistinguishable character of the fermions used in this scenario and therefore their anticommutation properties does not depend on any spatial proximity, but on the fact that it is impossible to \emph{tell which fermion is which}\footnotemark[4].

However, by contrast to Bell's theorem, 
in these experiments 
the detection of the fermionic character of particles relies on very specific assumptions on the underlying implementation. In particular, Quentin may reproduce the observed correlations within a local qubit-based model by introducing the $-1$ phase through a local operation performed on one of the exchanged boxes (e.g., by $A_1$). Moreover, the protocol of~\cite{roos2017revealing} includes a post-selection step, which could potentially be further exploited by Quentin. Hence, if one truly seeks to rule out any qubit simulation that is only restricted to obey the same locality constraints of the fermionic experiments, one has to avoid the possibility of these simple simulation strategies.  

To overcome this limitation, Fiona considers a second experiment that generalizes the first one to multiple parties. She arranges $N$ parties $A_1,\ldots,A_N$ on a loop, each initially holding a box containing a fermion, with initial state
\[
\ket{\psi_0} = f_1^\dagger \cdots f_N^\dagger \ket{\mathrm{vac}}.
\]
Each party then sends their box to their left neighbor, resulting in the state
\[
\ket{\psi_1} = f_N^\dagger f_1^\dagger \cdots f_{N-1}^\dagger \ket{\mathrm{vac}} = (-1)^{N-1} \ket{\psi_0},
\]
where the phase arises from the anticommutation of fermionic operators. This situation is depicted in \cref{fig:cyclic-fermionic-phase}.

Again, the global phase can be turned into a relative phase by placing the fermions in a superposition between boxes that are kept or sent. Now, by cyclically permuting the modes, the parties collectively encode the parity of $N-1$ in the phase of the state, which has no counterpart for qubits, where a cyclic permutation generates no phase by itself.  
Nonetheless, Quentin could again reproduce this effect within a qubit-based model by having one party, say $A_1$, locally introduce a $-1$ phase conditioned on $N$ being even. However, this crucially requires the party $A_1$ to have access to global information about the network, namely the parity of the number $N$ of parties involved in the permutation loop. This simple insight is enough for Fiona to introduce her final refinement of the experiment that will ultimately make Quentin fail in his task. She considers multiple variants of the protocol by forming loops of size $N$ and $N-1$ (see Fig.~\ref{fig:rewiring-game}), so that no party can presume the size of the loop. Having described the intuition behind Fiona's Gedankenexperiment, we present its precise formulation in the following, where in the simplest case we choose $N=3$. 

To prove to Quentin that fermions are more nonlocal than bosons and distinguishable particles, Fiona finally proposes the following fermionic experiment. 
It involves three separated parties $A$, $B$, and $C$ in distant laboratories who locally prepare fermionic systems from the vacuum and exchange them through a communication network, generating specific correlations revealing the fermionic nonlocality.
Each party initially controls a fermion in their laboratory that is indistinguishable from the fermions in the other two laboratories. Each party places their fermion in a box and also picks an empty box. They then apply a unitary transformation to the two boxes in their possession, preparing a coherent (and balanced) superposition of the fermion being either in the first or the second box (that is, maximally entangled across the two fermionic modes contained in each box). Subsequently, one of the boxes (whose occupation is now unknown) is kept, while the other is sent outside the laboratory to a central and independent referee. The referee simply acts as a relay and redistributes the received systems among the parties according to a configuration $l$ that is unknown to the parties. 
As a result, each party ends up with two boxes: one that they initially had and one received from some other party in the network (or even themselves). The referee then randomly chooses an input setting $x_i \in \lbrace 0,1 \rbrace$ for each party $i$ and sends them the corresponding bit. 

Each party then locally performs an input-dependent projective measurement with four possible outcomes, $a_i \in \{0, 1_+, 1_-, 2 \}$. The outcomes 0 and 2 correspond, respectively, to the party observing zero and two particles in the boxes it now holds. The outcomes $1_+$ and $1_-$ correspond to an input-dependent basis spanning the two-dimensional subspace in which the party holds one particle; see Supplemental Material~\ref{appendix:ExperimentAndMainResult} for a precise definition of these bases.
The observations of the experiment are thus fully characterized by the correlations in the outputs observed in the Gedankenexperiment, that is in the probability distributions $p^{l}(a_A,a_B,a_C|x_A,x_B,x_C)$.

\begin{figure}[t]
    \centering
\includegraphics[width=0.94\textwidth]{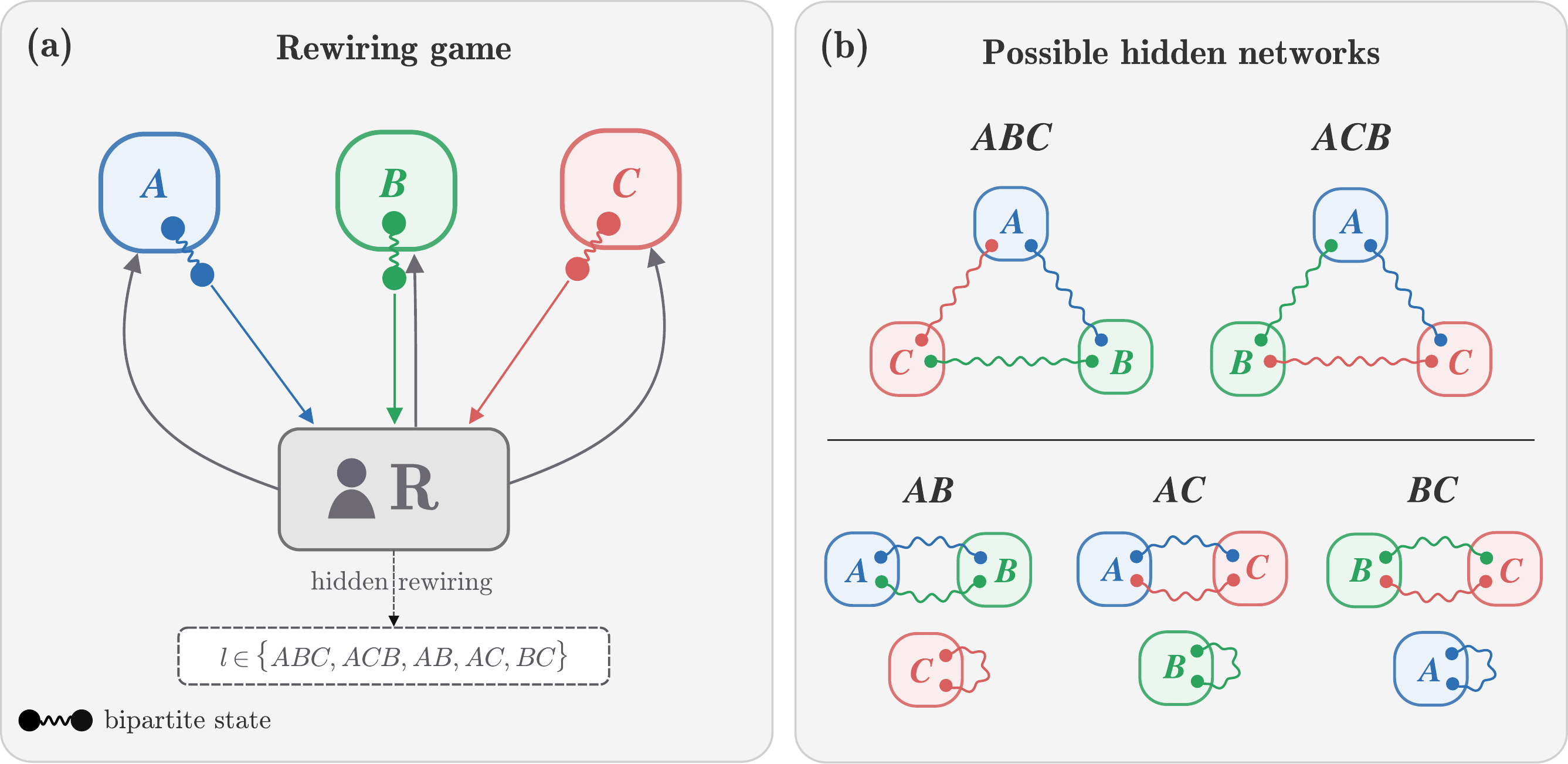}
    \caption{
    \textbf{Distributed rewiring game:}
    \textbf{(a)} Parties $A,B,C$ each locally prepare bipartite systems, keep one subsystem, and send the other to a referee $R$. The referee is a passive relay introduced to formalize the requirement that the local operations of the parties do not depend on the global network configuration. The referee secretly chooses a wiring $l\in\{ABC,ACB,AB,AC,BC\}$ and redistributes the sent subsystems accordingly as illustrated in \textbf{(b)}, while also assigning inputs $x_i \in \lbrace 0,1 \rbrace$. Following the protocol described in the main text, each party $i \in \lbrace A,B,C \rbrace$ performs an input-dependent measurement and obtains an outcome $a_i \in \lbrace 0,1_+, 1_-,2 \rbrace$. The observations of the experiment are characterized by the distribution $p^{l}(a_A,a_B,a_C|x_A,x_B,x_C)$.
    \textbf{(b)} The five possible hidden wirings: two three-party loops, \(ABC\) and \(ACB\), and three two-party loops, \(AB\), \(AC\), and \(BC\), with the excluded party receiving its own subsystem back. Crucially, the chosen wiring $l$ is hidden from the parties, so their local operations cannot depend on which network they are in. }
    \label{fig:rewiring-game}
\end{figure}

The network configuration $l$ chosen by the referee in each run is central in Fiona's proof and deserves further explanation. The referee is merely a conceptual device used to formalize the requirement that the local behavior of the parties is independent on the loop configuration. We therefore assume that it does not provide either Fiona or Quentin an advantage for their strategies.  
In particular, the referee ensures that, throughout the experiment, the global network configuration $l$ remains unknown to the individual parties and any possible simulation strategy that Quentin is employing cannot depend on it\footnotemark[5].
For instance, in some runs, the referee may establish the network configuration $ABC$, meaning that the box sent by $A$ goes to $B$, the box sent by $B$ goes to $C$, and the box sent by $C$ goes to $A$. The referee may also connect the parties in the order $ACB$. In some runs of the experiment, the referee may effectively exclude one of the parties from the experiment by only forming non-trivial network loops including any two of the parties. In that case, there are three possible loops: $AB$, $AC$, and $BC$. The box from the excluded party is then simply sent back to the same party. Notably, the parties do not have access at any point of the experiment even to the size $|l|\in\{2,3\}$ of the loop network in the configuration $l$. See \cref{fig:rewiring-game} for a schematic depiction of the experiment and further explanations. 

\footnotetext[5]{Note that the party sending the box could reveal their identity to the receiving party (for example by sending an additional subsystem that encodes this, or by using distinguishable modes, see also footnote~\ref{indistinguishability-footnote}), who could use this information in their measurement strategy. However, this still does not contain the full information about the loop configuration $l$ and its parity.
}

Finally, after introducing Fiona's fermionic strategy and the rules of the Gedankenexperiment, let us explain the intuition behind why Quentin cannot reproduce Fiona's correlations, that is that there does not exist any hidden model in which the parties locally manipulate arbitrary numbers of qubits and manage to reproduce the same $p^{l}(a_A,a_B,a_C \mid x_A,x_B,x_C)$ for all $l$ as in Fiona's experiment. A formal proof based on this intuition is provided in the Supplemental Materials~\ref{appendix:ExperimentAndMainResult} and \ref{appendix:contradiction}. Rather than analyzing the full distributions $\{ p^l \}_l$, we focus on the part of the distributions corresponding to the outputs $1_+$ and $1_-$. Equivalently, we consider the distribution conditioned on every party obtaining exactly one particle. In Fiona's strategy, the event $a_i = 1_\pm , \forall i \in \{A, B, C \}$ can occur in two ways: either every party kept their fermion, or every party sent their fermion to the referee. Crucially, these two possibilities occur coherently in superposition. As a result, the corresponding conditional states for loops of respective sizes $|l|=2$ and $|l|=3$ are
\[
\ket{\psi}_{|l|=2}
=
\frac{1}{\sqrt{2}}
\left(
\ket{01,01}
-
\ket{10,10}
\right),
\qquad
\ket{\psi}_{|l|=3}
=
\frac{1}{\sqrt{2}}
\left(
\ket{01,01,01}
+
\ket{10,10,10}
\right).
\]
Here, the local states \ket{01} and \ket{10} represent, respectively, the events in which a fermion is kept or sent. The global states are Greenberger--Horne--Zeilinger (GHZ) states~\cite{GHZPaperavailable} in the local basis $\{\ket{01},\ket{10}\}$, where the two GHZ components correspond, respectively, to the situations in which all parties keep their fermions and all parties send their fermions.
Importantly, the relative phase $\pm 1$ between the two terms depends on the parity of the loop size $|l|$, thereby encoding a global property of the communication network into the quantum state using only local interactions.
In fact, Fiona's PVM elements in this subspace (the projectors corresponding to outputs $1_\pm$) which depend on $x_A,x_B,x_C$ implement Bell tests designed to distinguish these GHZ states~\cite{baccari2020scalable}.
What can Quentin do to reproduce these correlations using qubits? A natural attempt is to imitate Fiona's protocol using bosonic particles to implement qubit systems instead of fermions. Each party prepares a bipartite state $(\ket{01}+\ket{10})/\sqrt{2}$, keeps the first subsystem, and sends the second to the referee. Performing a similar measurement and conditioning again on the event $a_i =1_\pm, \, \forall i \in \{A, B, C\}$, this strategy indeed produces similar GHZ states, but now always with a positive relative phase, independent of the loop size.
A seemingly simple fix would be that some parties locally introduce an additional phase. That is, each party $i \in \{A, B, C\}$ now prepares the state $(\ket{01}+e^{\ii\varphi_i}\ket{10})/\sqrt{2}$. However, Quentin faces a fundamental obstruction: the local states prepared by the parties cannot depend on the loop configuration $l$, since this configuration is unknown when the systems are emitted into the network. But there exists no fixed $l$-independent assignment of the values $\varphi_i$ that can consistently reproduce the parity-dependent phases observed in Fiona's experiment across all loops $l$, because each party is sometimes included in the loop and sometimes excluded from it.
Of course, this argument only rules out the most direct qubit simulation strategies. Quentin might still hope that a completely different protocol, in which each party can use an arbitrary (possibly infinite) number of qubits and perform arbitrary local operations, could succeed. However, in the Supplemental Material~\ref{appendix:contradiction} we prove that \emph{every} possible qubit-based strategy must essentially be equivalent to a variation of this naive attempt, and fail. Our proof relies on GHZ-state self-testing~\cite{Supic2020Review, baccari2020scalable}, and self-testing of network token-counting strategies introduced in~\cite{sekatski2023partial}. 
These together allow us to fully characterize all qubit realizations reproducing the observed correlations for each fixed loop $l$. 
We then show that no consistent qubit realization exists simultaneously for all loops without explicitly depending on $l$.

Bell's theorem can be understood as showing that correlations generated without communication by entangled quantum particles require communication to be reproduced classically, thereby justifying entanglement as a fundamental physical phenomenon and information theoretical resource. Here, we extend this notion of nonlocality to communication scenarios. We prove that correlations generated using a single round of local fermionic communication through a network cannot be reproduced by bosonic or distinguishable-particle systems without additional communication. More precisely, any such simulation where systems are initially independently prepared requires at least two communication rounds. In this sense, our result justifies the need for fermionic anticommutation.

Importantly, our experiment is realistic, as all its elementary ingredients have already been demonstrated experimentally, for instance with fermionic cold atoms, where tunnel coupling between spatial modes can be used to implement fermionic beam splitters and generate entanglement between fermionic modes~\cite{Kaufman2014,Spar2022}. Internal degrees of freedom such as spin may simply be frozen throughout the protocol. Our result nevertheless faces technical limitations. In particular, our proof currently lacks robustness to statistical and experimental noise, which is why we do not provide any Bell-like inequality satisfied by qubit, but violated by febits.

Our work also clarifies why such a separation had not previously been revealed. Existing fermion-to-qubit mappings, such as Bravyi--Kitaev superfast encodings, implicitly assume a centralized setting in which the communication network is globally known in advance and a global state (representing the fermionic vacuum) is shared among all the parties. 
By contrast, our work considers a genuinely distributed setting in which distant parties operate using only local information and a limited number of rounds of communication through the network. In general this is not enough to create the potentially entangled bosonic state encoding of the fermionic vacuum. For completeness, we show in the Supplemental Material~\ref{appendix:BK-simulation} that if the bosonic parties were instead able to initially pre-share an arbitrary global entangled state, then they would be able to locally reproduce the correlations of our fermionic Gedankenexperiment.

Our work leads to the conjecture (see Supplemental Material~\ref{appendix:conjecture}) that the separation established here is only the first manifestation of a much stronger phenomenon, in which reproducing a single round of local fermionic communication would require a number of bosonic communication rounds growing with the size of the network. It also suggests that paradigmatic distributed problems---such as coloring communication loops using only local exchanges---may fundamentally require fewer communication rounds with fermionic information carriers than with standard qubits.

More broadly, our work suggests that particle statistics themselves may define a hierarchy of information-theoretic resources. While fermions can always locally simulate bosons and distinguishable particles through composite even-parity excitations, it is natural to ask whether analogous separations may exist with more exotic particles such as anyons or paraparticles~\cite{Wang2025ParticleExchange}, which may be more powerful than fermions.

\section*{Acknowledgments}
We thank
Antonio Ac\'in, Pedro Barrios Hita, Nadia Belabas, Dagmar Bru\ss, Pierre Fraigniaud, Nicolas Gisin, David Gross, Timoth\'ee Hoffreumon, Hermann Kampermann, Matthias Kleinmann, Fran\c{c}ois Le Gall, Stefano Pironio, Jukka Suomela, Anton Trushechkin, and Mirjam Weilenmann for fruitful discussions. F.M.K., X.X., L.T. and M.-O. R. acknowledge funding by the ANR for the JCJC grants LINKS (No. ANR-23-CE47-0003), the T-ERC QNET (No. ANR24-ERCS-0008), the project QUANTINT, as well as the European Union's Horizon 2020 Research and Innovation Programme under QuantERA Grant Agreements No. 731473 and No. 101017733. 
T.G. acknowledges funding from the European Research Council (project DebuQC, grant agreement No. 101098279) and from  German Federal Ministry of
Research (project FermiQP).
This work was funded by the European Union under the Marie Sklodowska-Curie Actions (MSCA) through the QNETS project (grant agreement ID: 101208259). Views and opinions expressed are, however, those of the author(s) only and do not necessarily reflect those of the European Union or the European Education and Culture Executive Agency (EACEA). Neither the European Union nor EACEA can be held responsible for them.

\clearpage
\printbibliography

\clearpage
\appendix

\crefalias{section}{appendix}
\crefname{appendix}{Supplementary Material}{Supplementary Materials}
\Crefname{appendix}{Supplementary Material}{Supplementary Materials}

\setcounter{figure}{0}
\renewcommand{\thefigure}{S\arabic{figure}}

\begin{center}
    {\LARGE\bfseries Supplementary Materials\par}
\end{center}

\crefalias{section}{appendix}
\crefname{appendix}{Supplementary Material}{Supplementary Materials}
\Crefname{appendix}{Supplementary Material}{Supplementary Materials}

\tableofcontents

\section{Quantum Information Theory for Qubits, Bosons, and Fermions} 
\label{appendix:ParticleFormalism}

Information can be encoded into different types of physical degrees of freedom. In the most common implementations of quantum registers, information is encoded in internal degrees of freedom of distinct quantum systems. These systems can be bosonic or fermionic particles, where the relevant internal degrees of freedom may be, for example, spin or electronic states of atoms or ions, or polarization states of photons. They can also be macroscopic systems, such as electrical current states in superconducting circuits. In these cases, the systems themselves are distinguishable, for instance because they are placed in separate spatial modes. Alternatively, information may be encoded in the occupation of certain spatial modes by indistinguishable particles, where all internal degrees of freedom are assumed to be the same. These indistinguishable particles can be either bosons or fermions. In what follows, we describe in more detail the theoretical description of each of these three cases.

In the next sections, we show that when information is encoded in internal degrees of freedom, or when the presence or absence of bosonic particles in spatial modes is used as the encoding, the resulting information theory is essentially the same as the conventional framework of quantum information theory. We refer to this framework as \emph{qubit quantum information theory} (Q-QIT), in order to distinguish it from the less conventional theory that emerges when indistinguishable fermions are considered and information is encoded in their presence and absence (i.e, occupations). We refer to the latter as \emph{fermionic quantum information theory} (F-QIT).

An important notion in these theories is how systems are composed. We denote the composition rule abstractly by \(\boxtimes\) for a general theory. This notion is closely related to the concept of locally prepared states, (or sometimes called independently prepared systems). A global state \(\psi_{\mathrm{LP}}\) is locally prepared with respect to a partition \(A|B\) if and only if there exist local physical states \(\psi_{A}\) and \(\psi_{B}\) such that \cite{erba2024composition}
\begin{equation}
  \psi_{\mathrm{LP}} = \psi_{A} \boxtimes \psi_{B}.
\end{equation}
The operation \(\boxtimes\) corresponds to the standard tensor product in Q-QIT, while in F-QIT it corresponds to the antisymmetric wedge product, as explained in the next subsections.

\subsection{Qubit Quantum Information Theory (Q-QIT)}
In this subsection, we review the two standard realizations of Q-QIT. The first uses internal degrees of freedom of distinguishable quantum systems, while the second uses occupation numbers of identical bosons in distinguishable modes. Although these encodings are physically different, they lead to the same underlying information-theoretic structure.

\subsubsection{Q-QIT encoded in internal degrees of freedom of particles}\label{app:Q-QITinInternalDegOfFreedom}

Consider a collection of \(M\) particles, either all bosons, all fermions, or of mixed particle types, contained in \emph{distinct} modes that can be independently addressed. These could be, for instance, atoms or ions trapped at different positions in space, photons of different frequencies, or an array of superconducting circuits. Information can then be encoded in their internal degrees of freedom. Suppose that particle \(i\), for \(1\leq i\leq M\), has \(d_i\) accessible states, corresponding to the Hilbert space \(\mathcal{H}_{i}=\mathbb{C}^{d_i}\), or to an infinite-dimensional separable Hilbert space in the case \(d_i=\infty\). The composite system of the \(M\) particles is then described by the tensor-product Hilbert space
\begin{equation}
    \mathcal{H}=\bigotimes_{i=1}^M \mathcal{H}_{i}. \label{eq:tensor-product-distinguishable}
\end{equation}
This is precisely the setting of standard quantum information theory. Without loss of generality, we may assume that each \(d_i\) is a power of \(2\), so that each subsystem \(i\) can be regarded as a collection of qubits. For this reason, we also refer to this framework as \emph{qubit quantum information theory} (Q-QIT).

In this setting, it is natural to introduce a notion of locality based on the ability of each local party to address only a subset of particles, for instance because those particles are held in a specific region of space. More precisely, suppose that the \(M\) particles are divided into \(N\) disjoint subsets \(S_1,\dots,S_N\), and that the parties \(A_1,\dots,A_N\) each have access to one subset. Then a local operator for party \(A_k\) is an operator that can be expressed as the tensor product of an operator acting on the degrees of freedom accessible to \(A_k\) and the identity on the rest:
\begin{equation}
    O_{A_k}= O_{S_k}\otimes \id_{S_k^{\,c}},
\end{equation}
where \(O_{S_k}\) acts on the subset \(S_k\) of the tensor factors in~\eqref{eq:tensor-product-distinguishable}, and \(\id_{S_k^{\,c}}\) acts as the identity on the remaining tensor factors. Similarly, a locally preparable state is a tensor product across this partition of the system:
\begin{equation}
    \ket{\psi_{\mathrm{LP}}}
    =
    \ket{\psi_1}_{A_1}\otimes\cdots\otimes\ket{\psi_N}_{A_N},
\end{equation}
where \(\ket{\psi_k}_{A_k}\in\bigotimes_{i\in S_k}\mathcal{H}_{i}\).

\subsubsection{Q-QIT encoded in identical bosons}

Consider a collection of \(M\) distinct and independently addressable modes, which may be occupied by indistinguishable bosonic particles. These could be, for instance, distinct electromagnetic field modes in a cavity system, or bound states in the minima of an optical lattice potential or an atomic tweezer array occupied by bosonic atoms. Information can then be encoded in the occupation numbers of these modes. The Hilbert space describing this system is the bosonic Fock space~\cite{RevModPhys.84.621}.

In second quantization\footnote{see Section~\ref{appendix:FirstQuantization} for a discussion of the first-quantized picture, which is less convenient for the description of local operations and locally prepared states when information is encoded in occupations of the modes.}, the bosonic Fock space can be defined in terms of the creation and annihilation operators \(b_i^\dagger\), \(b_i\) associated with each mode \(i=1,\dots,M\), together with a vacuum state \(\ket{\Omega}\). These operators satisfy the canonical commutation relations (CCR)
\begin{equation}
    [b_i,b_j]=0, \qquad  [b^\dagger_i,b^\dagger_j]=0, \qquad  [b_i,b^\dagger_j]=\delta_{ij} \id,
\end{equation}
and the vacuum satisfies \(b_i\ket{\Omega}=0\) for all \(i=1,\dots,M\). More specifically, the bosonic Fock space \(\mathcal{F}_{\rm b}\) is defined as the linear span of the Fock states
\begin{equation}
    \ket{n_1,n_2,\dots,n_M}
    =
    \frac{1}{\sqrt{n_1!\dots n_M!}}\:
    (b_1^\dagger)^{n_1}\cdots (b_M^\dagger)^{n_M}\ket{\Omega},
    \label{eq:bosonic-fock-states}
\end{equation}
for all \((n_1,n_2,\dots,n_M)\in\mathbb{N}^M\). Linear operators on this Hilbert space can be regarded as functions of the creation and annihilation operators.

The notion of locality for bosonic systems is defined similarly to the qubit case~\cite{Eisert2003}. Adopting the same notation as above, suppose that the \(M\) modes are divided into \(N\) disjoint subsets \(S_1,\dots,S_N\), and that the parties \(A_1,\dots,A_N\) each have access to one subset. A local operator for party \(A_k\) is then an operator that can be expressed as a function only of the mode operators \(b_i^\dagger,b_i\) with \(i\in S_k\). That is,
\begin{equation}
    O_{A_k}=p(\{b_i^\dagger,b_i, \forall i \in S_k\})
\end{equation}
for some suitable function \(p(\cdot)\). Similarly, a locally preparable state is a state that can be created from the vacuum by a product of local operators on each subset \(S_k\). That is,
\begin{equation}
    \ket{\psi_{\mathrm{LP}}} = O_{A_1}\cdots O_{A_N} \ket{\Omega}.
\end{equation}

Notice that the space \(\mathcal{F}_{\rm b}\) is isomorphic to the tensor product
\begin{equation}
    \mathcal{H}=\bigotimes_{i=1}^M\mathcal{H}_i,
    \label{eq:tensor-product-bosons}
\end{equation}
where each local space \(\mathcal{H}_i\) is an infinite-dimensional Hilbert space spanned by the set of states \(\{\ket{n}: n \in \mathbb{N}\}\). The isomorphism is given simply by identifying the Fock basis~\eqref{eq:bosonic-fock-states} with a basis of~\eqref{eq:tensor-product-bosons} as
\begin{equation}
    \ket{n_1,n_2,\dots,n_M}
    \longmapsto
    \ket{n_1}\otimes \cdots \otimes \ket{n_M}.
    \label{eq:JW-isomorphism-bosons}
\end{equation}
This isomorphism respects the notion of locality: a local bosonic operator \(O_{A_k}\) is mapped to an operator of the form \(O_{S_k}\otimes \id_{S_k^{\,c}}\), where \(O_{S_k}\) acts on the subset \(S_k\) of the tensor factors of~\eqref{eq:tensor-product-bosons}. Similarly, a locally preparable bosonic state is mapped to
\begin{equation}
    \ket{\psi_1}_{A_1}\otimes\cdots\otimes\ket{\psi_N}_{A_N},
\end{equation}
where \(\ket{\psi_k}_{A_k}\in\bigotimes_{i\in S_k}\mathcal{H}_i\). This makes the bosonic Fock space completely analogous to the Hilbert space~\eqref{eq:tensor-product-distinguishable} of distinguishable systems with local dimensions \(d_i=\infty\).

Note that we use the same notation, \(\ket{\cdot}\), for states in the two cases, since they lead to the same information-theoretic description. Throughout the rest of this work, we may therefore use the terms distinguishable particles, bosons, and qubits interchangeably whenever only this information-theoretic structure is relevant.

\subsection{Fermionic Quantum Information Theory (F-QIT) encoded in identical fermions}

Consider a collection of \(M\) distinct and independently addressable modes, which may be occupied by a number of indistinguishable fermionic particles. These could be, for instance, bound states in the minima of an optical lattice potential or of a tweezer array occupied by fermionic atoms. We can encode information in their occupation numbers. The Hilbert space describing this system is the fermionic Fock space~\cite{bravyi_fermionic_2002, vidal2021quantum}.

The fermionic Fock space can be defined in terms of the creation and annihilation operators \(f_i^\dagger\), \(f_i\) associated with each mode \(i=1,\dots,M\), which satisfy the canonical anticommutation relations (CAR)
\begin{equation}
    \{f_i,f_j\}=0, \qquad  \{f^\dagger_i,f^\dagger_j\}=0, \qquad  \{f_i,f^\dagger_j\}=\delta_{ij} \id,
\end{equation}
and a vacuum state \(\fket{\Omega}\) such that \(f_i\fket{\Omega}=0\) for all \(i=1,\dots,M\). Throughout the Supplemental Materials, we use rounded kets, \(\fket{\cdot}\), for fermionic states in order to distinguish them from qubit-based quantum states, denoted by \(\ket{\cdot}\).
More specifically, the fermionic Fock space \(\mathcal{F}_{\rm f}\) is defined as the linear span of the Fock states
\begin{equation}
    \fket{n_1,n_2,\dots,n_M}
    =
    (f_1^\dagger)^{n_1}\cdots (f_M^\dagger)^{n_M}\fket{\Omega},
    \label{eq:fermionic-fock-states}
\end{equation}
where the CAR imply that \(n_i\) can only take values in \(\{0,1\}\), since \(f_i^2=0\). This is a finite-dimensional space, and the space of all linear operators on it coincides with the space of polynomials in the creation and annihilation operators.

In this setting, the notion of locality is based on the ability of a local party to address only a subset of modes, which are for instance localized in a specific region of space~\cite{bravyi_fermionic_2002}. More precisely, suppose that the \(M\) modes are divided into \(N\) disjoint subsets \(S_1,\dots,S_N\), and that the parties \(A_1,\dots,A_N\) each have access to one subset. Then a local operator for party \(A_k\) is an operator that can be expressed as a function only of the mode operators \(f_i^\dagger\), \(f_i\) with \(i\in S_k\), that is
\begin{equation}
    O_{A_k}=p(\{f_i^\dagger,f_i, \forall i \in S_k\}),
    \label{eq:local-op-fermions}
\end{equation}
for some polynomial \(p\).

Contrary to the bosonic case, not every such local operator corresponds to a physically allowed operation. For consistency with the no-signaling principle, physical operators and observables must obey the parity superselection rule.

For a general physical local operator \(O_{A_k}\), the parity superselection rule implies that \(O_{A_k}\) must have a definite local parity. That is, it may either preserve the local parity or change it, but it cannot produce superpositions of different parities. Let
\begin{equation}
    P_k=\exp\left(i\pi\sum_{i\in S_k} f_i^\dagger f_i\right)
\end{equation}
be the local parity operator. If \(O_{A_k}\) commutes with \(P_k\), then it is parity preserving and has the block-diagonal form
\begin{equation}
     O_{A_k}
    =
    \begin{pmatrix}
         O_e & 0 \\
         0 & O_o
    \end{pmatrix}.
\end{equation}
If \(O_{A_k}\) anticommutes with \(P_k\), then it is parity changing and has the block-off-diagonal form
\begin{equation}
     O_{A_k}
    =
    \begin{pmatrix}
         0 & O_{oe} \\
         O_{eo} & 0
    \end{pmatrix}.
    \label{eq:even-odd-block-operator}
\end{equation}
Here the block structure refers to an ordering of the basis~\eqref{eq:fermionic-fock-states} in which states with an even number of fermions in subsystem \(S_k\) come before states with an odd number of fermions. Equivalently, \(O_{A_k}\) can be an even or an odd polynomial in the local creation and annihilation operators, but not a superposition of the two.

For observables, the parity superselection rule imposes the stronger condition that physical local observables must preserve the local fermion-number parity~\cite{johansson2016comment}. We denote such observables by \(X_{A_k}\). Equivalently, \(X_{A_k}\) must be an even polynomial in the local creation and annihilation operators. This means that physical observables cannot connect sectors of the Hilbert space with even and odd local fermion number. Thus, in a basis ordered according to local parity, any admissible local observable has the block-diagonal form
\begin{align}
     X_{A_k}
    =
    X_e \oplus X_o
    =
    \begin{pmatrix}
         X_e & 0 \\
         0 & X_o
    \end{pmatrix}.
    \label{eq:even-odd-block-observable}
\end{align}
The same parity-preserving condition is imposed on all physically allowed unitary evolutions.

Similar to the previous section, a locally preparable state is a state that can be created from the vacuum by a product of local operators on each subset \(S_k\). That is,
\begin{equation}
    \fket{\psi_{\mathrm{LP}}} = O_{A_1}\cdots O_{A_N} \fket{\Omega},
    \label{eq:local-state-fermions}
\end{equation}
where each \(O_{A_k}\) is composed only of the mode operators \(f_i^\dagger\), \(f_i\) with \(i\in S_k\). Consistently with the parity superselection rule, locally preparable states cannot be superpositions of even and odd local fermion-number sectors. Equivalently, each \(O_{A_k}\) in~\eqref{eq:local-state-fermions} must have a definite parity: it may be even or odd, but it cannot be a superposition of even and odd terms. This ensures that the resulting local state is an eigenstate of the local parity operator \(P_k\).

The state~\eqref{eq:local-state-fermions} can be seen as a product of local states on each subsystem. To make this product structure explicit, it is useful to introduce a notation for the fermionic state product, which we denote by the wedge product \(\wedge\). Let
\begin{equation}
    \fket{\psi_{A_1}}= O_{A_1}\fket{\Omega}, \quad \dots, \quad
    \fket{\psi_{A_N}}= O_{A_N}\fket{\Omega}
\end{equation}
be the states on these local subsystems. Then we define their fermionic product as
\begin{equation}
    \fket{\psi_{A_1}}\wedge \dots \wedge  \fket{\psi_{A_N}}
    :=
    O_{A_1}\cdots O_{A_N}\fket{\Omega}.
    \label{eq:wedge-product-local-states}
\end{equation}
With this notation, the condition that a state is locally preparable can be written equivalently as
\begin{equation}
    \fket{\psi_{\mathrm{LP}}}
    =
    \fket{\psi_{A_1}}\wedge\cdots\wedge\fket{\psi_{A_N}}.
\end{equation}

The wedge-product notation can also be extended to operators. For two parties, let \(\fket{\psi_{A_1}},\fket{\psi'_{A_1}}\) be states of subsystem \(A_1\), and let \(\fket{\psi_{A_2}},\fket{\psi'_{A_2}}\) be states of subsystem \(A_2\). We define
\begin{equation}
   \fketbra{\psi_{A_1}}{\psi'_{A_1}}
   \wedge
   \fketbra{\psi_{A_2}}{\psi'_{A_2}}
   :=
   \big[
   \fket{\psi_{A_1}} \wedge \fket{\psi_{A_2}}
   \big]
   \left[
   \fket{\psi'_{A_1}} \wedge \fket{\psi'_{A_2}}
   \right]^\dagger .
\end{equation}
By linear extension, this defines the wedge product of operators. In this notation, the locality condition for operators can be written equivalently as
\begin{equation}
    O_{A_k}= O_{S_k} \wedge \id_{S_k^{\,c}},
\end{equation}
where \(O_{S_k}\) acts only on the modes in \(S_k\), and \(\id_{S_k^{\,c}}\) denotes the identity on the remaining modes. Thus, the wedge product provides an equivalent way of expressing the locality conditions introduced above: locally preparable states are wedge products of local states, and local operators are wedge products of an operator on the accessible modes with the identity on the complement.
Notice that, unlike the tensor product, the wedge product has the property that
\begin{align}
(O_A\wedge O_B)= (O_A\wedge \id_B)(\id_A\wedge O_B)= (-1)^{p(A)p(B)}(\id_A\wedge O_B)(O_A\wedge\id_B), 
\end{align}
where $p(X)$ is the parity of $X$ defined as $p(X)=0$ if $X$ is even and $p(X)=1$ if $X$ is odd. The same is true for vectors:
\begin{align}
\fket{\psi_A}\wedge \fket{\psi_B}= (\fket{\psi_A}\wedge \id_B)(1\wedge \fket{\psi_B})= (-1)^{p(\psi_A)p(\psi_B)}(\id_A\wedge \fket{\psi_B})(\fket{\psi_A}\wedge \id_B).
\end{align}

The fermionic Fock space \(\mathcal{F}_{\rm f}\) is naturally isomorphic to a qubit Hilbert space of the form~\eqref{eq:tensor-product-distinguishable} for \(d_i=2\). The isomorphism takes the same form as~\eqref{eq:JW-isomorphism-bosons} and is known as the Jordan--Wigner transformation~\cite{jordan_uber_1928}. However, this isomorphism does not preserve the locality of operators. More precisely, a local fermionic operator \(O_{A_k}\) will not always be mapped to an operator of the form \(O_{S_k}\otimes\id_{S_k^{\,c}}\) with respect to the above tensor-product structure. Alternative mappings have also been proposed which attempt to embed the fermionic Fock space in a qubit Hilbert space in such a way that operator locality is preserved; see, e.g.,~\cite{bravyi_fermionic_2002, ball_fermions_2005, verstraete_mapping_2005, setia_superfast_2019}. Nonetheless, these embeddings inevitably encounter the problem that locally preparable fermionic states~\eqref{eq:local-state-fermions} do not necessarily map to states of the form
\[
\ket{\psi_1}_{A_1}\otimes\cdots\otimes\ket{\psi_N}_{A_N}.
\]
This is the reason why, in a distributed setting where parties are constrained to act locally, a fermionic system cannot be immediately reproduced by a qubit one~\cite{Guaita2025localityofqubit}.

\section{A fermionic Gedankenexperiment that cannot be locally simulated by bosons or distinguishable particles}
\label{appendix:ExperimentAndMainResult}
Consider a distributed experiment involving three parties, $A$, $B$, and $C$, together with a referee who can redistribute subsystems between them. The experiment proceeds in three steps.

In step 1, each party locally prepares a bipartite state labelled by $\alpha,\beta,\gamma$ for parties $A$, $B$, and $C$, respectively, and sends one subsystem (half of the state) to the referee. Specifically, each party prepares a bipartite system $\theta_L\theta_R$ for $\theta \in \{\alpha,\beta,\gamma\}$ and retains subsystem $\theta_R$ and sends $\theta_L$ to the referee.

In step 2, the referee redistributes the subsystems among the parties. This redistribution is performed as follows: 
The referee first selects a non-trivial permutation of the three labels $A,B,C$, permutes the received subsystems accordingly, and then redistributes them to the parties. This means that, in the case of a cyclic permutation, they form a directed three-loop. In the other cases, two parties form a directed two-loop while one party is given back its half of the state $\theta_L$ by the referee.
The referee can be viewed as a relay whose role is to redistribute the subsystems.

The referee can arrange the three parties in two possible three-loops, which we represent by $ABC$ and $ACB$, corresponding to the two directions of the redistribution. For the two-party subsets, the redistribution is unique, yielding the configurations $AB$, $AC$, and $BC$. These five possible choices, which are also called wirings, are labelled by the index
\begin{equation}
    l \in \{ABC, ACB, AB, AC, BC\}.
\end{equation}
Crucially, the parties are not informed about the applied redistribution; that is, the global loop variable $l$ remains hidden. We refer to this operational constraint as \emph{locality}.

Throughout this manuscript, we employ two complementary labelling conventions for subsystems to simplify calculations. The first is defined from the perspective of the prepared states, with subsystems labelled by $\theta_R$ and $\theta_L$ for $\theta \in \{\alpha,\beta,\gamma\}$. The second is defined from the perspective of the parties after the redistribution occurred, with subsystems denoted by $i_R$ and $i_L$ for $i \in \{A,B,C\}$. Translating between these two conventions requires knowledge of the underlying wiring $l$, and becomes clear only after the redistribution of subsystems among the parties has been carried out.

As an illustrative example, consider the configuration $ABC$. According to the redistribution associated to this case described above, the correspondence between the labellings of the subsystems is given by
\begin{equation}
\begin{aligned}
\label{eq:labellingMap}
&A_L = \alpha_R, \quad B_R = \alpha_L, \\
&B_L = \beta_R, \quad C_R = \beta_L, \\
&C_L = \gamma_R, \quad A_R = \gamma_L,
\end{aligned}
\end{equation}
as depicted in \cref{fig:labels}.

\begin{figure}[h]
    \centering
    \includegraphics[width=0.6\linewidth]{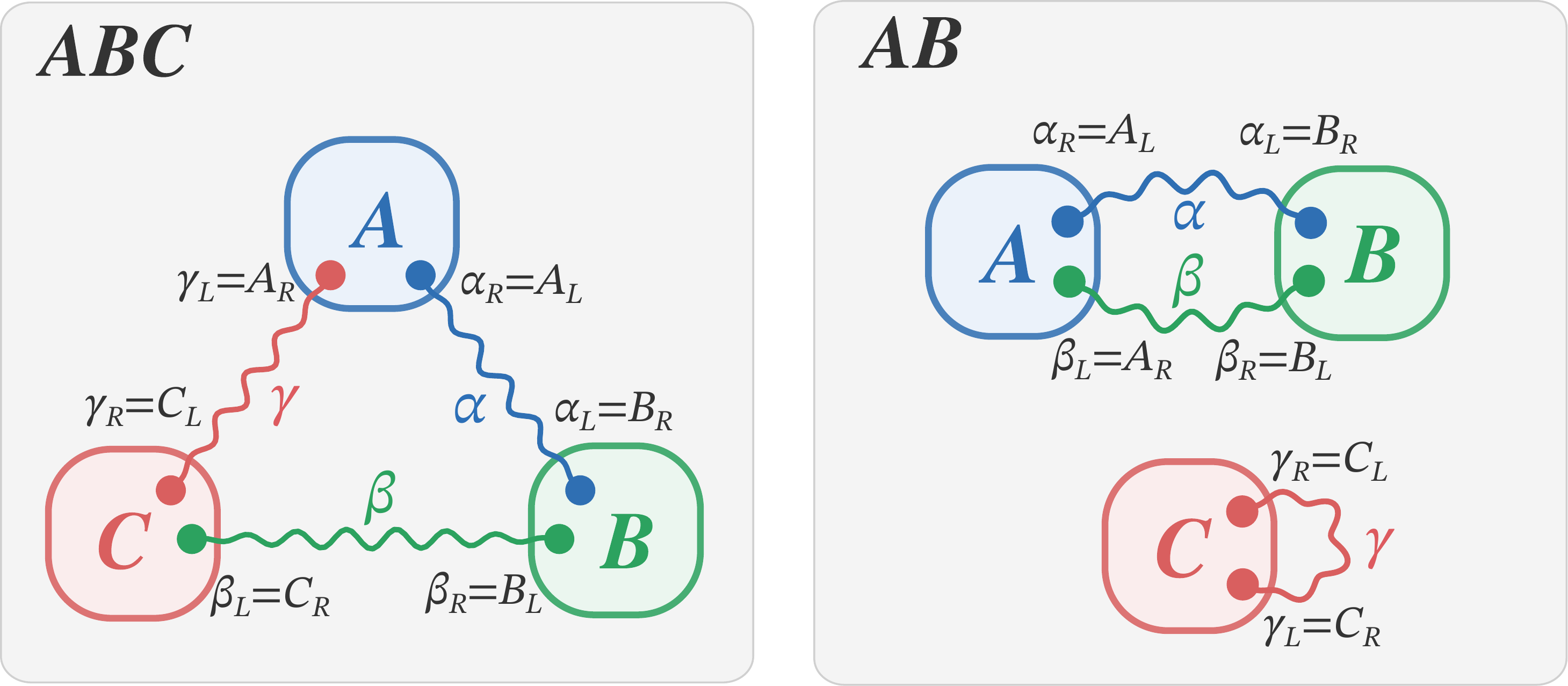}
    \caption{Party-based labelling depends on the underlying wiring; shown here for the loop $ABC$ and $AB$.}
    \label{fig:labels}
\end{figure}

Note that, for any loop configuration, the global state can be written as a locally prepared state,
\begin{equation}
  \psi_{\mathrm{global}}= \psi_{\alpha} \boxtimes \psi_{\beta} \boxtimes \psi_{\gamma},
\end{equation}
where $\boxtimes$ denotes a theory-agnostic composition rule. In Qubit Quantum Information Theory, this corresponds to the standard tensor product, while in the fermionic setting it corresponds to the wedge product (see Supplemental Material~\ref{appendix:ParticleFormalism}).

Steps~1 and~2 can be interpreted jointly as a blind communication process: each party effectively shares a state with a neighbouring party, although neither the identity of that party nor the topology of the network is known to the parties.

In Step 3, the referee sends each party \(i\) an input bit \(x_i\in\{0,1\}\). 

In Step 4, each party $i$ produces an output $a_i \in \{0,1_-, 1_+,2\}$. Importantly, each party generates these outputs by performing a local operation on the subsystems it holds at this stage. We note that each party only knows its own value of $x_i$ and not those of the others. We furthermore note that during the whole experiment, the referee acts independently of the rest of the experiment and also unbiased, i.e., the actions of the referee do not favor Fiona's or Quentin's strategy.

In what follows, we describe a specific fermionic strategy for the game introduced above, alongside a qubit strategy. We focus on the different observations that arise from these two settings to provide an intuition to the proof of our main result.

In Step 1, each party $i \in \{A,B,C\}$ locally prepares a bipartite maximally entangled state. In the fermionic setting, this state is prepared from a single fermion in a superposition between a mode being kept and one being sent to the referee. This state is given by

\begin{align}
\label{eq:FermionicTrustedState}
    \fket{\psi^+_\theta}_{\theta_R \theta_L}
    = \frac{1}{\sqrt{2}} \bigl( \fket{01}_{\theta_R \theta_L} + \fket{10}_{\theta_R \theta_L} \bigr)
    := \frac{1}{\sqrt{2}} \bigl( f_{\theta_L}^\dagger + f_{\theta_R}^\dagger \bigr)\vacuumket_{\theta_R \theta_L},
\end{align}
where $\vacuumket_{\theta_R \theta_L}$ denotes the vacuum over modes $\theta_R$ and $\theta_L$. In the qubit setting, we allow for a relative phase $\varphi_\theta$ between the two components and consider the state
\begin{align}\label{eq:qubit-state-psi-theta}
    \ket{\psi^+_\theta}_{\theta_R \theta_L}
    = \frac{1}{\sqrt{2}} \bigl( \ket{01}_{\theta_R \theta_L} + e^{\ii \varphi_\theta}\ket{10}_{\theta_R \theta_L} \bigr).
\end{align}

In Step 2 and 3, the referee redistributes the subsystems as described above and sends the parties the input bits.

In Step 4, the parties are required to produce outputs. To produce the output $a_i$, each party performs a measurement with the following PVM elements.
\begin{equation}
\begin{aligned}
\label{eq:FermionicProjectors}
    &\Pi_i^0=\fket{\Omega}_{i_R i_L},\\
    &\Pi_i^2=f^\dagger_{i_R} f^\dagger_{i_L}  \fket{\Omega}_{i_R i_L},\\
    &\Pi_{i,x_i}^{1_+}=\left( u^{+}_{i,x_i} f^\dagger_{i_L} + v^{+}_{i,x_i} f^\dagger_{i_R} \right)\vacuumket_{i_R, i_L}, \\
    &\Pi_{i,x_i}^{1_-}= \left( u^{-}_{i,x_i} f^\dagger_{i_L} + v^{-}_{i,x_i} f^\dagger_{i_R} \right)\vacuumket_{i_R, i_L}.
\end{aligned}
\end{equation}

In the qubit setting, the corresponding PVM elements are
\begin{equation}
\begin{aligned}
\label{eq:QubitProjectors}
    &\Pi_i^0=\ket{00}_{i_R i_L},\\
    &\Pi_i^2=\ket{11}_{i_R i_L},\\
    &\Pi_{i, x_i}^{1_+}=u^{+}_{i,x_i} \ket{01}_{i_R i_L} + v^{+}_{i,x_i} \ket{10}_{i_R i_L},\\
    &\Pi_{i, x_i}^{1_-}= u^{-}_{i,x_i} \ket{01}_{i_R i_L} + v^{-}_{i,x_i} \ket{10}_{i_R i_L}.
\end{aligned}
\end{equation}
where the coefficients $u^{\pm}_{i,x_i}, v^{\pm}_{i,x_i} \in \mathbb{C}$ satisfy
\begin{equation}
\label{eq:CoefficientsOrthonormality}
\begin{aligned}
    |u^{\pm}_{i,x_i}|^2 + |v^{\pm}_{i,x_i}|^2 = 1, \quad
    u^{-}_{i,x_i} = \bigl(v^{+}_{i,x_i}\bigr)^*, \quad
    v^{-}_{i,x_i} = -\bigl(u^{+}_{i,x_i}\bigr)^*.
\end{aligned}
\end{equation}

We note that the measurement observable in the fermionic setting are parity-preserving and thus are compatible with the parity superselection rule.

We denote by $P_f^l$, for $l \in \{ABC, ACB, AB, AC, BC\}$, the probability distributions arising from this experiment in the fermionic setting, and by $P_b^l$ the corresponding distributions in the bosonic (qubit) setting.

For each loop $l$, let $V_l$ denote the subset of parties involved in the non-trivial loop, and let $E_l$ denote the corresponding subset of states included in the loop. We introduce the shorthand
\[
\bar{a}_{V_l} := (a_i)_{i \in V_l}, \quad
\bar{x}_{V_l} := (x_i)_{i \in V_l}.
\]

\begin{proposition}\label{prop:FermionExperimentCorrelationValue}
After coarse graining the outcomes \(1_\pm\) into the outcome \(1\), the probability distributions of the experiment described above are independent of the inputs. For \(l \in \{ABC,ACB\}\), they are given by
\begin{equation}
\begin{aligned}
&P_{f/b}^{l}(a_A=2,a_B=1,a_C=0)= \frac{1}{8},
\qquad
P_{f/b}^{l}(a_A=2,a_B=0,a_C=1)= \frac{1}{8},\\
&P_{f/b}^{l}(a_A=1,a_B=2,a_C=0)= \frac{1}{8},
\qquad
P_{f/b}^{l}(a_A=0,a_B=2,a_C=1)= \frac{1}{8},\\
&P_{f/b}^{l}(a_A=0,a_B=1,a_C=2)= \frac{1}{8},
\qquad
P_{f/b}^{l}(a_A=1,a_B=0,a_C=2)= \frac{1}{8},\\
&P_{f/b}^{l}(a_A=1,a_B=1,a_C=1)= \frac{1}{4},\\
&P_{f/b}^{l}(a_A,a_B,a_C)=0,
\qquad \text{if } a_A+a_B+a_C \neq 3 .
\end{aligned}
\end{equation}
Moreover, the conditional distributions satisfy
\begin{equation} \label{eq:Dist_3}
\begin{aligned}
P_f^{l}\!\left(
\bar{a}_{V_l}
\mid 
\bar a_{V_l} \in \{1_\pm \}^{|V_l|},
\bar{x}_{V_l}
\right)
&=
\frac{1}{2}
\left|
\prod_{i \in V_l } u_{i,x_i}^{s_i}
+
\prod_{i \in V_l} v_{i,x_i}^{s_i}
\right|^2,\\
P_b^{l}\!\left(
\bar{a}_{V_l}
\mid 
\bar a_{V_l} \in \{1_\pm \}^{|V_l|},
\bar{x}_{V_l}
\right)
&=
\frac{1}{2}
\left|
\prod_{i \in V_l } u_{i,x_i}^{s_i}
+
e^{\ii \sum_{\theta \in E_l } \varphi_\theta}
\prod_{i \in V_l} v_{i,x_i}^{s_i}
\right|^2 .
\end{aligned}
\end{equation}
For \(l\in\{AB,AC,BC\}\), the corresponding coarse-grained distributions are given by
\begin{equation}
\begin{aligned}
&P_{f/b}^{l}(a_i=1,a_j=1)=\frac{1}{2},\\
&P_{f/b}^{l}(a_i=2,a_j=0)=\frac{1}{4},
\qquad
P_{f/b}^{l}(a_i=0,a_j=2)=\frac{1}{4},\\
&P_{f/b}^{l}(a_i,a_j)=0,
\qquad \text{if } a_i+a_j\neq 2,
\end{aligned}
\end{equation}
where \(V_l=\{i,j\}\).

And the corresponding conditional distributions are
\begin{equation}\label{eq:Dist_2}
\begin{aligned}
P_f^{l}\!\left(
\bar{a}_{V_l}
\mid 
\bar a_{V_l} \in \{1_\pm \}^{|V_l|},
\bar{x}_{V_l}
\right)
&=
\frac{1}{2}
\left|
\prod_{i\in V_l} u_{i,x_i}^{s_i}
-
\prod_{i \in V_l} v_{i,x_i}^{s_i}
\right|^2, \\
P_b^{l}\!\left(
\bar{a}_{V_l}
\mid 
\bar a_{V_l} \in \{1_\pm \}^{|V_l|},
\bar{x}_{V_l}
\right)
&=
\frac{1}{2}
\left|
\prod_{i\in V_l} u_{i,x_i}^{s_i}
+
e^{\ii \sum_{\theta \in E_l } \varphi_\theta}
\prod_{i \in V_l} v_{i,x_i}^{s_i}
\right|^2 .
\end{aligned}
\end{equation}
\end{proposition}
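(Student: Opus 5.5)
The plan is a direct computation. For each wiring $l$ we write the global state as the composition $\psi_\alpha\boxtimes\psi_\beta\boxtimes\psi_\gamma$, expand it in the occupation basis, and re-express it in the party-based labelling $i_R,i_L$ using the dictionary of \eqref{eq:labellingMap} and its analogue for two-loops. Each $\psi_\theta$ is a sum of two terms, ``kept'' ($\theta_R$ occupied) and ``sent'' ($\theta_L$ occupied). The global state is therefore a superposition of $2^3=8$ branches with equal amplitude $2^{-3/2}$, up to phases: $\pm1$ in the fermionic case and $e^{\ii\varphi_\theta}$ factors in the qubit case.

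\textbf{Coarse-grained counts.} In the coarse-grained picture each party counts the particles in $\{i_R,i_L\}$. Since $\Pi^0,\Pi^2$ and the span of $\Pi^{1_\pm}$ do not depend on $x_i$, the coarse-grained statistics are input-independent. Each branch is a distinct occupation pattern, so there are no interferences. The count is then a classical exercise: party $i$ receives one particle from $\theta_R$ if its own source kept, and one from the source feeding $i_L$ if that source sent.

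For a three-loop, this gives $(1,1,1)$ for the all-kept and all-sent branches, hence probability $2/8=1/4$. Each of the remaining six branches yields a distinct permutation of $(2,1,0)$, each with probability $1/8$.

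For a two-loop with $V_l=\{i,j\}$, the excluded party holds its own $\theta_L,\theta_R$ and always has one particle, so we marginalize it out. Among the four branches of the two loop sources, two give $(1,1)$ and the others give $(2,0)$ and $(0,2)$.

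\textbf{Conditional distributions.} Conditioned on all parties in $V_l$ getting one particle, the post-selected (normalized) state on the loop is $\frac{1}{\sqrt2}\bigl(\ket{\text{all kept}}+\chi\ket{\text{all sent}}\bigr)$. In party labels, ``all kept'' means every $i\in V_l$ has $i_R$ occupied ($\ket{10}_{i_Ri_L}$), and ``all sent'' means every $i_L$ is occupied ($\ket{01}$). The overlap with $\bigotimes_i\Pi^{s_i}_{i,x_i}$ is then $\frac1{\sqrt2}\bigl(\prod v^{s_i*}+\chi\prod u^{s_i*}\bigr)$. Up to complex conjugation and a global phase that the modulus squared ignores, this gives the stated formulas once $\chi$ is identified. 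For qubits, $\chi=e^{\ii\sum_{\theta\in E_l}\varphi_\theta}$ directly from the tensor product, since tensor factors can be reordered freely. Which product carries the phase depends on the convention for which branch corresponds to $u$; we will match it to \eqref{eq:qubit-state-psi-theta}.

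\textbf{Main obstacle: the fermionic sign.} Write the all-kept branch as $f^\dagger_{\alpha_R}f^\dagger_{\beta_R}f^\dagger_{\gamma_R}\vacuumket$ (restricted to the loop) and the all-sent branch as $f^\dagger_{\alpha_L}f^\dagger_{\beta_L}f^\dagger_{\gamma_L}\vacuumket$. Each must be reordered into the canonical party ordering $\prod_{i}(f^\dagger_{i_R},f^\dagger_{i_L})$ used in the measurement projectors \eqref{eq:FermionicProjectors}, with the parties' ordering fixed independently of $l$.

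For a three-loop the all-sent branch becomes $f^\dagger_{B_R}f^\dagger_{C_R}f^\dagger_{A_R}$, which differs from the party order by a cyclic permutation of three odd operators. That permutation is even, so $\chi=+1$. For a two-loop the map is a transposition, giving $\chi=-1$.

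We must also check that the all-kept branch and the measurement's own wedge ordering introduce the same sign in both branches. Because each local projector is odd and acts on a one-particle local state, the sign from moving the bras past one another is common to both branches and cancels in the modulus. This is exactly the $(-1)^{|l|-1}$ computation from the main text, and it is the step to carry out carefully.
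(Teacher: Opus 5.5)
Your proposal is correct and follows essentially the same route as the paper's own proof. Both expand $\fket{\psi^+_\alpha}\wedge\fket{\psi^+_\beta}\wedge\fket{\psi^+_\gamma}$ into orthogonal occupation branches to get the token counts, and reorder into the fixed party ordering $A_RA_LB_RB_LC_RC_L$ to get the relative sign $+1$ for three-loops and $-1$ for two-loops. The only difference is cosmetic: you marginalize the excluded party at the outset, since its PVM elements sum to the identity, whereas the paper keeps $C$'s measurement and then sums over $a_C$ using \eqref{eq:CoefficientsOrthonormality}.

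When you write it out, fix a labelling slip. By \eqref{eq:labellingMap}, the kept mode is $i_L$ and the received mode is $i_R$; your sign computation with $f^\dagger_{B_R}f^\dagger_{C_R}f^\dagger_{A_R}$ already assumes this. Also, the phase $e^{\ii\varphi_\theta}$ in \eqref{eq:qubit-state-psi-theta} sits on the kept branch, not the sent one. Your two swaps happen to cancel in the overlap. They must still be made consistent, because $\bigl|\prod u+e^{\ii\phi_l}\prod v\bigr|^2$ is not invariant under $\phi_l\to-\phi_l$ for complex coefficients.
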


\begin{proof}
    This follows by a direct calculation using the formalism introduced in Supplemental Material~\ref{appendix:ParticleFormalism}; see Supplemental Material~\ref{appendix:dist} for details.
\end{proof}
Proposition~\ref{prop:FermionExperimentCorrelationValue} extends to an arbitrary number of parties $N$.
In the $N$-party generalization of the strategy described above, the resulting
distributions generalize according to the loop parity: for odd loops, they
take the form in~\cref{eq:Dist_3}, while for even loops, they take the form
in~\cref{eq:Dist_2}.

Note that for each fixed loop $l$, the corresponding fermionic correlations can be reproduced by an appropriate qubit strategy within the same loop configuration. If the loop size is odd, the strategy is the same as in the fermionic game; if it is even, a specific party can compensate for the phase difference in~\cref{eq:Dist_2} by manually incorporating it into the state they prepare.

Indeed, in the trusted qubit strategy above, for a fixed loop configuration $l$
(take, for instance, $l=ABC$), the global state is
\begin{equation}
\frac{1}{2\sqrt{2}}
\bigl( \ket{01}_{A_L B_R} + e^{\ii \varphi_\alpha}\ket{10}_{A_L B_R} \bigr)
\otimes
\bigl( \ket{01}_{B_L C_R} + e^{\ii \varphi_\beta}\ket{10}_{B_L C_R} \bigr)
\otimes
\bigl( \ket{01}_{C_L A_R} + e^{\ii \varphi_\gamma}\ket{10}_{C_L A_R} \bigr).
\end{equation}
Conditioned on the event that every party receives exactly one token,
\[
    a_A, a_B, a_C \in \{1_\pm \},
\]
this state reduces to
\begin{equation}
\frac{1}{2\sqrt{2}}
\bigl(
\ket{01}_{A_L B_R}\ket{01}_{B_L C_R}\ket{01}_{C_L A_R}
+
e^{\ii(\varphi_\alpha+\varphi_\beta+\varphi_\gamma)}
\ket{10}_{A_L B_R}\ket{10}_{B_L C_R}\ket{10}_{C_L A_R}
\bigr),
\end{equation}
which is a GHZ state~\cite{greenberger1989going}. More generally, for any loop configuration $l \in \{ABC, ACB, AB, AC, BC\}$, the conditional state shared by the parties belonging to the non-trivial loop is a GHZ state whose relative phase is given by
\[
    \phi_l := \sum_{\theta \in E_l} \varphi_\theta .
\]
By choosing suitable measurement coefficients $u^{\pm}_{i,x_i}$ and
$v^{\pm}_{i,x_i}$, the resulting correlations reproduce the GHZ self-testing
scenario of Ref.~\cite{baccari2020scalable}. In particular, combining the
partial self-testing result for the token-counting game~\cite{sekatski2023partial} with the self-testing of the GHZ experiment~\cite{baccari2020scalable} gives a self-testing statement for the fixed loop: any strategy reproducing the qubit distribution associated with loop $l$ must be equivalent, up to local unitaries, to the corresponding trusted strategy. This is formalized in the following theorem for a fixed loop of $N$ parties.

\begin{theorem}[Full qubit quantum self-testing for a fixed loop of size $N \geq 3$]\footnote{or standard quantum self-testing, with no assumption on the dimension.}
\label{thm:fullselftesting}
For the trusted strategy, consider the qubit game described above generalized to
$N$ parties, where $N \geq 3$. Fix a loop configuration $l$. Label the parties sequentially as
$1,\dots,N$ according to their order along the loop, and denote by $\ket{\psi_i}$
the state prepared by party $i$; see~\cref{fig:QubitSelftesting}. In the trusted
scenario described above, $\ket{\psi_i}$ is given by~\cref{eq:qubit-state-psi-theta}:
\begin{equation}
\ket{\psi_i}_{i_L,(i+1)_R}
=
\frac{1}{\sqrt{2}}
\left(
\ket{01}_{i_L,(i+1)_R}
+
e^{\ii\varphi_i}
\ket{10}_{i_L,(i+1)_R}
\right),
\qquad i\in\{1, \dots, N\},
\end{equation}
where indices are understood modulo \(N\)
Assume that the phases $\varphi_i$ satisfy
\begin{equation}
\label{eq:phase_constraint}
    \sum_{i=1}^{N} \varphi_i = \phi_l,
    \qquad \phi_l \in \{0,\pi\}.
\end{equation}
Moreover, let the coefficients of the trusted measurements in~\cref{eq:QubitProjectors} be given as follows. For party $i=1$,
    \begin{equation}
        \begin{aligned}
            & u_{i,0}^{+}=-v_{i, 0}^{-}=\frac{1}{\sqrt{2(2-\sqrt{2}})} &\qquad &v_{i, 0}^{+}=u_{i,0}^{-}=\frac{\sqrt{2}-1}{\sqrt{2(2-\sqrt{2}})}  \\
            & u_{i,1}^{+}=-v_{i, 1}^{-}= \frac{\sqrt{2}-1}{\sqrt{2(2-\sqrt{2}})}   &\qquad 
            &v_{i, 1}^{+}=u_{i,1}^{-}=\frac{1}{\sqrt{2(2-\sqrt{2}})},
             \label{eq:meas_coeff_forA}
        \end{aligned}
    \end{equation}
and for $i \in \{ 2, ..., N \}$,
    \begin{equation}
        \begin{aligned}
            & u_{i,0}^{+}= - v_{i,0}^{-}=1 &\qquad &v_{i,0}^{+}=u_{i,0}^{-}=0  \\
            & u_{i,1}^{+}= - v_{i,1}^{-}=\frac{1}{\sqrt{2}}  &\qquad &v_{i,1}^{+}=u_{i,1}^{-}=\frac{1}{\sqrt{2}}.
        \label{eq:meas_coeff_forother}
        \end{aligned}
    \end{equation}
Then, the resulting output distribution, for this fixed loop $l$, must be generated by the above strategy, up to local unitary transformations. More precisely, any strategy with the same output distribution as above, takes the following form:
\begin{itemize}
\item The right and left components $i_R, i_L$ of the Hilbert space of each party $i$ decompose into system and junk parts, with the system part being two-dimensional. More precisely, the left component consists of subsystems $S_{i_L}, J_{i_L}$ and the right component consists of subsystems $S_{i_R}, J_{i_R}$. Moreover, $S_{i_R}$ and $S_{i_L}$ are two-dimensional Hilbert spaces, each
isomorphic to $\mathbb{C}^2$.

\item For any $i \in \{1, \cdots, N \}$, there exist local unitary operators $U_{i_R}$ and $U_{i_L}$, acting on $S_{i_R} \otimes J_{i_R}$ and $S_{i_L} \otimes J_{i_L}$, respectively, such that
\begin{equation}
\label{eq:FullSelf-testedState}
U_{i_L} U_{(i+1)_R} \ket{\hat{\psi}_i}_{i_L (i+1)_R} = \frac{1}{\sqrt{2}} \left( \ket{01}_{S_{i_L} S_{(i+1)_L}} + e^{\ii \hat{\varphi}_i}\ket{10}_{S_{i_L} S_{(i+1)_L}}\right) \otimes \ket{\zeta_i}_{J_{i_L} J_{(i+1)_L}} 
\end{equation}
where $\sum_{i=1}^{N} \hat{\varphi}_i = \phi_l$, and $\ket{\zeta_i}_{J_{i_L} J_{(i+1)_L}}$ is a normalized state on the Hilbert space $J_{i_L} \otimes J_{(i+1)_L}$.
\item 
Moreover, define $U_i:= U_{i_R} \otimes U_{i_L}$ and let $\hat{M}_{x_i}^{(i)}$ for $x_i\in \{0,1\}$ be the conditional observable of party $i$, i.e.,
\begin{equation}
    \hat M_{x_i}^{(i)}= \hat \Pi_{i, x_i}^{1_+} - \hat \Pi_{i, x_i}^{1_-},
\end{equation}
and define
\begin{equation}
\begin{aligned}
\mathcal{X}_{1} &= \frac{\hat{M}^{(1)}_0 + \hat{M}^{(1)}_1}{\sqrt{2}} &\qquad \mathcal{Z}_{1} &= \frac{\hat{M}^{(1)}_0 - \hat{M}^{(1)}_1}{\sqrt{2}} \\
\mathcal{X}_{i} &= \hat{M}^{(i)}_1  &\qquad \mathcal{Z}_{i} &= \hat{M}^{(i)}_0 \qquad \forall i \in \{2,..., N \}.
\end{aligned}
\end{equation}
Then, the measurement is self-tested as
\begin{equation}
\begin{aligned}
\label{eq:FullSelf-testedObservable}
&\hat \Pi_{i }^{0} = \ketbra{00}{00}_{S_{i_R} S_{i_L}} \otimes \id_{J_i} \\
&\hat \Pi_{i }^{2} = \ketbra{11}{11}_{S_{i_R} S_{i_L}} \otimes \id_{J_i}  \\
& U_i  \mathcal{Z}_i U_i^\dagger = Z_{S_i} \otimes \id_{J_i} \\
&U_i  \mathcal{X}_i U_i^\dagger = X_{S_i} \otimes \id_{J_i},
\end{aligned}
\end{equation}
where
\begin{equation}
\begin{aligned}
&Z_{S_i}= \left(\ketbra{01}{01}- \ketbra{10}{10} \right)_{S_{i_R} S_{i_L}} \\
&X_{S_i}= \left(\ketbra{01}{10}+ \ketbra{10}{01} \right)_{S_{i_R} S_{i_L}}.
\end{aligned}
\end{equation}
\end{itemize}
\end{theorem}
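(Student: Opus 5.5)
The proof combines two known self-testing results. The partial self-testing of network token-counting strategies~\cite{sekatski2023partial} fixes the token structure, and the GHZ self-testing of~\cite{baccari2020scalable} is applied to the conditional state. A gluing step then recovers the product form with phases $\hat\varphi_i$.

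\textbf{Step 1: token structure.} Coarse-grain the outcomes $1_\pm$ into $1$. By Proposition~\ref{prop:FermionExperimentCorrelationValue}, generalized to $N$ parties, the coarse-grained statistics on the fixed loop $l$ are those of the token-counting game on an $N$-cycle. Each source emits a single token to its left or right neighbour with probability $1/2$, independently of the others. The key consequence is that the coarse-grained statistics are input-independent, so each party's projectors $\hat\Pi^0_i$, $\hat\Pi^1_i:=\hat\Pi^{1_+}_{i,x}+\hat\Pi^{1_-}_{i,x}$ and $\hat\Pi^2_i$ do not depend on $x_i$.

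Applying the token-counting partial self-testing theorem of~\cite{sekatski2023partial} (valid for $N\ge 3$ with independent sources), we obtain the following. Each received or kept component splits as $S_{i_R}\otimes J_{i_R}$ and $S_{i_L}\otimes J_{i_L}$, with $S$ a qubit encoding token presence, and there are local unitaries under which each source state takes the form
\begin{equation*}
\tfrac{1}{\sqrt2}\bigl(\ket{01}\ket{\zeta^{01}_i}+\ket{10}\ket{\zeta^{10}_i}\bigr).
\end{equation*}
The junk states may a priori depend on the token position. Moreover, $\hat\Pi^0_i$, $\hat\Pi^2_i$ and $\hat\Pi^1_i$ act as $\ketbra{00}{00}$, $\ketbra{11}{11}$ and the projector onto $\mathrm{span}\{\ket{01},\ket{10}\}$ on $S_i$, each tensored with $\id_{J_i}$. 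This step is essentially imported from the cited work.

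\textbf{Step 2: GHZ self-test on the conditional state.} Condition on the event that all parties obtain outcome $1$, which has probability $2^{1-N}$. Only the two global configurations ``all kept'' and ``all sent'' contribute, so the post-selected state is
\begin{equation*}
\tfrac{1}{\sqrt2}\bigl(\ket{0\cdots0}_{\mathrm{eff}}\ket{\zeta^{\rm keep}}+\ket{1\cdots1}_{\mathrm{eff}}\ket{\zeta^{\rm send}}\bigr),
\end{equation*}
where each party's effective qubit is its single-token subspace on $S_{i_R}S_{i_L}$. Restricted to this subspace, the observables $\hat M^{(i)}_{x_i}$ are binary observables.

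With the coefficients~\eqref{eq:meas_coeff_forA} and~\eqref{eq:meas_coeff_forother}, the conditional correlations in~\eqref{eq:Dist_3}/\eqref{eq:Dist_2} are exactly those of the scalable GHZ test of~\cite{baccari2020scalable}. Party $1$ measures $(X\pm Z)/\sqrt2$ and the other parties measure $Z$ and $X$, on a GHZ state with relative phase $e^{\ii\phi_l}$, where $\phi_l\in\{0,\pi\}$. The case $\phi_l=\pi$ reduces to $\phi_l=0$ by a local relabelling on party $1$. Invoking~\cite{baccari2020scalable}, the combinations $\mathcal Z_i,\mathcal X_i$ are unitarily equivalent to $Z,X$ on the effective qubit tensored with identity on a junk space. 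The same result gives $\ket{\zeta^{\rm send}}=e^{\ii\phi_l}\ket{\zeta^{\rm keep}}$ up to that local isometry. This yields the observable statements in~\eqref{eq:FullSelf-testedObservable}.

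\textbf{Step 3: gluing, the main obstacle.} The GHZ self-test constrains only the correlation between the global junk states $\ket{\zeta^{\rm keep}}=\bigotimes_i\ket{\zeta_i^{01}}$ and $\ket{\zeta^{\rm send}}=\bigotimes_i\ket{\zeta_i^{10}}$. We must convert this into per-source statements using the product structure of the sources. From $\bigotimes_i\ket{\zeta^{10}_i}=e^{\ii\phi_l}\bigotimes_i\ket{\zeta^{01}_i}$, possibly after the local isometries of Step 2, we conclude that each factor satisfies $\ket{\zeta^{10}_i}=e^{\ii\hat\varphi_i}\ket{\zeta^{01}_i}$. This holds because a product of unit vectors equals a phase times another product exactly when the factors agree up to phases, and those phases then sum to $\phi_l$ modulo $2\pi$.

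The delicate point is that the unitaries from Step 2 act on each party's pair $i_R i_L$ jointly, whereas the theorem requires unitaries $U_{i_R}$, $U_{i_L}$ on the individual components. I would handle this by having Step 2 act only within the single-token subspace, and by absorbing the relevant controlled phases, namely the junk rotations conditioned on token presence, into $U_{i_R}$ and $U_{i_L}$ separately. The commutation of the self-tested $Z_{S_i}$ with the token-position structure is what should make this possible. Setting $\ket{\zeta_i}:=\ket{\zeta^{01}_i}$ then gives~\eqref{eq:FullSelf-testedState}.
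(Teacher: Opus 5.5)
Your Steps 1 and 2 follow the same outline as the paper: a token-counting partial self-test, then the Baccari et al.\ GHZ inequalities on the all-one-token event. Step 3, however, has a genuine gap.

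The identity $\bigotimes_i\ket{\zeta^{10}_i}=e^{\ii\phi_l}\bigotimes_i\ket{\zeta^{01}_i}$ is false for the raw junk states. A source may attach different junk to the two token positions, and the receiving party can undo this with a token-controlled junk rotation before measuring. Once $\mathcal Z_i=Z_{S_i}\otimes\id$ on $H_i^{\rm cond}$ is known, the relations $\mathcal X_i^2=\id$ and $\{\mathcal X_i,\mathcal Z_i\}=0$ force $\mathcal X_i=\ketbra{01}{10}\otimes T_i+\ketbra{10}{01}\otimes T_i^\dagger$. Here $T_i$ is a unitary from $J^c_{i_R}\otimes J^c_{i_L}$ to $J^a_{i_R}\otimes J^a_{i_L}$. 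The stabiliser $\pm\mathcal X_1\cdots\mathcal X_N$ then gives $(\bigotimes_i T_i)\bigotimes_i\ket{\zeta^c_i}=e^{\ii\phi_l}\bigotimes_i\ket{\zeta^a_i}$.

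Each $T_i$ acts jointly on $J_{i_R}$ (junk of source $i-1$) and $J_{i_L}$ (junk of source $i$). So the left-hand side is not a product over sources, and your ``product of unit vectors'' argument does not apply.

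Your proposed repair is to absorb the token-controlled rotations into $U_{i_R}$ and $U_{i_L}$ separately. This presupposes exactly what must be proved, namely $T_i=T_i^{(L)}\otimes T_i^{(R)}$ on the relevant support. Commutation with $Z_{S_i}$ only yields the block off-diagonal form of $\mathcal X_i$; it says nothing about factorisation across the $i_R|i_L$ cut. Also, these are general junk unitaries, not phases.

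The paper obtains the factorisation from a separate product-structure lemma, an infinite-dimensional generalisation of Lemmas~1o/1e of Sekatski et al. Its proof transfers operators across each full-Schmidt-rank junk state via an isometry $\Gamma$, using that full-rank Schmidt vectors are separating. It then matches expansions in orthonormal bases of the operator Hilbert spaces $\mathcal O_{S_k}$ around the ring, handling odd $N$ by merging two subsystems.

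Only after this can one set $U_{i_R}=\ketbra{0}{0}\otimes\id+\ketbra{1}{1}\otimes T_i^{(R)}$ and $U_{i_L}=\ketbra{1}{1}\otimes\id+\ketbra{0}{0}\otimes T_i^{(L)}$. At that point your per-factor phase argument becomes valid. It gives $T_i^{(L)}T_{i+1}^{(R)}\ket{\zeta^c_i}=e^{\ii\hat\varphi_i}\ket{\zeta^a_i}$ with $\sum_i\hat\varphi_i=\phi_l$.

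A second, smaller gap is in Step 2. The GHZ self-test alone gives some local isometry built from $\mathcal X_i,\mathcal Z_i$, unrelated to the token decomposition of Step 1. So it does not show that $\mathcal Z_i$ acts as $Z_{S_i}\otimes\id$ on the token qubit. The conditional GHZ correlations are invariant under a global flip and cannot even fix the sign.

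The paper proves this identification separately, using data outside the all-one-token event. The values $P(a_i=1_+\mid a_{i-1}=2,a_i=1_\pm,a_{i+1}=0,x_i=0)=1$, and $0$ with the neighbours' counts swapped, give $\mathcal Z_i\ket{\psi_a}=\ket{\psi_a}$ and $\mathcal Z_i\ket{\psi_c}=-\ket{\psi_c}$. For party $1$ one also needs $\mathcal Z_1^2\ket{\psi}^{\rm cond}=\ket{\psi}^{\rm cond}$ from the stabilisers, because $\id\pm\mathcal Z_1$ need not be positive. This identification is what makes the block decomposition of $\mathcal X_i$ above available in the first place.
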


\begin{figure}[h]
    \centering
    \includegraphics[width=0.4\linewidth]{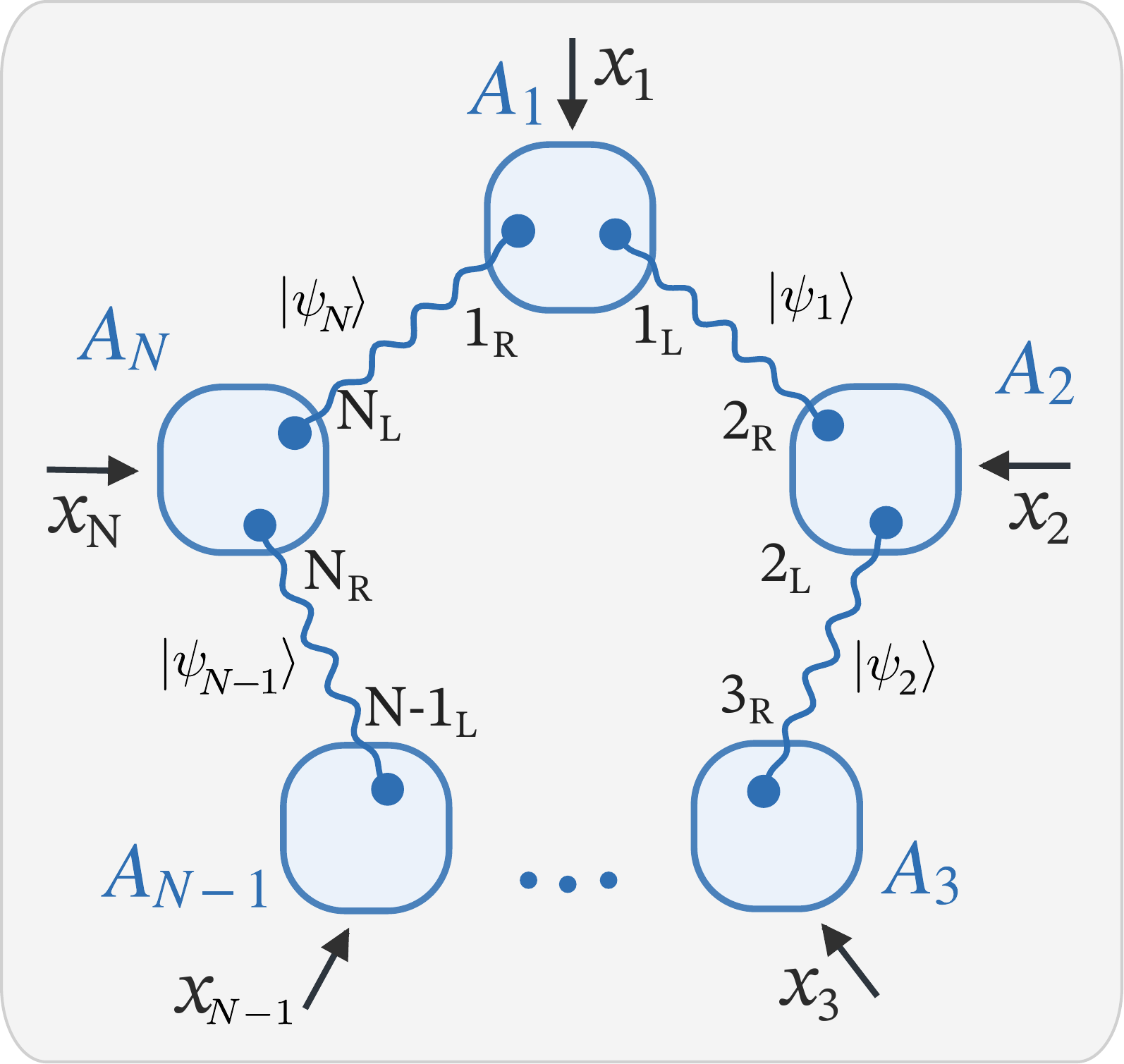}
    \caption{\(N\) parties connected by bipartite states in a loop, together with the corresponding Hilbert-space labels.}
    \label{fig:QubitSelftesting}
\end{figure}

\begin{proof}
It is given in Supplemental Material~\ref{appendix:FullSelfTesting}.
\end{proof}

\textit{Remark.}
There is a weaker version of this theorem for \(N=2\): the states and measurements can be self-tested as above, but under an additional assumption on the structure of the strategy. This result is stated in \cref{thm:fullselftesting_N2} of \cref{appendix:fullselftesting_N2}.

We now show that, within the rewiring experiment described above, it is impossible to simultaneously reproduce the distributions $P_f^l$, for all loop configurations $l$, using a local bosonic or distinguishable-particle strategy. By a local strategy, we mean that both the state preparation and the measurements are performed locally by each party and are crucially independent of the loop configuration implemented by the referee.

Indeed, based on the self-testing result stated in \cref{thm:fullselftesting}, we derive nontrivial constraints on any qubit strategy that simultaneously reproduces the fermionic distributions $P_f^l$ for all loop configurations $l$. Then, we argue that these constraints are incompatible with the requirement that the strategies corresponding to different loops arise from a single underlying local model, which leads to a contradiction. This is the main result of this work and is formalised in the following theorem.

\begin{theorem}[No local qubits simulation of the fermionic game]
\label{thm:MainTheoremNoSimultaneuousSimulation}
The family of probability distributions $\{P_f^l\}_{l \in \{ABC, ACB, AB, AC, BC\}}$ arising from the fermionic rewiring experiment described above (see \cref{fig:rewiring-game}), with the measurement coefficients specified in \cref{eq:meas_coeff_forA,eq:meas_coeff_forother}, cannot be simultaneously reproduced by any local qubits strategy within the same scenario.
\end{theorem}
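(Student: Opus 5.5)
We argue by contradiction. Suppose there is a single local qubit strategy, that is, states $\ket{\hat\psi_\theta}_{\theta_R\theta_L}$ for $\theta\in\{\alpha,\beta,\gamma\}$ and local measurements $\{\hat\Pi^{a}_{i,x_i}\}$ for $i\in\{A,B,C\}$, all fixed independently of $l$, which reproduces $P_f^l$ for every $l\in\{ABC,ACB,AB,AC,BC\}$. The plan is to extract, from each loop separately, a constraint on a phase attached to each prepared state, and then to show that these constraints are jointly inconsistent.

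\textbf{Three-loops.} For $l=ABC$ and $l=ACB$, the observed $P_f^l$ coincides with the trusted qubit distribution of \cref{prop:FermionExperimentCorrelationValue} with $\phi_l=0$. \cref{thm:fullselftesting} with $N=3$ therefore applies. It gives local isometries $U_{i_R},U_{i_L}$ and phases $\hat\varphi_\theta$ such that each $\ket{\hat\psi_\theta}$ is locally equivalent to $\frac{1}{\sqrt2}(\ket{01}+e^{\ii\hat\varphi_\theta}\ket{10})\otimes\ket{\zeta_\theta}$, with $\sum_\theta\hat\varphi_\theta=0 \pmod{2\pi}$.

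The key point is that the measurement of party $i$ acts on $i_R i_L$, and it is the same physical operator for every $l$. Hence the self-tested observables $\mathcal Z_i,\mathcal X_i$ fix the qubit structure $S_{i_R}S_{i_L}$ inside each party's laboratory, up to junk, independently of $l$. The phase $\hat\varphi_\theta$ is then an intrinsic property of state $\theta$, defined relative to the local frames of its two endpoint parties.

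\textbf{Two-loops.} Here we invoke the $N=2$ version, \cref{thm:fullselftesting_N2}. The extra structural assumption it requires should be supplied by the $N=3$ results: the projectors $\hat\Pi^0_i,\hat\Pi^2_i$ are already fixed as $\ketbra{00}{00}\otimes\id$ and $\ketbra{11}{11}\otimes\id$, and these operators do not depend on $l$. For the excluded party in loop $BC$, the state $\ket{\hat\psi_\alpha}$ is returned to $A$. Party $A$ then always outputs $1$, so the conditional statistics on $A$'s single-token subspace give an additional constraint on $\alpha$. This yields, for each two-loop $\{i,j\}$, the relation $\hat\varphi_{\theta}+\hat\varphi_{\theta'}=\pi$, where $\theta,\theta'$ are the two states in the loop.

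\textbf{Consistency of phases (main obstacle).} The delicate step is to show that the phases are well defined across loops. Each $\hat\varphi_\theta$ is measured relative to local frames, and those frames could a priori differ between configurations. They are pinned by the $l$-independent observables $\mathcal X_i,\mathcal Z_i$ and by the GHZ-type relative phase read from the conditional correlators $\langle\prod_i M^{(i)}_{x_i}\rangle$. The subtlety is that in two-loops the same physical state $\theta_L$ is delivered to a different party than in three-loops. Its frame must therefore be transported through the local unitaries $U_{i_L}$ and $U_{j_R}$. The junk states $\ket{\zeta_\theta}$ must also be shown not to carry compensating phases. I would try first to express each phase $e^{\ii\hat\varphi_\theta}$ directly as an $l$-independent expectation value, namely the matrix element of $\ket{\hat\psi_\theta}$ between the $\ket{01}$ and $\ket{10}$ components after the local frames are fixed. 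This would reduce the three-loop and two-loop constraints to a single set of variables.

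\textbf{Contradiction.} Once consistency is established, a parity count finishes the argument. Both three-loops use all three states, which gives $\sum_\theta\hat\varphi_\theta\equiv0$. Summing the three two-loop constraints counts each state twice, which gives $2\sum_\theta\hat\varphi_\theta\equiv3\pi\equiv\pi$. But the three-loop result forces this sum to be $0$, and $0\not\equiv\pi$. This contradicts the assumption that a single local qubit strategy reproduces all five distributions.
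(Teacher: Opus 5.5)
Your closing parity count is the right linear combination: it is the paper's combination $ABC+ACB-AB-AC$, which fixes the phase of $BC$. But both of its inputs are missing, and neither can be obtained the way you propose.

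First, there is no self-testing statement for the loop $BC$. \cref{thm:fullselftesting_N2} requires one party to measure with $A$'s coefficients \cref{eq:meas_coeff_forA}. In $BC$, both $B$ and $C$ measure only $Z$ and $X$. The target conditional correlations ($\langle ZZ\rangle=1$, $\langle XX\rangle=-1$, $\langle ZX\rangle=\langle XZ\rangle=0$) admit a local hidden-variable model, so by themselves they fix neither the measurements nor the relative phase. The statistics of the excluded party $A$ involve only $\alpha$ and the measurement $A$ uses when its own state comes back, so they say nothing about $\beta,\gamma$.

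This is essential, not a detail. Without a $BC$ constraint, the remaining four phase constraints are jointly satisfiable (e.g.\ $\varphi_\alpha=0$, $\varphi_\beta=\varphi_\gamma=\pi$), so no contradiction arises. The missing idea is to \emph{predict} the $BC$ statistics instead of self-testing them. In $BC$, party $B$ receives $\gamma$ exactly as in $ACB$, and $C$ receives $\beta$ exactly as in $ABC$. By locality, their self-tested unitaries and observables can therefore be imported from those loops. This expresses the $BC$ conditional state and observables in frames already fixed elsewhere. Its accumulated phase then comes out as $0$ (see below), and the $x_B=x_C=1$ statistics are $\tfrac14(1+s_Bs_C)$ instead of the fermionic $\tfrac14(1-s_Bs_C)$. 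The same matching of (party, incoming state) pairs with the three-loops is what supplies the structural hypotheses of \cref{thm:fullselftesting_N2} for $AB$ and $AC$. The fixed projectors $\hat\Pi^0_i,\hat\Pi^2_i$ do not supply them.

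Second, the consistency step you flag cannot be carried out as planned, because the phases are not intrinsic to the states. A receiving party may learn the sender's identity (e.g.\ from a label register) and adapt its measurement. The self-tested frames $U_{i_L},U_{i_R}$ therefore depend on which state arrives. Even with label-independent measurements, the split $T_i=T_i^{(L)}\otimes T_i^{(R)}$ is only defined up to opposite phases. Hence the self-tested phase of, say, $\alpha$ can genuinely differ between $ABC$ and $AB$, and no $l$-independent expectation value defines it.

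The paper accepts this and introduces offsets $\phi_{A_L},\phi_{A_R},\dots,\phi_{C_R}$ (e.g.\ $\phi_{A_L}=\hat\varphi^{AB}_\alpha-\hat\varphi^{ABC}_\alpha$). These relate a party's frames for its two possible incoming states. The argument closes because every (party, incoming sender) pair occurs in exactly one three-loop and exactly one two-loop. All offsets therefore cancel, and the accumulated loop phases of any local qubit strategy satisfy $\Phi_{ABC}+\Phi_{ACB}-\Phi_{AB}-\Phi_{AC}-\Phi_{BC}\equiv 0 \pmod{2\pi}$. The fermionic data instead demand $0+0-3\pi\equiv\pi$. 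To turn your count into a proof, replace the intrinsic-phase claim with this offset bookkeeping and add the imported-operator computation for $BC$.
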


\begin{proof}
To provide intuition for the origin of the indistinguishable-fermion advantage, consider the situation in which each local party prepares a maximally entangled single-fermion state, as illustrated in \cref{fig:SuperPositionState}(a). Conditioning on the events in which every party ends up with one particle gives a coherent superposition of two terms: one in which all particles are kept locally, and one in which all particles are transmitted. Crucially, the second term involves \(N-1\) fermionic exchanges. As a result, a GHZ state is created whose relative phase depends on the parity of the loop.

This provides a way of encoding the loop parity into the global state, in a manner that is not accessible with standard qubits under the same communication structure. In the next step, the measurements reveal this difference at the level of the observed distributions. Indeed, the states \(\mathrm{GHZ}^+\) and \(\mathrm{GHZ}^-\) can be perfectly distinguished by suitable local measurements.

Importantly, reproducing a parity-dependent state with local qubit systems requires \(N/2\) rounds of communication. Our result shows that, with the same single communication step as in the fermionic protocol, standard qubits cannot reproduce the fermionic statistics. The formal proof is provided in~\cref{appendix:contradiction}.

The conjecture that any protocol with fewer than \(N/2\) communication rounds is insufficient is discussed in~\cref{appendix:conjecture}.
\end{proof}

\begin{figure}[H]
    \centering
    \includegraphics[width=0.99\linewidth]{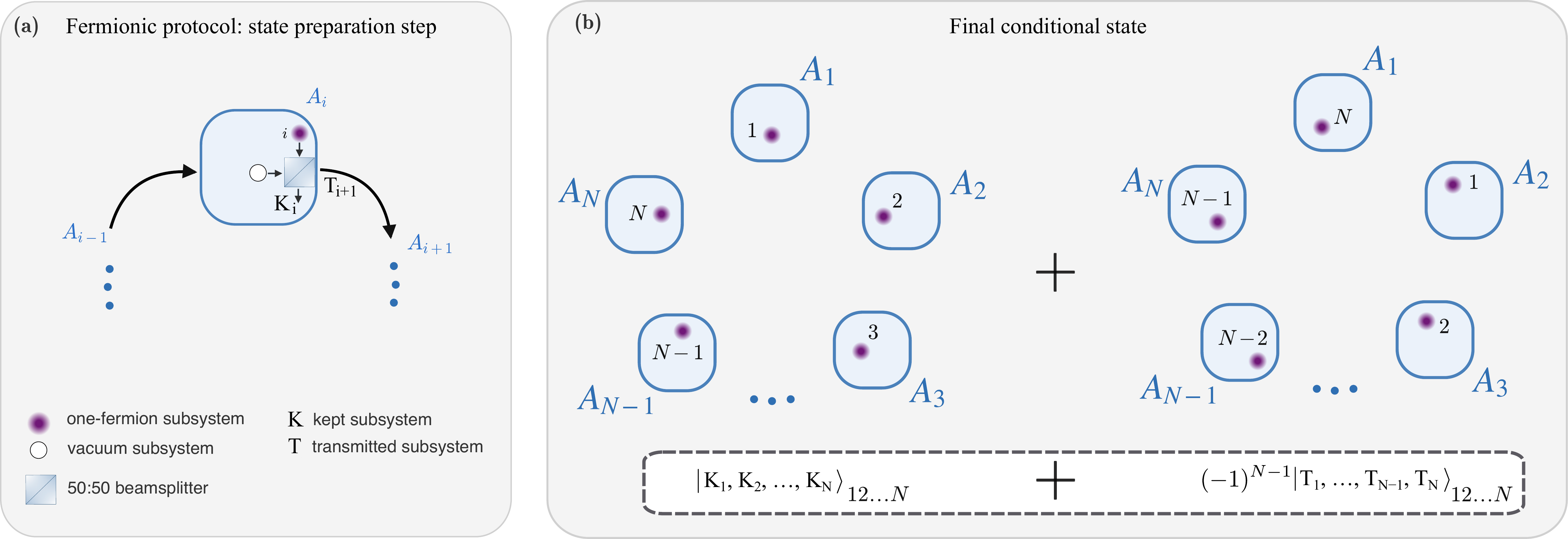}
    \caption{ \textbf{Fermionic exchange phase encoded in a global state in a detectable way.} \textbf{(a)} state preparation in the fermionic protocol.
    \textbf{(b)} The post-selected state is a GHZ state with a relative sign determined by the parity of the loop.} 
    \label{fig:SuperPositionState}
\end{figure}

\section{About the  concrete experimental realisation of the fermionic Gedankenexperiment}
\label{app:experimental-implementation}
For an experimental realisation of the fermionic Gedankenexperiment (which we prove cannot be locally reproduced by bosonic implementations), one needs three parties to be able to locally address two fermionic modes each. During the experiment, some of these modes need to be rearranged, such that each party has access to a different subset of modes. Given the current state-of-the-art, the  most promising platform on which such a scenario could be implemented are trapped fermionic atoms in reconfigurable optical tweezers~\cite{gonzalez-cuadra_fermionic_2023}. Each tweezer can be modeled as a fermionic mode that may or  may not be occupied by indistinguishable atoms. Recent technological advances have shown that it is possible to rearrange these tweezers in space without destroying the quantum state they contain~\cite{bluvstein_quantum_2022}. 

Once this system is available, the whole experiment can be implemented by free fermionic operations (or fermionic linear optics)~\cite{PhysRevA.65.032325} and occupation number measurements. In what follows we explain more in detail how this can be done. An implementation of the entire distributed protocol can be divided into the following subsequent steps:
\begin{enumerate}
    \item An initial Fock basis state is prepared, where each party holds the state $\fket{01}_{i_R i_L}$, \textit{i.e.} an occupied and an unoccupied mode;
    \item Each party acts with a local unitary operation $U^{(1)}_i$ on their two modes creating locally an entangled state;
    \item The modes are rearranged according to the referee's prescriptions;
    \item The parties apply another local unitary $U^{(2)}_{i,x_i}$, which depends on the input bit $x_i$, on the two modes they now hold;
    \item Each party measures their modes in the occupation basis.
\end{enumerate}
The last two steps in particular, as we will see below, are a way to implement the measurement protocol described in the main presentation of the fermionic experiment. We now discuss each step more in detail.

Step 1. can be achieved simply by loading fermionic atoms into the experiment and measuring locally in the occupation basis to confirm that the required Fock basis state has been prepared.

In Step 2. each party needs to apply a local fermionic unitary transformation that maps the state $\fket{01}_{i_Ri_L}=f_{i_L}^\dagger \vacuumket_{i_R i_L}$ into the required initial state
\begin{equation}
    \fket{\psi^+}_{i_R i_L}
    = \frac{1}{\sqrt{2}} \bigl( \fket{01}_{i_R i_L} + \fket{10}_{i_R i_L} \bigr)
    = \frac{1}{\sqrt{2}} \bigl( f_{i_L}^\dagger + f_{i_R}^\dagger \bigr)\vacuumket_{i_R i_L}.
\end{equation}
This can be achieved by the following local fermionic unitary, corresponding to a 50:50 fermionic beam splitter
\begin{equation}
    U^{(1)}_i=\exp\left[\frac{\pi}{4}(f_{i_R}^\dagger f_{i_L} -f_{i_L}^\dagger f_{i_R} )\right].
\end{equation}
This indeed has the action $U^{(1)}_i f_{i_L}^\dagger {U^{(1)}_i}^\dagger = \bigl( f_{i_L}^\dagger + f_{i_R}^\dagger \bigr)/\sqrt{2}$.

Step 3. is implemented as discussed above by moving the optical tweezers into a new spatial configuration. 

In Step 4. each party applies, depending on their input bit $x_i$, the local free fermionic unitary, corresponding to a tunable beam splitter, 
\begin{equation}
    U^{(2)}_{i,x_i}=\exp\left[\alpha_{i,x_i}(f_{i_R}^\dagger f_{i_L} -f_{i_L}^\dagger f_{i_R} )\right].
\end{equation}
The angles $\alpha_{i,x_i}$ are chosen in the following way: in the case of party A, 
\begin{equation}
    \alpha_{A,0}=\frac{3\pi}{8}, \quad \quad \alpha_{A,1}=\frac{\pi}{8},
\end{equation}
in the case of the other parties, $i\in\{B,C\}$, 
\begin{equation}
    \alpha_{i,0}=\frac{\pi}{2}, \quad \quad \alpha_{i,1}=\frac{\pi}{4}.
\end{equation}

In Step 5., the parties measure in the occupation basis. This is a measurement with four possible outcomes, corresponding to projecting on the states $\fket{00}_{i_R i_L}$, $\fket{01}_{i_R i_L}$, $\fket{10}_{i_R i_L}$ and $\fket{11}_{i_R i_L}$. Thus, the combined effect of Steps 4. and 5. can be seen as a four outcome projective measurement on the rotated basis given by the states ${U^{(2)}_{i,x_i}}^\dagger\fket{00}_{i_R i_L}$, ${U^{(2)}_{i,x_i}}^\dagger \fket{01}_{i_R i_L}$, ${U^{(2)}_{i,x_i}}^\dagger \fket{10}_{i_R i_L}$ and ${U^{(2)}_{i,x_i}}^\dagger \fket{11}_{i_R i_L}$. An expression for these states can be easily derived by applying the relation
\begin{equation}
    {U^{(2)}_{i,x_i}}^\dagger \begin{pmatrix} f_{i_R}^\dagger \\ f_{i_L}^\dagger \end{pmatrix} U^{(2)}_{i,x_i} = \begin{pmatrix} \cos\alpha_{i,x_i} & \sin\alpha_{i,x_i} \\ -\sin\alpha_{i,x_i} &\cos\alpha_{i,x_i}  \end{pmatrix}\begin{pmatrix} f_{i_R}^\dagger \\ f_{i_L}^\dagger \end{pmatrix}.
\end{equation}
This in particular leads to ${U^{(2)}_{i,x_i}}^\dagger\fket{00}_{i_R i_L}=\fket{00}_{i_R i_L}$, corresponding to observing zero fermions in the local modes, and ${U^{(2)}_{i,x_i}}^\dagger\fket{11}_{i_R i_L}=\fket{11}_{i_R i_L}$, corresponding to observing two fermions in the local modes. The states ${U^{(2)}_{i,x_i}}^\dagger \fket{01}_{i_R i_L}$, ${U^{(2)}_{i,x_i}}^\dagger \fket{10}_{i_R i_L}$ on the other hand can be seen to coincide with the measurement bases of the one fermion sector given in Eq.~\eqref{eq:FermionicProjectors}. This means that one can fully recover the measurement protocol discussed in the main description of the fermionic experiment by attributing the following labels to the outcomes of the occupation number measurement of Step 5.:
\begin{align}
    &00 \longmapsto a=0 \nonumber\\
    &01 \longmapsto a=1_+ \nonumber\\
    &10 \longmapsto a=1_- \nonumber\\
    &11 \longmapsto a=2. \nonumber
\end{align}

\section{Conjecture: stronger fermionic advantages}
\label{appendix:conjecture}

Our main result demonstrates a fundamental advantage of fermionic information carriers over bosonic or distinguishable ones in the distributed rewiring scenario of Fig.~\ref{fig:rewiring-game}. 
More precisely, perfect correlations (without experimental or statistical noise) generated using one round of local fermionic communication require at least two rounds of communication---for instance communication between parties at distance $L=2$ on the loop network---to be reproduced using bosonic or distinguishable-particle systems.
We believe that this separation is only the first manifestation of a broader hierarchy of fermionic nonlocality.

First, our result only holds in an idealized Gedankenexperiment in which the observed correlations are exactly the one predicted in F-FIT with an ideal experimental setup. This would require both a noisyless experimental setup, and infinite experimental statistics to avoid any statistical noise. 
This assumption is technically required by our proof techniques, but there is currently no conceptual reason why noisy correlations close to the one of our Gedankenexperiment would suddenly become feasible with a bosonic simulation.

Second, Fig.~\ref{fig:cyclic-fermionic-phase} suggests a much stronger communication separation.
While fermions generate the phase $(-1)^{N-1}$ using only nearest-neighbor communication on the loop, reproducing the same global parity information with qubits appears to require communication over distances scaling as $N/2$.
We therefore conjecture that no qubit-based strategy restricted to communication distances strictly smaller than $N/2$ can reproduce all correlations obtainable using one round of fermionic communication.

These considerations motivate the following conjecture.

\begin{conjecture}
Consider the distributed rewiring game of Fig.~\ref{fig:rewiring-game}, generalized to $N$ parties $A_1,\ldots,A_N$. Then there exists a fermionic strategy whose correlations cannot be reproduced, or even approximated, by any qubit-based strategy:
\begin{enumerate}
    \item using systems of arbitrary, possibly infinite, local dimension, including simulations in the commuting-operator model;
    \item restricted to communication distances strictly smaller than $N/2$;
\end{enumerate}
\end{conjecture}

\FloatBarrier
\section{Fermionic and bosonic distributions}
\label{appendix:dist}

In this Supplemental Material, we prove \cref{prop:FermionExperimentCorrelationValue} by explicitly deriving the corresponding distributions. We proceed in three steps: first analyzing the three-party loop in the fermionic case, then the two-party loop, and finally comparing with the bosonic (or distinguishable) case.

\subsection*{Fermionic particle case: three-party loops}

In the fermionic setting, after step 1 of the strategy described in Supplemental Material~\ref{appendix:ExperimentAndMainResult}, the state is given by
\begin{equation}
    \fket{\psi}= \fket{\psi^+_{\alpha}} \wedge \fket{\psi^+_{\beta}} \wedge \fket{\psi^+_{\gamma}}.
\end{equation}

In the loop $ABC$, using the labeling according to \cref{fig:labels}, the state can be written as
\begin{equation}
    \fket{\psi}= \frac{1}{2\sqrt{2}}(f^{\dagger}_{A_L}+ f^{\dagger}_{B_R}) (f^{\dagger}_{B_L}+ f^{\dagger}_{C_R}) (f^{\dagger}_{C_L}+ f^{\dagger}_{A_R}) \fket{\Omega},
\end{equation}
where $\fket{\Omega}$ is the global vacuum. 

Fixing the ordering $A_R A_L B_R B_L C_R C_L$, we rewrite
\begin{equation}\label{eq:statethree}
\begin{aligned}
\fket{\psi}
&=
\frac{1}{2\sqrt{2}}(f^{\dagger}_{A_L}+ f^{\dagger}_{B_R})
(f^{\dagger}_{B_L}+ f^{\dagger}_{C_R})
f^{\dagger}_{C_L} \fket{\Omega}
\\
&\quad +
\frac{1}{2\sqrt{2}}f^{\dagger}_{A_R}
(f^{\dagger}_{A_L}+ f^{\dagger}_{B_R})
(f^{\dagger}_{B_L}+ f^{\dagger}_{C_R})
\fket{\Omega}.
\end{aligned}
\end{equation}

\paragraph{Token-counting distribution.}

We first compute the particle-number (coarse grained) statistics. We denote the coarse-grained outcomes corresponding to $1_\pm$ by $1$. Note that the resulting distribution is independent of the inputs. As an example,
\begin{equation}
\begin{aligned}
P_f^{ABC}(a_A=2,a_B= 1,a_C=0)
&=
\fbra{\psi} \Pi_A^2 \wedge \Pi_B^1 \wedge \Pi_C^0 \fket{\psi}
\\
&=
\fbra{\psi}\Big[
f_{A_R}^\dagger f_{A_L}^\dagger
\fketbra{\Omega}{\Omega}_{A_R A_L}
f_{A_L} f_{A_R}
\\
&\qquad \wedge
\Big(
f^\dagger_{B_R} \fketbra{\Omega}{\Omega}_{B_R B_L} f_{B_R}
+
f^\dagger_{B_L} \fketbra{\Omega}{\Omega}_{B_R B_L} f_{B_L}
\Big)
\\
&\qquad \qquad \wedge
\fketbra{\Omega}{\Omega}_{C_R C_L} \Big]
\fket{\psi}.
\end{aligned}
\end{equation}

This reduces to
\begin{equation}
\begin{aligned}
P_f^{ABC}(2,1,0)
&=
\abs{\fbra{\Omega}f_{B_R} f_{A_L} f_{A_R} \fket{\psi}}^2
+
\abs{\fbra{\Omega}f_{B_L} f_{A_L} f_{A_R} \fket{\psi}}^2.
\end{aligned}
\end{equation}

From the structure of $\fket{\psi}$, the first term vanishes, while
\begin{equation}
\begin{aligned}
\abs{\fbra{\Omega}f_{B_L} f_{A_L} f_{A_R} \fket{\psi}}^2= \frac{1}{8}.
\end{aligned}
\end{equation}

Hence,
\begin{equation}
    P_f^{ABC}(a_A=2,a_B=1,a_C=0)= \frac{1}{8}.
\end{equation}

By symmetry,
\begin{equation}
\begin{aligned}
    &P_f^{ABC}(a_A=2,a_B=0,a_C=1)= \frac{1}{8}, 
    &P_f^{ABC}(a_A=1,a_B=2,a_C=0)= \frac{1}{8},\\
    &P_f^{ABC}(a_A=0,a_B=2,a_C=1)= \frac{1}{8},
    &P_f^{ABC}(a_A=0,a_B=1,a_C=2)= \frac{1}{8},\\
    &P_f^{ABC}(a_A=1,a_B=0,a_C=2)= \frac{1}{8}.
\end{aligned}
\end{equation}

Next, we compute
\begin{equation}
\begin{aligned}
P_f^{ABC}(a_A=1,a_B=1,a_C=1)
&=
\fbra{\psi} \Pi_A^1 \wedge \Pi_B^1 \wedge \Pi_C^1 \fket{\psi}.
\end{aligned}
\end{equation}

Expanding the projectors,
\begin{equation}
\begin{aligned}
&=
\fbra{\psi} \Big[ 
\Big(
f^\dagger_{A_R} \fketbra{\Omega}{\Omega} f_{A_R}
+
f^\dagger_{A_L} \fketbra{\Omega}{\Omega} f_{A_L}
\Big)
\\
&\qquad \quad \qquad \qquad \wedge
\Big(
f^\dagger_{B_R} \fketbra{\Omega}{\Omega} f_{B_R}
+
f^\dagger_{B_L} \fketbra{\Omega}{\Omega} f_{B_L}
\Big)
\\
&\qquad \qquad \qquad \qquad \qquad \wedge
\Big(
f^\dagger_{C_R} \fketbra{\Omega}{\Omega} f_{C_R}
+
f^\dagger_{C_L} \fketbra{\Omega}{\Omega} f_{C_L}
\Big) \Big]
\fket{\psi}.
\end{aligned}
\end{equation}

All mixed terms vanish. For instance,
\begin{equation}
\begin{aligned}
 \fbra{\Omega}  f_{C_L} f_{B_R} f_{A_R} \fket{\psi}  =0.
\end{aligned}
\end{equation}

Thus,
\begin{equation}
\begin{aligned}
P_f^{ABC}(1,1,1)
&=
\lvert \fbra{\Omega}  f_{C_R} f_{B_R} f_{A_R}  \fket{\psi} \rvert^2
+
\lvert \fbra{\Omega}  f_{C_L} f_{B_L} f_{A_L} \fket{\psi} \rvert^2
\
&= \frac{1}{4}.
\end{aligned}
\end{equation}
From the above calculations it is clear that the $P(a_A, a_B, a_C)=0$ when $a_A+a_B+a_C\neq 3$. This distribution is a token counting distribution according to \cite{PhysRevA.105.022408}, arising from a simple token counting model with one uniform token per source on a triangle network.

\paragraph{Conditional distribution.}

We now compute
\begin{equation}
\begin{aligned}
P_f^{ABC}\!\left(a_A,a_B,a_C \mid (a_A,a_B,a_C)\in\{1_\pm\}^3,x_A,x_B,x_C\right)
=
\frac{
P_f^{ABC}\! \left(a_A,a_B,a_C \, \cap \, (a_A,a_B,a_C)\in\{1_\pm\}^3 \mid x_A,x_B,x_C \right)
}{
P_f^{ABC}\!\left((a_A,a_B,a_C)\in\{1_\pm\}^3 \mid x_A,x_B,x_C\right)
}.
\end{aligned}
\end{equation}

The numerator $\forall a_i \in \{1_\pm \}$, is 
\begin{equation}
\begin{aligned}
&P_f^{ABC}(1_{s_A}, 1_{s_B}, 1_{s_C}  \mid x_A,x_B,x_C) \\
&=\left|
\fsandwich{\Omega}{
(u_{C,x_C}^{s_C*}f_{C_R}+v_{C,x_C}^{s_C*}f_{C_L})
(u_{B,x_B}^{s_B*}f_{B_R}+v_{B,x_B}^{s_B*}f_{B_L})
(u_{A,x_A}^{s_A*}f_{A_R}+v_{A,x_A}^{s_A*}f_{A_L})}{\psi}
\right|^2.
\end{aligned}
\label{eq:numerator}
\end{equation}
Expanding all terms, we obtain
\begin{equation}
    \begin{aligned}
    P_f^{ABC}&( 1_{s_A}, 1_{s_B}, 1_{s_C} \mid x_A,x_B,x_C) \\
    &=\left|\frac1{2\sqrt2}\left\{\fsandwich{\Omega}{(u_{C,x_C}^{s_C*}f_{C_R}+v_{C,x_C}^{s_C*}f_{C_L})
    (u_{B,x_B}^{s_B*}f_{B_R}+v_{B,x_B}^{s_B*}f_{B_L})
    (u_{A,x_A}^{s_A*}f_{A_R}+v_{A,x_A}^{s_A*}f_{A_L})f^\dag_{A_R}f^\dag_{B_R}f^\dag_{C_R}}{\Omega}\right.\right.\\
    &\left.\left.+\fsandwich{\Omega}{(u_{C,x_C}^{s_C*}f_{C_R}+v_{C,x_C}^{s_C*}f_{C_L})
    (u_{B,x_B}^{s_B*}f_{B_R}+v_{B,x_B}^{s_B*}f_{B_L})
    (u_{A,x_A}^{s_A*}f_{A_R}+v_{A,x_A}^{s_A*}f_{A_L})f^\dag_{A_L}f^\dag_{B_L}f^\dag_{C_L}}{\Omega}\right\}\right|^2
    \end{aligned}
\end{equation}

Now, only two terms survive and we obtain
\begin{equation}
\begin{aligned}
P_f^{ABC}(1_{s_A}, 1_{s_B}, 1_{s_C} &\mid x_A,x_B,x_C)\\
&= \left|
u_{A,x_A}^{a_A*}u_{B,x_B}^{a_B*}u_{C,x_C}^{a_C*}
\fbra{\Omega} f_{C_R} f_{B_R} f_{A_R}\fket{\psi}
+
v_{A,x_A}^{a_A*}v_{B,x_B}^{a_B*}v_{C,x_C}^{a_C*}
\fbra{\Omega} f_{C_L} f_{B_L} f_{A_L}\fket{\psi}
\right|^2\\
   &= \frac{1}{8}
   \left|
   \prod_i u_{i,x_i}^{s_i}
   +
   \prod_i v_{i,x_i}^{s_i}
   \right|^2.
\end{aligned}
\end{equation}
After normalization, this yields \cref{eq:Dist_3}.


\subsection*{Fermionic particle case: two-party loops}
We now consider a two-party loop, for instance $AB$, in which case the party $C$ is excluded from the loop. The global state is given by
\begin{equation}
    \fket{\psi}= \fket{\psi^+_{\alpha}} \wedge \fket{\psi^+_{\beta}} \wedge \fket{\psi^+_{\gamma}},
\end{equation}
which can be written in terms of the party labels (see \cref{fig:labels}) as
\begin{equation}
    \fket{\psi}= \frac{1}{2\sqrt{2}}(f^{\dagger}_{A_L}+ f^{\dagger}_{B_R}) (f^{\dagger}_{B_L}+ f^{\dagger}_{A_R}) (f^{\dagger}_{C_L}+ f^{\dagger}_{C_R}) \fket{\Omega},
\end{equation}
where $\fket{\Omega}$ denotes the global vacuum.

Fixing the ordering $A_R A_L B_R B_L C_R C_L$, the state simplifies to
\begin{equation}
\label{eq:statetwo}
\begin{aligned}
    \fket{\psi}=
    \frac{1}{2\sqrt{2}}(f^{\dagger}_{A_L}+ f^{\dagger}_{B_R}) f^{\dagger}_{B_L} (f^{\dagger}_{C_L}+ f^{\dagger}_{C_R})\fket{\Omega}
    -
    \frac{1}{2\sqrt{2}}f^{\dagger}_{A_R} (f^{\dagger}_{A_L}+ f^{\dagger}_{B_R})(f^{\dagger}_{C_L}+ f^{\dagger}_{C_R}) \fket{\Omega}.
\end{aligned}   
\end{equation}
We note that, in contrast to the three-party case in \cref{eq:statethree}, the state in \cref{eq:statetwo} contains a relative minus sign, which will play a crucial role in the following derivation of the conditional distribution.
\paragraph{Token-counting distribution.}

Proceeding as in the three-party case, one obtains
\begin{equation}
\begin{aligned}
    &P_f^{AB}(a_A=2, a_B=0, a_C=1)= \frac{1}{4},\\
    &P_f^{AB}(a_A=0, a_B=2, a_C=1)= \frac{1}{4},\\
    &P_f^{AB}(a_A=1, a_B=1, a_C=1)= \frac{1}{2}.
\end{aligned}
\end{equation}
similarly, here $P_f^{AB}(a_A, a_B)=0$ when $a_A + a_B \neq 2$.

\paragraph{Conditional distribution.}

We now compute the conditional probabilities. Starting from the definition, $\forall a_i \in \{ 1_\pm\}$
\begin{equation}
\begin{aligned}
P_f^{AB}(a_A, a_B, a_C &\mid x_A, x_B, x_C )\\
= \fsandwich{\psi}{
\Big[&
\fketbra{m_{A, x_A}^{a_A}}{m_{A, x_A}^{a_A}} \wedge
\fketbra{m_{B, x_B}^{a_B}}{m_{B, x_B}^{a_B}} \wedge 
\fketbra{m_{C, x_C}^{a_C}}{m_{C, x_C}^{a_C}}
\Big] (\Pi_A^1 \wedge \Pi_B^1 \wedge \Pi_C^1)
}{\psi}
\\
=\Big| \fsandwich{\Omega}{&
\Big( u_{C,x_C}^{a_C*} f_{C_R} + v_{C,x_C}^{a_C*} f_{C_L} \Big)
\Big( u_{B,x_B}^{a_B*} f_{B_R} + v_{B,x_B}^{a_B*} f_{B_L} \Big) 
\Big( u_{A,x_A}^{a_A*} f_{A_R} + v_{A,x_A}^{a_A*} f_{A_L}\Big)
}{\psi} \Big|^2.
\end{aligned}
\end{equation}
Expanding the product, the non-vanishing terms give
\begin{equation}
\begin{aligned}
P_f^{AB}(a_A, a_B, a_C &\mid  x_A, x_B, x_C )\\
& =
\Big|
u_{A,x_A}^{a_A*}u_{B,x_B}^{a_B*}u_{C,x_C}^{a_C*}
\fbra{\Omega} f_{C_R} f_{B_R} f_{A_R}\fket{\psi}
+
v_{A,x_A}^{a_A*}v_{B,x_B}^{a_B*}v_{C,x_C}^{a_C*}
\fbra{\Omega} f_{C_L}f_{B_L}f_{A_L}\fket{\psi}
\\
 &\qquad +
u_{A,x_A}^{a_A*}u_{B,x_B}^{a_B*}v_{C,x_C}^{a_C*}
\fbra{\Omega} f_{C_L}f_{B_R}f_{A_R}\fket{\psi}
+
v_{A,x_A}^{a_A*}v_{B,x_B}^{a_B*}u_{C,x_C}^{a_C*}
\fbra{\Omega} f_{C_R}f_{B_L}f_{A_L}\fket{\psi}
\Big|^2.
\end{aligned}
\end{equation}
Evaluating the amplitudes using \cref{eq:statetwo}, we obtain $\forall a_i \in \{ 1_\pm\}$
\begin{equation}
\begin{aligned}
P_f^{AB}(a_A, a_B, a_C &\mid x_A, x_B, x_C )\\
&=\frac{1}{8}
\Big|
- u_{A,x_A}^{a_A*}u_{B,x_B}^{a_B*}u_{C,x_C}^{a_C*}
+ v_{A,x_A}^{a_A*}v_{B,x_B}^{a_B*}v_{C,x_C}^{a_C*}
- u_{A,x_A}^{a_A*}u_{B,x_B}^{a_B*}v_{C,x_C}^{a_C*}
+ v_{A,x_A}^{a_A*}v_{B,x_B}^{a_B*}u_{C,x_C}^{a_C*}
\Big|^2\\
& =
\frac{1}{8}
\Big|
- u_{A,x_A}^{a_A*}u_{B,x_B}^{a_B*}
\big(
u_{C,x_C}^{a_C*}+v_{C,x_C}^{a_C*}
\big)
+ v_{A,x_A}^{a_A*}v_{B,x_B}^{a_B*}
\big(
v_{C,x_C}^{a_C*}+u_{C,x_C}^{a_C*}
\big)
\Big|^2\\
& =
\frac{1}{8}
\left|
u_{C,x_C}^{a_C}+v_{C,x_C}^{a_C}
\right|^2
\cdot
\left|
\prod_{i \in \{A,B\}} u_{i,x_i}^{s_i}
-
\prod_{i \in \{A,B\}} v_{i,x_i}^{s_i}
\right|^2.
\end{aligned}
\end{equation}
Finally, summing over all outcomes $a_C$, and using \cref{eq:CoefficientsOrthonormality} , we obtain
\begin{equation}
   P_f^{AB}(a_A, a_B \mid x_A, x_B )
   =
   \frac{1}{4}
   \left|
   \prod_{i \in \{A,B\}} u_{i,x_i}^{s_i}
   -
   \prod_{i \in \{A,B\}} v_{i,x_i}^{s_i}
   \right|^2.
\end{equation}
Upon normalization, this yields \cref{eq:Dist_2}.

\subsection*{Bosonic (or distinguishable) particle case:}

Starting from the more general states of \cref{eq:qubit-state-psi-theta} and repeating the above calculations for the bosonic setting (or, equivalently, for distinguishable particles), one finds that the token-counting distributions coincide with those obtained in the fermionic case.

The difference arises in the conditional distributions, due to the absence of fermionic antisymmetry. For example in loop $ABC$, the resulting distribution is given by
\begin{equation}
P_b^{ABC}(a_A, a_B, a_C \mid (a_A, a_B, a_C) \in \{1_+, 1_- \}^3, x_A, x_B, x_C) =
\frac{1}{2}
\left|
\prod_{i\in V_l} u_{i, x_i}^{s_i}
+e^{\ii \sum_{\theta \in E_l } \varphi_\theta}
\prod_{i \in V_l} v_{i, x_i}^{s_i}
\right|^2.
\end{equation}
The same expression holds for the \(ACB\) loop.

In contrast, for a two-party loop, like $AB$ the relative phase between the two components of the conditional state differs from the fermionic case, leading to
\begin{equation}
    P_b^{AB}( a_A, a_B \mid (a_A, a_B) \in \{1_+, 1_- \}^2, x_A, x_B)
    =
    \frac{1}{2}
    \left|
    \prod_{i\in V_l} u_{i,x_i}^{s_i}
    + e^{\ii \sum_{\theta \in E_l } \varphi_\theta}
    \prod_{i\in V_l} v_{i,x_i}^{s_i}
    \right|^2.
\end{equation}
A similar expression holds for \(l \in \{AC,BC\}\).

\begin{remark}
This behavior admits a natural generalization to loops involving an arbitrary
number of parties. Loops of odd length have the same structure as the three-party
case, whereas loops of even length behave analogously to the two-party case. The
origin of this parity dependence is the fermionic reordering sign: already in the
two-party calculation, rewriting the state in the fixed ordering produces the
relative minus sign in \cref{eq:statetwo}, in contrast with the three-party case
\cref{eq:statethree}. This sign then propagates to the final conditional distribution. Such a reordering sign is absent in the bosonic, or
distinguishable-particle, setting.
\end{remark}

\begin{remark}
Looking at a single loop, one observes that when the length of the loop is odd, the fermionic distribution generated by the above game can be simulated by exactly the same strategy in a standard quantum (bosonic or distinguishable-particle) setting, with all qubit-state phases set to zero.
However, for even loops, the strategy must be modified in order to reproduce the fermionic distribution. One way to achieve this is to keep the same measurements but let one party (or more generally an odd number of parties) prepare a $\ket{\psi^-}$ state instead of $\ket{\psi^+}$. As can be seen from the bosonic calculations above, this compensates for the minus sign appearing in the fermionic distribution through a relative phase. 
Equivalently, one may keep all states equal to $\ket{\psi^+}$, but modify the measurements: for instance, one party (or, more generally, any odd number of parties) can flip the sign of one of the measurement coefficients for one of the conditional measurements, e.g., by changing the sign of the $v$-coefficients in the measurements corresponding to input $x=0$.
\end{remark}

\section{Full qubit quantum self-testing in a fixed loop of arbitrary size}
\label{appendix:FullSelfTesting}

This Supplemental Material is devoted to the proof of \cref{thm:fullselftesting}. 
In particular, we extend the partial self-testing result of \cite{sekatski2023partial} to a full self-testing argument by exploiting the additional information contained in the input-dependent part of the distribution. 

Let us briefly recall the trusted scenario of the qubit game in a fixed loop. Consider a loop $l$ of $N \geq 3$ parties, labeled sequentially as $1,\dots,N$. Each party $i$ prepares a bipartite maximally entangled state with relative phase $\varphi_i$,
\[
\ket{\psi_i}_{i_L,(i+1)_R}
= \frac{1}{\sqrt{2}}\bigl(\ket{01} + e^{\ii \varphi_i}\ket{10}\bigr),
\]
subject to the constraint that $\sum_{i=1}^{N} \varphi_i = \phi_l \in \{0,\pi\}$, and shares it with its left neighbor; see \cref{fig:QubitSelftesting}. The parties then receive the inputs and perform the measurement, with the PVM elements:
\begin{equation}
\begin{aligned}
    &\Pi_i^0=\ket{00}_{i_R i_L},\\
    &\Pi_i^2=\ket{11}_{i_R i_L},\\
    &\Pi_{i, x_i}^{1_+}=u^{+}_{i,x_i} \ket{01}_{i_R i_L} + v^{+}_{i,x_i} \ket{10}_{i_R i_L},\\
    &\Pi_{i, x_i}^{1_-}= u^{-}_{i,x_i} \ket{01}_{i_R i_L} + v^{-}_{i,x_i} \ket{10}_{i_R i_L}.
\end{aligned}
\end{equation}
Importantly, if the measurement outcomes of $\{ 1_\pm\}$ are coarse grained into a single outcome $1$, the measurement becomes independent of the input bits and corresponds to a token counting measurement~\cite{PhysRevA.105.022408}:
\begin{equation}
\begin{aligned}
\label{eq:QubitTokenCount}
    &\Pi_i^0=\ketbra{00}{00}_{i_R i_L},\\
    &\Pi_i^2= \ketbra{11}{11}_{i_R i_L},\\
    &\Pi_i^1=\ketbra{01}{01}_{i_R i_L} + \ketbra{10}{10}_{i_R i_L} .
\end{aligned}
\end{equation}
Operationally, this corresponds to measuring the local particle number in the fermionic case, that is, counting how many qubits in the state $\ket{1}$ a party has obtained. 

The observed coarse grained distribution is independent of the input choice and coincides with the token-counting scenario of~\cite{sekatski2023partial}. We note that the present situation is slightly different from that of~\cite{sekatski2023partial}: in their setting there are no inputs, whereas here the inputs are given to the parties before they perform their measurements.

Nevertheless, for any fixed value of the inputs, one can apply the partial self-testing result of~\cite{sekatski2023partial}.
It follows from the observed coarse grained statistics that the Hilbert space of each party $i$ decomposes into left and right subsystems, each of which further decomposes into system and junk components. We denote the corresponding spaces by $S_{i_R}$ and $J_{i_R}$ on the right, and by $S_{i_L}$ and $J_{i_L}$ on the left, where the $S_{i_R}, S_{i_L}$ are two-dimensional qubit systems. Moreover, the states are partially self-tested to take the form
\begin{equation}
\begin{aligned}
\label{eq:PartialStateTC}
\ket{\hat{\psi}_i}_{i, i+1}&= \frac{1}{\sqrt{2}} \big ( \ket{01}_{S_{i_L}S_{(i+1)_R}} \otimes \ket{\zeta_i^{c}}_{J_{i_L}J_{(i+1)_R}} + \ket{10}_{S_{i_L}S_{(i+1)_R}} \otimes \ket{\zeta_i^{a}}_{J_{i_L}J_{(i+1)_R}} \big ),
\end{aligned}
\end{equation}
up to local isometries, where $\ket{\zeta_i^{c}}$ and $\ket{\zeta_i^{a}}$ denote the clockwise and anticlockwise junk states, respectively, corresponding to whether the token (encoded in the system component) propagates clockwise or anticlockwise. Here, we use the hat notation and replace $\ket{\psi_i}$ with $\ket{\hat \psi_i}$ to denote the states derived by the partial token-counting self-testing of~\cite{sekatski2023partial}. Adopting the same notation, the coarse grained measurement elements are self-tested as
\begin{equation}
\begin{aligned}
\label{eq:SelfTestQubitTokenCount}
 \hat{\Pi}_i^{0} &= \ketbra{00}{00}_{S_{i_R}S_{i_L}} \otimes \id_{J_{i_R}, J_{i_L}}, \\
\hat{\Pi}_i^{2} &= \ketbra{11}{11}{S_{i_R}S_{i_L}} \otimes \id_{J_{i_R} J_{i_L}}, \\
\hat{\Pi}_i^{1} &= (\ketbra{01}{01}+ \ketbra{10}{10} ){S_{i_R}S_{i_L}} \otimes \id_{J_{i_R}, J_{i_L}} .
\end{aligned}
\end{equation}
So far, we have obtained the partially self-tested states and self-tested coarse grained measurements for an arbitrary but fixed value of the inputs. We now argue why these self-tested states and coarse grained measurements can are the same for all input values.

First, since the parties have access to the inputs only after the state preparation step, the states cannot depend on the inputs. 

Second, the self-tested measurements can, in principle, depend on the inputs, since the inputs are given before the measurement stage. However, from \cref{eq:SelfTestQubitTokenCount}, after coarse graining, there are no free parameter left in the measurement operators that can possibly depend on inputs. The only possible input dependence could therefore appear in the way the system and junk subspaces are identified, equivalently in the choice of the local self-testing isometries.

Importantly, the coarse grained token-counting distribution is independent of the inputs. Since, the coarse grained trusted strategy, consists of the input-independent states and the input-independent coarse grained measurements. Hence, the local isometry for each input value can be chosen independent of the input values. This allows us to choose common local isometries for all input values, and consequently a common system--junk decomposition for all input values.

We now extend this partial self-testing to get a full self-testing by restricting to the event that all the outputs are in $\{ 1_\pm\}$. Note that this event is in fact equivalent to all parties obtaining exactly one particle (or only one qubit with state $\ket{1}$). Conditioned on this event, the trusted global state reduces to 
\begin{equation}
\ket{\psi}^{\mathrm{cond}}
=
\frac{1}{\sqrt{2}}
\big(
\ket{01}_{1_L 1_R}\cdots\ket{01}_{N_L N_R}
+
e^{\ii \phi_l}\ket{10}_{1_L 1_R}\cdots\ket{10}_{N_L N_R}
\big),
\end{equation}
which corresponds to a $\mathrm{GHZ}^+$ or $\mathrm{GHZ}^-$ state, depending on $\phi_l$ being 0 or $\pi$, shared among the $N$ parties.
The PVM elements of the trusted measurement in case of this event are given by
\begin{equation}
\begin{aligned}
    \ket{m^{+}_{i,x_i}}
    &:= \Pi_{i, x_i}^{1_+}= u^{+}_{i,x_i} \ket{01}_{i_R i_L} + v^{+}_{i,x_i} \ket{10}_{i_R i_L}\\
    \ket{m^{-}_{i,x_i}}
    &:=\Pi_{i, x_i}^{1_-}= u^{-}_{i,x_i} \ket{01}_{i_R i_L} + v^{-}_{i,x_i} \ket{10}_{i_R i_L}.
\end{aligned}
\end{equation}
It is convenient to define the following observable in the subspace conditioned on the event that $\forall i, \, a_i \in \{1_\pm \}$ as
\[
M_{x_i}^{(i)} = \ket{m_{i,x_i}^{+}}\bra{m_{i,x_i}^{+}} - \ket{m_{i,x_i}^{-}}\bra{m_{i,x_i}^{-}},
\]
which can explicitly be written as,
\begin{equation}
\begin{aligned}
M_{x_i}^{(i)}
&=
\begin{pNiceMatrix}[first-row,last-col]
\ket{01} & \ket{10} &  \\
|u_{i,x_i}^{+}|^2 - |u_{i,x_i}^{-}|^2 & u_{i,x_i}^{+} v_{i,x_i}^{+*} - u_{i,x_i}^{-} v_{i,x_i}^{-*} & \bra{01} \\
v_{i,x_i}^{+} u_{i,x_i}^{+*} - v_{i,x_i}^{-} u_{i,x_i}^{-*} &|v_{i,x_i}^{+}|^2 - |v_{i,x_i}^{-}|^2 & \bra{10}
\end{pNiceMatrix}
\end{aligned}
\end{equation}

With the choice of coefficients specified in \cref{eq:meas_coeff_forA,eq:meas_coeff_forother}, these measurements correspond to special Pauli observables. In particular, for party $i \in \{ 2, \cdots, N\} $:
\begin{equation}
\begin{aligned}
M_0^{(i)}=\ketbra{m_{i,0}^{+}}{m_{i,0}^{+}} - \ketbra{m_{i,0}^{-}}{m_{i,0}^{-}}
&=\begin{pNiceMatrix}[first-row,last-col]
\ket{01} & \ket{10} & \\
 1 & 0 & \bra{01} \\
 0 & -1 & \bra{10}
\end{pNiceMatrix}
:= Z, \\
M_1^{(i)}=\ketbra{m_{i,1}^{+}}{m_{i,1}^{+}} - \ketbra{m_{i,1}^{-}}{m_{i,1}^{-}}
&=
\begin{pNiceMatrix}[first-row,last-col]
 \ket{01} & \ket{10} & \\
 0 & 1 & \bra{01}\\
 1 & 0 & \bra{10}
\end{pNiceMatrix}
:= X.
\end{aligned}
\end{equation}
and for party $i=1$, we obtain
\begin{equation}
\begin{aligned}
M_0^{(i)}=\ketbra{m_{i,0}^{+}}{m_{i,0}^{+}} - \ketbra{m_{i,0}^{-}}{m_{i,0}^{-}}
&=
\begin{pNiceMatrix}[first-row,last-col]
\ket{01} & \ket{10} & \\
 1 & 1 & \bra{01} \\
 1 & -1 & \bra{10}
\end{pNiceMatrix}= \frac{X+Z}{\sqrt{2}}, \\[10pt]
M_1^{(i)}=\ketbra{m_{i,1}^{+}}{m_{i,1}^{+}} - \ketbra{m_{i,1}^{-}}{m_{i,1}^{-}}
&=\frac{1}{\sqrt{2}}
\begin{pNiceMatrix}[first-row,last-col]
\ket{01} & \ket{10} & \\
 1 & 1 & \bra{01} \\
 1 & -1 & \bra{10}
\end{pNiceMatrix}= \frac{X-Z}{\sqrt{2}}.
\end{aligned}
\end{equation}

These correspond precisely to the $\mathrm{GHZ}^{\pm}$ self-testing scenario described in~\cite{baccari2020scalable}, implying that both the underlying state and the measurements are uniquely determined, up to local isometries.

To make this explicit, we introduce two Bell inequalities based on the statistics restricted to the event $\forall i, \,  a_i \in \{ 1_\pm\}$:
\begin{align}
    &\text{Bell inequality \Romannum{1}:  }\quad \Big\langle B_1 \equiv (N-1)(M_0^{(1)}+M_1^{(1)})M_1^{(2)} \cdots M_1^{(N)}+ \sum_{i=2}^N  (M_0^{(1)}-M_1^{(1)})M_0^{(i)} \Big\rangle \leq \beta_C,
\label{eq:Bell_1}    
\\
   &\text{Bell inequality \Romannum{2}:  }\quad\Big\langle B_2 \equiv -(N-1)(M_0^{(1)}+M_1^{(1)})M_1^{(2)} \cdots M_1^{(N)}+ \sum_{i=2}^N  (M_0^{(1)}-M_1^{(1)})M_0^{(i)} \Big\rangle \leq \beta_C,
\label{eq:Bell_2}
\end{align}
with classical and quantum bounds given by $\beta_C = 2(N-1)$ and $\beta_Q = 2(N-1)\sqrt{2}$, respectively, and the averages in \cref{eq:Bell_1,eq:Bell_2} are taken with respect to the conditional state. With the trusted strategy described above, the conditional distribution maximally violates Bell inequality \Romannum{1} (attaining $\beta_Q$) when $\phi_l = 0$, corresponding to a $\mathrm{GHZ}^+$ state. Similarly, Bell inequality \Romannum{2} is maximally violated when $\phi_l = \pi$, corresponding to a $\mathrm{GHZ}^-$ state. 
Note that this association of Bell inequalities~\Romannum{1} and~\Romannum{2}
with $\mathrm{GHZ}^+$ and $\mathrm{GHZ}^-$, respectively, is meaningful only for
the fixed measurement choices specified above.

Now consider the untrusted conditional measurements, $\hat{M}_{x_i}^{(i)}$, that is applied by  party $i$ conditioned on
obtaining exactly one token. This occurs when party $i-1$ sends its token to $i$ while $i$ sends its token to $i+1$, or when both $i-1$ and $i$ retain their tokens. In this case, partial self-testing, \cref{eq:PartialStateTC}, implies that the global state takes the form
\begin{equation}
\begin{aligned}
\ket{\hat{\psi}_i}^\mathrm{cond} &= \frac{1}{\sqrt{2}} \Big ( \ket{01}_{S_{(i-1)_L}S_{i_R}} \ket{01}_{S_{i_L},S_{(i+1)_R}} \otimes \ket{\zeta_{i-1}^{c}}_{J_{(i-1)_L}J_{i_R}} \otimes \ket{\zeta_i^{c}}_{J_{i_L}J_{(i+1)_R}}  \\
&\qquad \qquad +\ket{10}_{S_{(i-1)_L}S_{i_R}}  \ket{10}_{S_{i_L}S_{(i+1)_R}} \otimes \ket{\zeta_{i-1}^{a}}_{J_{(i-1)_L}J_{i_R}} \otimes \ket{\zeta_i^{a}}_{J_{i_L}J_{(i+1)_R}} \Big )  \\
& \quad \bigotimes_{j\notin \{i, i-1\}}\frac{1}{\sqrt{2}} \Big ( \ket{01}_{S_{j_L}S_{(j+1)_R}} \otimes \ket{\zeta_j^{c}}_{J_{j_L}J_{(j+1)_R}} + \ket{10}_{S_{j_L}S_{(j+1)_R}} \otimes \ket{\zeta_j^{a}}_{J_{j_L}J_{(j+1)_R}} \Big ).
\end{aligned}
\end{equation}
For completeness, we also define the global conditional state corresponding to the event that all parties obtain one token. By the partial self-testing result, this state is given by
\begin{equation}
\begin{aligned}
\label{eq:ConditionalState}
\ket{\hat{\psi}}^\mathrm{cond} = \frac{1}{\sqrt{2}}\Big( \ket{01, \cdots, 01}_{S_{1_R} S_{1_L} \cdots S_{N_R} S_{N_L}} \otimes \ket{\zeta_1^{a}}_{J_{1_L} J_{2_R}} \otimes \cdots \otimes \ket{\zeta_N^{a}}_{J_{N_L} J_{1_R}}  \\ +\ket{10, \cdots, 10}_{S_{1_R} S_{1_L} \cdots S_{N_R} S_{N_L}}\otimes \ket{\zeta_1^{c}}_{J_{1_L} J_{2_R}}\otimes \cdots \otimes \ket{\zeta_N^{c}}_{J_{N_L} J_{1_R}}  \Big).
\end{aligned}
\end{equation}
Note that, from the perspective of party $i$, the states $\ket{\hat{\psi}}^{\mathrm{cond}}$ and $\ket{\hat{\psi}_i}^{\mathrm{cond}}$ are equivalent. This is the state on which the conditional measurement of party $i$, $\hat{M}_{x_i}^{(i)}$, acts.

Maximal violation of either Bell inequalities identifies the stabilizers of the conditional global state, $\ket{\hat{\psi}}^{\rm cond}$. Define
\begin{equation}
\begin{aligned}
\label{eq:observableToXZ}
\mathcal{X}_{1} &= \frac{\hat{M}^{(1)}_0 + \hat{M}^{(1)}_1}{\sqrt{2}} &\qquad \mathcal{Z}_{1} &= \frac{\hat{M}^{(1)}_0 - \hat{M}^{(1)}_1}{\sqrt{2}} \\
\mathcal{X}_{i} &= \hat{M}^{(i)}_1  &\qquad \mathcal{Z}_{i} &= \hat{M}^{(i)}_0 \qquad \forall i \in \{2,..., N \}.
\end{aligned}
\end{equation}
The following lemma formalizes the resulting stabilizer structure. 

\begin{lemma} \label{lem:FirstBell}
If Bell inequality \Romannum{1}, \cref{eq:Bell_1}, is maximally violated, $\mathcal{X}_1 \cdots \mathcal{X}_N$ is a stabilizer of the conditional state:
\begin{equation}
\label{eq:stab1}
     \mathcal{X}_1 \cdots \mathcal{X}_N \ket{\hat{\psi}}^\mathrm{cond}= \ket{\hat{\psi}}^\mathrm{cond},
\end{equation}
and $\forall i \in \{1, \cdots, N\}$:
\begin{equation}
\begin{aligned}
\mathcal{X}_i^{2} \ket{\hat{\psi}}^\mathrm{cond}= \mathcal{Z}_i^{2} \ket{\hat{\psi}}^\mathrm{cond}=\ket{\hat{\psi}}^\mathrm{cond}, \quad
\{ \mathcal{X}_i, \mathcal{Z}_i \} \ket{\hat{\psi}}^\mathrm{cond}= 0,
\end{aligned}
\end{equation}
\end{lemma}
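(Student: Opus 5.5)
We will prove the lemma with a sum-of-squares (SOS) decomposition of the shifted Bell operator $\beta_Q\id - B_1$, as in~\cite{baccari2020scalable}. Throughout, the operators act on the conditional subspace and $\ket{\hat\psi}^{\rm cond}$ is the state on which they act.

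First we record the algebraic facts that hold for arbitrary untrusted measurements. Restricted to the one-token subspace, each $\hat M^{(i)}_{x_i}$ is a difference of two orthogonal projectors that sum to the identity there. Hence $(\hat M^{(i)}_{x_i})^2=\id$ and $\hat M^{(i)}_{x_i}$ is Hermitian. Operators of different parties act on different tensor factors, so they commute.

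Next we introduce the rescaled party-1 operators $\mathcal X_1,\mathcal Z_1$ of \cref{eq:observableToXZ}, so that $M_0^{(1)}+M_1^{(1)}=\sqrt2\,\mathcal X_1$ and $M_0^{(1)}-M_1^{(1)}=\sqrt2\,\mathcal Z_1$. The Bell operator then becomes
\[
B_1=\sqrt2\Big[(N-1)\mathcal X_1\mathcal X_2\cdots\mathcal X_N+\sum_{i=2}^N\mathcal Z_1\mathcal Z_i\Big].
\]
Using $\hat M^2=\id$ for every party, we will verify the identity
\[
\beta_Q\id-B_1=\frac{N-1}{\sqrt2}\big(\id-\mathcal X_1\cdots\mathcal X_N\big)^2+\frac{1}{\sqrt2}\sum_{i=2}^N\big(\id-\mathcal Z_1\mathcal Z_i\big)^2+R.
\]
The remainder $R$ comes from $\mathcal X_1^2$ and $\mathcal Z_1^2$ not being the identity. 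Since $\mathcal X_1^2=\id+\tfrac12\{\hat M_0^{(1)},\hat M_1^{(1)}\}$ and $\mathcal Z_1^2=\id-\tfrac12\{\hat M_0^{(1)},\hat M_1^{(1)}\}$, the weights $(N-1)$ on both kinds of squares are balanced, the anticommutator terms cancel, and $R=0$. We will check this bookkeeping explicitly; it is the crux of the algebra.

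With the SOS in hand, maximal violation $\langle B_1\rangle=\beta_Q$ forces each square to annihilate the state. This gives $\mathcal X_1\cdots\mathcal X_N\ket{\hat\psi}^{\rm cond}=\ket{\hat\psi}^{\rm cond}$, which is \cref{eq:stab1}, and $\mathcal Z_1\mathcal Z_i\ket{\hat\psi}^{\rm cond}=\ket{\hat\psi}^{\rm cond}$ for every $i\ge2$.

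For parties $i\ge2$, $\mathcal X_i^2=\mathcal Z_i^2=\id$ holds trivially. The remaining task, and the main obstacle, is the party-1 relations $\mathcal X_1^2\ket{\hat\psi}^{\rm cond}=\mathcal Z_1^2\ket{\hat\psi}^{\rm cond}=\ket{\hat\psi}^{\rm cond}$, together with anticommutation for every party.

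For party 1 we will argue as follows.
1. From $\mathcal Z_1\mathcal Z_i\ket{\hat\psi}^{\rm cond}=\ket{\hat\psi}^{\rm cond}$, apply $\mathcal Z_i$ (which commutes with $\mathcal Z_1$ and squares to $\id$) to get $\mathcal Z_1\ket{\hat\psi}^{\rm cond}=\mathcal Z_i\ket{\hat\psi}^{\rm cond}$.
2. Applying $\mathcal Z_1$ again and using commutation, $\mathcal Z_1^2\ket{\hat\psi}^{\rm cond}=\mathcal Z_i\mathcal Z_1\ket{\hat\psi}^{\rm cond}=\ket{\hat\psi}^{\rm cond}$.
3. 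The same manipulation applied to the $\mathcal X$-stabilizer gives $\mathcal X_1\ket{\hat\psi}^{\rm cond}=\mathcal X_2\cdots\mathcal X_N\ket{\hat\psi}^{\rm cond}$, and hence $\mathcal X_1^2\ket{\hat\psi}^{\rm cond}=\ket{\hat\psi}^{\rm cond}$.
4. Subtracting the two relations gives $\{\hat M_0^{(1)},\hat M_1^{(1)}\}\ket{\hat\psi}^{\rm cond}=0$. This implies $\{\mathcal X_1,\mathcal Z_1\}\ket{\hat\psi}^{\rm cond}=\tfrac12\big((\hat M_0^{(1)})^2-(\hat M_1^{(1)})^2\big)\ket{\hat\psi}^{\rm cond}=0$, because both squares equal $\id$.

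For $i\ge2$ we will transfer anticommutation through the stabilizers. Write $\ket\psi=\ket{\hat\psi}^{\rm cond}$. Then
\[
\mathcal X_i\mathcal Z_i\ket\psi=\mathcal X_i\mathcal Z_1\ket\psi=\mathcal Z_1\mathcal X_i\ket\psi=\mathcal Z_1\mathcal X_1\textstyle\prod_{j\ne1,i}\mathcal X_j\ket\psi,
\]
using step 1 and the relation $\mathcal X_i\ket\psi=\mathcal X_1\prod_{j\ne1,i}\mathcal X_j\ket\psi$ from \cref{eq:stab1}. Similarly,
\[
\mathcal Z_i\mathcal X_i\ket\psi=\mathcal X_1\mathcal Z_1\textstyle\prod_{j\ne1,i}\mathcal X_j\ket\psi.
\]
Here we must justify applying the party-1 anticommutation to $\prod_{j\ne1,i}\mathcal X_j\ket\psi$ rather than to $\ket\psi$ itself. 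That product commutes with party-1 operators, so the anticommutation of party 1 on $\ket\psi$ carries over. Adding the two displays then gives $\{\mathcal X_i,\mathcal Z_i\}\ket\psi=0$.
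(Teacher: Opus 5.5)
Your proposal is correct and follows essentially the same route as the paper. It uses the same sum-of-squares decomposition with $S_1=\mathcal{X}_1\cdots\mathcal{X}_N$ and $S_i=\mathcal{Z}_1\mathcal{Z}_i$ (you additionally check that the remainder vanishes, which the paper only asserts), the same derivation of $\mathcal{Z}_1^2\ket{\hat\psi}^{\rm cond}=\mathcal{X}_1^2\ket{\hat\psi}^{\rm cond}=\ket{\hat\psi}^{\rm cond}$ from the stabilizers, and the same transfer of party-1 anticommutation to parties $i\ge 2$. One small note: $\{\mathcal{X}_1,\mathcal{Z}_1\}=(\hat M_0^{(1)})^2-(\hat M_1^{(1)})^2$ with no factor $\tfrac12$, and this vanishes as an operator identity, so your subtraction step giving $\{\hat M_0^{(1)},\hat M_1^{(1)}\}\ket{\hat\psi}^{\rm cond}=0$ is not needed.
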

\begin{proof}
It is given in Supplemental Material~\ref{appendix:FirstBell}.
\end{proof}

\begin{lemma} \label{lem:SecondBell}
If Bell inequality \Romannum{2}, \cref{eq:Bell_2}, is maximally violated, then $-\mathcal{X}_1 \cdots \mathcal{X}_N $ is a stabilizer of the conditional state:
\begin{equation}
\label{eq:stab2}
    -\mathcal{X}_1 \cdots \mathcal{X}_N   \ket{\hat{\psi}}^\mathrm{cond}= \ket{\hat{\psi}}^\mathrm{cond},
\end{equation}
and $\forall i \in \{1, \cdots, N\}$:
\begin{equation}
\begin{aligned}
\mathcal{X}_i^{2} \ket{\hat{\psi}}^\mathrm{cond}= \mathcal{Z}_i^{2} \ket{\hat{\psi}}^\mathrm{cond}=\ket{\hat{\psi}}^\mathrm{cond}, \quad
\{ \mathcal{X}_i, \mathcal{Z}_i \} \ket{\hat{\psi}}^\mathrm{cond}= 0,
\end{aligned}
\end{equation}
\end{lemma}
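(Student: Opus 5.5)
The plan is to run the standard sum-of-squares (SOS) argument for Bell inequality \Romannum{2}, mirroring the proof of \cref{lem:FirstBell}. The map $M^{(j)}_1 \mapsto -M^{(j)}_1$ for a single party $j\geq 2$ sends $B_2$ to $B_1$, since only the first term contains $M^{(j)}_1$. So the most economical route is a reduction. Define $\tilde M^{(2)}_1 := -\hat M^{(2)}_1$ and keep all other observables fixed. These are still $\pm1$-valued observables on the conditional subspace, because $\hat\Pi^{1_+}$ and $\hat\Pi^{1_-}$ are just relabelled for $x_2=1$. Under this relabelling, $\langle B_2\rangle$ for the original observables equals $\langle B_1\rangle$ for the new ones. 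Maximal violation of \Romannum{2} is therefore maximal violation of \Romannum{1} for the relabelled strategy.

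Applying \cref{lem:FirstBell} to the relabelled observables gives $\mathcal{X}_1\tilde{\mathcal{X}}_2\mathcal{X}_3\cdots\mathcal{X}_N\ket{\hat\psi}^{\rm cond}=\ket{\hat\psi}^{\rm cond}$ with $\tilde{\mathcal X}_2=-\mathcal X_2$. This is exactly \cref{eq:stab2}. The remaining conditions are insensitive to the sign flip: $\tilde{\mathcal X}_2^2=\mathcal X_2^2$, and $\{\tilde{\mathcal X}_2,\mathcal Z_2\}=-\{\mathcal X_2,\mathcal Z_2\}$, which still annihilates the state. Hence $\mathcal X_i^2,\mathcal Z_i^2$ act as the identity on $\ket{\hat\psi}^{\rm cond}$ and $\{\mathcal X_i,\mathcal Z_i\}\ket{\hat\psi}^{\rm cond}=0$ for all $i$. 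This includes $i=1$, where $\mathcal X_1,\mathcal Z_1$ are unchanged.

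If a self-contained argument is preferred, I would instead write out the SOS decomposition directly:
\begin{equation}
\beta_Q\id - B_2=\frac{N-1}{\sqrt2}\Big(\id+\mathcal X_1\mathcal X_2\cdots\mathcal X_N\Big)^2\Big/ 2+\sum_{i=2}^N\frac{1}{\sqrt2}\Big(\id-\mathcal Z_1\mathcal Z_i\Big)^2\Big/2 ,
\end{equation}
which holds up to the normalisation used in \cref{lem:FirstBell}. Here $\sqrt2\,\mathcal X_1=\hat M_0^{(1)}+\hat M_1^{(1)}$ and $\sqrt2\,\mathcal Z_1=\hat M_0^{(1)}-\hat M_1^{(1)}$. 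Maximal violation forces each square to annihilate the state, which gives $-\mathcal X_1\cdots\mathcal X_N\ket{\hat\psi}^{\rm cond}=\ket{\hat\psi}^{\rm cond}$ and $\mathcal Z_1\mathcal Z_i\ket{\hat\psi}^{\rm cond}=\ket{\hat\psi}^{\rm cond}$. One subtlety is that $\mathcal X_1,\mathcal Z_1$ are not a priori unitary. The identity $\mathcal X_1^2+\mathcal Z_1^2=(\hat M_0^{(1)})^2+(\hat M_1^{(1)})^2=2\id$ on the conditional subspace must be used to make the SOS valid.

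The main obstacle is deriving the algebraic relations for party $1$, since only the party-$j\ge2$ relations are immediate ($\hat M^{(j)}_x$ are genuine $\pm1$ observables). For party 1, I would follow the usual Baccari et al.\ route. From the two stabilizer relations, $\mathcal X_1$ and $\mathcal Z_1$ acting on the state equal unitary operators of the other parties, up to sign. This gives $\mathcal X_1^2\ket{\hat\psi}^{\rm cond}=\mathcal Z_1^2\ket{\hat\psi}^{\rm cond}=\ket{\hat\psi}^{\rm cond}$. Combining this with $\mathcal X_1^2+\mathcal Z_1^2=2\id$ and $\{\mathcal X_1,\mathcal Z_1\}=\big((\hat M^{(1)}_0)^2-(\hat M^{(1)}_1)^2\big)=0$ on the one-token subspace yields the anticommutation. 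The relations for $j\ge2$ follow by the same move-to-the-other-parties trick. Whichever route is taken, everything is inherited verbatim from the \Romannum{1} case, so the reduction argument should suffice.
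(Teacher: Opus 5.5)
Your argument is correct, but your main route differs from the paper's. The paper proves this lemma by re-running the argument of \cref{lem:FirstBell} with the sign changed:
\begin{itemize}
\item it writes $\beta_Q\id-B_2=\frac{1}{\sqrt2}\big[(N-1)(\id-S_1)^2+\sum_{i=2}^N(\id-S_i)^2\big]$ with $S_1=-\mathcal X_1\cdots\mathcal X_N$ and $S_i=\mathcal Z_1\mathcal Z_i$, and reads off the stabilizers;
\item it obtains $\mathcal Z_1^2\ket{\hat\psi}^{\rm cond}=\mathcal X_1^2\ket{\hat\psi}^{\rm cond}=\ket{\hat\psi}^{\rm cond}$ by moving operators onto the other parties;
\item it gets $\{\mathcal X_1,\mathcal Z_1\}=0$ from the definitions, and transfers the anticommutation to parties $i\geq 2$ through $S_1$ and $\mathcal Z_1\mathcal Z_i$.
\end{itemize}
This is essentially your fallback route.

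Your reduction instead uses a symmetry. Relabeling $1_+\leftrightarrow 1_-$ for party $2$ at $x_2=1$ sends $B_2$ to $B_1$, because only the correlator term contains $\hat M^{(2)}_1$. So \cref{lem:FirstBell}, which holds for arbitrary $\pm1$-valued conditional observables, applies to the relabeled strategy. Every conclusion is invariant under $\mathcal X_2\mapsto-\mathcal X_2$. This is shorter and avoids duplicating the proof. It also makes explicit that inequalities \Romannum{1} and \Romannum{2} (equivalently the $\mathrm{GHZ}^+$ and $\mathrm{GHZ}^-$ targets) differ only by a single-party outcome relabeling. The paper's version, in turn, is self-contained and does not need to argue that \cref{lem:FirstBell} applies to modified strategies.

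In your fallback, you correctly note that the SOS identity requires $\mathcal X_1^2+\mathcal Z_1^2=(\hat M^{(1)}_0)^2+(\hat M^{(1)}_1)^2=2\id$; the paper uses this only implicitly. However, your displayed decomposition carries a spurious extra factor $1/2$ on each term. Using $(\mathcal X_1\cdots\mathcal X_N)^2=\mathcal X_1^2$ and $(\mathcal Z_1\mathcal Z_i)^2=\mathcal Z_1^2$ on the one-token subspace, the correct prefactors are $\frac{N-1}{\sqrt2}$ and $\frac{1}{\sqrt2}$, as in the paper. This slip is harmless, since only the positivity of each square is used.
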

\begin{proof}
It is given in Supplemental Material~\ref{appendix:SecondBell}.
\end{proof}

Before proceeding, we define the subspace on which the measurements, $\hat{M}_{x_i}^{(i)}$, operates. Specifically, the measurement is performed on the support of the reduced state obtained by tracing out all subsystems except those of party $i$:
\begin{equation}\label{eq:def-H-i-cond}
H_i^{\mathrm{cond}}
:= \operatorname{supp}\!\left(
\operatorname{Tr}_{\{S_j, J_j\}_{j \neq i}}
\big[
\,|\hat{\psi}_i\rangle^{\mathrm{cond}}\langle \hat{\psi}_i|^{\mathrm{cond}}
\big]
\right)=\operatorname{supp}\!\left(
\operatorname{Tr}_{\{S_j, J_j\}_{j \neq i}}
\big[
\,|\hat{\psi}\rangle^{\mathrm{cond}}\langle \hat{\psi}|^{\mathrm{cond}}
\big]
\right).
\end{equation}
This evaluates to
\begin{equation}
\begin{aligned}
H_i^{\mathrm{cond}}
= \operatorname{supp}\!\Big(
&\ketbra{10}{10}_{S_{i_R} S_{i_L}} \otimes 
\operatorname{Tr}_{J_{(i-1)_L}} \ketbra{\zeta_{i-1}^c}{\zeta_{i-1}^c}
\otimes 
\operatorname{Tr}_{J_{(i+1)_R}} \ketbra{\zeta_i^c}{\zeta_i^c} \\
&+ 
\ketbra{01}{01}_{S_{i_R} S_{i_L}} \otimes 
\operatorname{Tr}_{J_{(i-1)_L}} \ketbra{\zeta_{i-1}^a}{\zeta_{i-1}^a}
\otimes 
\operatorname{Tr}_{J_{(i+1)_R}} \ketbra{\zeta_i^a}{\zeta_i^a}
\Big),
\end{aligned}
\end{equation}
and since the two terms are orthogonal, this simplifies to
\begin{equation}
\begin{aligned}
H_i^{\mathrm{cond}}
= {}&
\operatorname{supp}\!\Big(
\ketbra{10}{10}_{S_{i_R} S_{i_L}} \otimes 
\operatorname{Tr}_{J_{(i-1)_L}} \ketbra{\zeta_{i-1}^c}{\zeta_{i-1}^c}
\otimes 
\operatorname{Tr}_{J_{(i+1)_R}} \ketbra{\zeta_i^c}{\zeta_i^c}
\Big) \\
&\oplus
\operatorname{supp}\!\Big(
\ketbra{01}{01}_{S_{i_R} S_{i_L}} \otimes 
\operatorname{Tr}_{J_{(i-1)_L}} \ketbra{\zeta_{i-1}^a}{\zeta_{i-1}^a}
\otimes 
\operatorname{Tr}_{J_{(i+1)_R}} \ketbra{\zeta_i^a}{\zeta_i^a}
\Big) \\
= {}&
\mathrm{span}\{\ket{10}\}_{S_{i_R} S_{i_L}} \otimes J^c_{i_R} \otimes J^c_{i_L}
\;\oplus\;
\mathrm{span}\{\ket{01}\}_{S_{i_R} S_{i_L}} \otimes J^a_{i_R} \otimes J^a_{i_L},
\end{aligned}
\end{equation}
where
\begin{equation}
\begin{aligned}
J^c_{i_R} &:= \operatorname{supp}\!\left(
\operatorname{Tr}_{J_{(i-1)_L}} \ketbra{\zeta^c_{i-1}}{\zeta^c_{i-1}}
\right), \\
J^c_{i_L} &:= \operatorname{supp}\!\left(
\operatorname{Tr}_{J_{(i+1)_R}} \ketbra{\zeta^c_i}{\zeta^c_i}
\right), \\
J^a_{i_R} &:= \operatorname{supp}\!\left(
\operatorname{Tr}_{J_{(i-1)_L}} \ketbra{\zeta^a_{i-1}}{\zeta^a_{i-1}}
\right), \\
J^a_{i_L} &:= \operatorname{supp}\!\left(
\operatorname{Tr}_{J_{(i+1)_R}} \ketbra{\zeta^a_i}{\zeta^a_i}
\right).
\end{aligned}
\end{equation}

With this identification of conditional subspaces, \cref{lem:FirstBell,lem:SecondBell} imply that $\mathcal{X}_i^2 = \mathcal{Z}_i^2 = \id$ and $\{\mathcal{X}_i, \mathcal{Z}_i\} = 0$ on $H_i^{\mathrm{cond}}$.

\begin{lemma}
\label{lem:Zsep_SysJunk}
For all $i \in \{1,\dots,N\}$, the operator $\mathcal{Z}_i$ acts trivially on the junk subsystem within the conditional subspace $H_i^{\mathrm{cond}}$. More precisely, it has the form
\begin{equation}
\label{eq:Zsep}
\begin{aligned}
\mathcal Z_i\big|_{H_i^{\mathrm{cond}}} = Z_{S_i}\otimes \id_{J_i}\big|_{H_i^{\mathrm{cond}}},
\end{aligned}
\end{equation}
where $Z = (\ket{01}\bra{01} - \ket{10}\bra{10})_{S_{i_R} S_{i_L}}$.
\end{lemma}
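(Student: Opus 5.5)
The approach is to show that, on $H_i^{\mathrm{cond}}$, $\mathcal Z_i$ is diagonal in the decomposition
\[
H_i^{\mathrm{cond}}=\mathrm{span}\{\ket{10}\}\otimes J^c_{i_R}\otimes J^c_{i_L}\;\oplus\;\mathrm{span}\{\ket{01}\}\otimes J^a_{i_R}\otimes J^a_{i_L},
\]
and that it acts as $-\id$ on the first summand and as $+\id$ on the second. This gives exactly $Z_{S_i}\otimes\id_{J_i}$ restricted to $H_i^{\mathrm{cond}}$.

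\textbf{Step 1 (correlations).} I would extract the relevant correlations from the conditional statistics. In the trusted strategy, the conditional state is a GHZ state in the $\{\ket{01},\ket{10}\}$ local bases, and $Z_iZ_j$ has expectation $1$ for every pair of parties.

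For $i\ge2$, maximal violation of the Bell inequality (I or II) forces $\langle(M_0^{(1)}-M_1^{(1)})M_0^{(i)}\rangle=2$, that is, $\langle\mathcal Z_1\mathcal Z_i\rangle=\sqrt2\cdot\sqrt2/2\cdot\ldots=1$. Since $\|\mathcal Z_1\mathcal Z_i\|\le1$ and both operators square to the identity on the state (Lemmas \ref{lem:FirstBell} and \ref{lem:SecondBell}), this gives $\mathcal Z_i\ket{\hat\psi}^{\rm cond}=\mathcal Z_1\ket{\hat\psi}^{\rm cond}$.

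More usefully, I would compare directly with the partially self-tested structure. The observed statistics must match the trusted statistics, so $\langle \hat\Pi^{1_\pm}_{i,0}\rangle$ restricted to the conditional event coincides with the trusted one. In particular, the correlation between party $i$'s $x_i=0$ outcome and the \emph{token direction} must be perfect. The token direction is the projector $\ketbra{10}{10}_{S_{i_R}S_{i_L}}$ versus $\ketbra{01}{01}_{S_{i_R}S_{i_L}}$, which is already self-tested on every party via the coarse-grained token counting, \cref{eq:SelfTestQubitTokenCount}.

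\textbf{Step 2 (expectation of $\mathcal Z_i$ against $Z_{S_i}\otimes\id$).} Concretely, I would compute $\bra{\hat\psi}^{\rm cond}\mathcal Z_i\,(Z_{S_i}\otimes\id)\ket{\hat\psi}^{\rm cond}$. The operator $Z_{S_i}\otimes\id$ acts on $\ket{\hat\psi}^{\rm cond}$ identically to $Z_{S_j}\otimes\id$ for any other party $j$: in \cref{eq:ConditionalState} all $S$-registers are perfectly correlated (all $01$ or all $10$). So this equals $\langle\mathcal Z_i Z_{S_j}\rangle$.

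For $i\ge2$ I would choose $j$ so that the relevant statistics are observed. For instance, $Z_{S_j}$ is realised by $\mathcal Z_j$ when $j\ge2$, or I would use the Bell-derived relation $\mathcal Z_i\ket{\hat\psi}^{\rm cond}=\mathcal Z_1\ket{\hat\psi}^{\rm cond}$ chained through parties. Then $\langle \mathcal Z_i\mathcal Z_j\rangle=1$ for $i\neq j\ge2$ follows from Step 1, since $\mathcal Z_i\psi=\mathcal Z_1\psi=\mathcal Z_j\psi$.

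What remains is to tie one $\mathcal Z$ to the genuine $Z_{S}$. I expect this to be the main obstacle. My plan is to use that the trusted distribution has $\langle \mathcal Z_i\rangle$ paired with \emph{which term of the GHZ superposition is present}, and to exploit the orthogonality of the two branches. Concretely, $\mathcal Z_i\mathcal Z_j\psi=\psi$ with $\mathcal Z_j$ acting on a different party implies that $\mathcal Z_i$ maps each branch into a span compatible with the reduced support. I would then show that $\mathcal Z_i$ cannot mix the $\ket{10}$ and $\ket{01}$ branches without breaking $\mathcal Z_i^2=\id$ and the observed value $\langle\mathcal Z_i\rangle=0$ in the conditional post-selection on other parties' $Z$ outcomes. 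A cleaner route, if available, is to use that $\hat\Pi^{1_\pm}_{i,0}$ for $i\ge2$ have trusted coefficients $u=1,v=0$. The probability of outcome $1_+$ jointly with party $j$ finding the token in direction $a$ is then observable through the coarse-grained and fine-grained data, which pins down $\mathcal Z_i$ on each branch up to a sign.

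\textbf{Step 3 (from state to operator).} Once $\mathcal Z_i\ket{\hat\psi}^{\rm cond}=(Z_{S_i}\otimes\id)\ket{\hat\psi}^{\rm cond}$ is established, I would lift it to an operator identity on $H_i^{\mathrm{cond}}$. The Schmidt decomposition of $\ket{\hat\psi}^{\rm cond}$ across party $i$ versus the rest has full support on $H_i^{\mathrm{cond}}$ by definition, \eqref{eq:def-H-i-cond}. Hence an operator on party $i$ annihilating the state must vanish on that support, giving \eqref{eq:Zsep}. For $i=1$, I would use $\mathcal Z_1=(\hat M_0^{(1)}-\hat M_1^{(1)})/\sqrt2$ and the same Bell-derived relation with any $i\ge2$.
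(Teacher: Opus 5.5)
There is a genuine gap, exactly at the step you flag as the main obstacle: nothing in your plan ties $\mathcal Z_i$ to the token-direction operator $Z_{S_i}$. Step~2 is circular, since ``$Z_{S_j}$ is realised by $\mathcal Z_j$'' is the statement to be proved. The Bell-derived relations $\mathcal Z_i\ket{\hat\psi}^{\rm cond}=\mathcal Z_1\ket{\hat\psi}^{\rm cond}$ only relate the $\mathcal Z$'s to one another.

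Moreover, no argument that uses only the statistics of the event $\forall i,\ a_i\in\{1_\pm\}$ can close the gap. $\mathrm{GHZ}^{\pm}$ is invariant up to a phase under $X^{\otimes N}$, which maps $Z_j\mapsto -Z_j$ and $X_j\mapsto X_j$. So in the trusted strategy you can flip the sign of every $\mathcal Z_j$: relabel $1_+\leftrightarrow 1_-$ at $x_j=0$ for $j\ge 2$, and swap the two inputs of party~1. This leaves the coarse-grained data and the entire conditional distribution unchanged, yet gives $\mathcal Z_i=-Z_{S_i}$ on $H_i^{\rm cond}$. Hence orthogonality of the branches, $\mathcal Z_i^2=\id$, or post-selection on other parties' $\mathcal Z_j$ outcomes cannot fix the sign. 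Pinning $\mathcal Z_i$ down ``up to a sign'', as in your last suggestion, is precisely what is not enough.

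The missing ingredient is the fine-grained data \emph{outside} that event, which is what the paper uses. Condition on $a_i\in\{1_\pm\}$ together with $(a_{i-1},a_{i+1})=(2,0)$, resp.\ $(0,2)$. By the token-counting structure these events fix the branches of sources $i-1$ and $i$. So party $i$'s reduced state coincides with its reduced state on one branch of $\ket{\hat\psi}^{\rm cond}$, resp.\ the other. The trusted distribution gives $P(a_i=1_+\mid\cdot\,,x_i=0)=1$, resp.\ $0$, i.e.\ $\langle\mathcal Z_i\rangle=+1$ on one branch and $-1$ on the other. For $i\ge2$ the bound $-\id\le\mathcal Z_i\le\id$ then forces $\mathcal Z_i=\pm\id$ on the corresponding supports, which together give \eqref{eq:Zsep}.

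Your ``cleaner route'' gestures at this, but ``party $j$ finding the token in direction $a$'' is not an observed event. The token direction at party $i$ is revealed only through its neighbours' token counts.

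With this anchor in place the rest of your plan is sound. The lifting in Step~3 is correct. For $i=1$, your chain $\mathcal Z_1\ket{\hat\psi}^{\rm cond}=\mathcal Z_i\ket{\hat\psi}^{\rm cond}=Z_{S_i}\ket{\hat\psi}^{\rm cond}=Z_{S_1}\ket{\hat\psi}^{\rm cond}$ (any $i\ge2$) is a valid and simpler alternative to the paper's separate computation from both inputs of party~1.

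Two minor points: $\mathcal Z_1$ need not have norm at most $1$, and at maximal violation $\langle(M_0^{(1)}-M_1^{(1)})M_0^{(i)}\rangle=\sqrt2$, not $2$. The stabilizer relation you need is given directly by \cref{lem:FirstBell,lem:SecondBell}.
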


\begin{proof}
It is given in Supplemental Material~\ref{appendix:Zsep}.
\end{proof}

\begin{lemma}
\label{lem:X_HtoH}
For any $i \in \{1,\dots,N\}$, the operator $\mathcal{X}_i$ maps $H_i^{\mathrm{cond}}$ into itself.
\end{lemma}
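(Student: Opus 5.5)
The plan is to work with the stabilizer relations of \cref{lem:FirstBell,lem:SecondBell} restricted to the support $H_i^{\mathrm{cond}}$. First, on $H_i^{\mathrm{cond}}$ we have $\mathcal{X}_i^2=\id$, and $\mathcal{X}_i$ is Hermitian. Second, by \cref{lem:Zsep_SysJunk}, $\mathcal{Z}_i$ restricted to $H_i^{\mathrm{cond}}$ equals $Z_{S_i}\otimes\id_{J_i}$. This means $H_i^{\mathrm{cond}}$ is the direct sum of two eigenspaces: the $+1$ eigenspace $\mathrm{span}\{\ket{01}\}\otimes J^a_{i_R}\otimes J^a_{i_L}$ and the $-1$ eigenspace $\mathrm{span}\{\ket{10}\}\otimes J^c_{i_R}\otimes J^c_{i_L}$.

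The key tool is a support argument. The reduced state $\rho_i$ of $\ket{\hat\psi}^{\mathrm{cond}}$ on party $i$ has support exactly $H_i^{\mathrm{cond}}$. Any operator $O_i$ local to party $i$ therefore satisfies the following: $O_i$ maps $H_i^{\mathrm{cond}}$ into itself if and only if $O_i\ket{\hat\psi}^{\mathrm{cond}}$ has its party-$i$ reduced support contained in $H_i^{\mathrm{cond}}$. To see this, Schmidt-decompose $\ket{\hat\psi}^{\mathrm{cond}}=\sum_k\sqrt{\lambda_k}\ket{e_k}_i\ket{g_k}_{\bar i}$, with $\{\ket{e_k}\}$ spanning $H_i^{\mathrm{cond}}$. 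Then $O_i\ket{\hat\psi}^{\mathrm{cond}}=\sum_k\sqrt{\lambda_k}\,O_i\ket{e_k}\ket{g_k}$, and since the $\ket{g_k}$ are orthonormal, the party-$i$ support of this vector is $\mathrm{span}\{O_i\ket{e_k}\}=O_iH_i^{\mathrm{cond}}$.

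So it suffices to show that $\mathcal{X}_i\ket{\hat\psi}^{\mathrm{cond}}$ lies in $H_i^{\mathrm{cond}}\otimes(\text{rest})$. I would obtain this from the stabilizer $\pm\mathcal{X}_1\cdots\mathcal{X}_N\ket{\hat\psi}^{\mathrm{cond}}=\ket{\hat\psi}^{\mathrm{cond}}$ together with the anticommutation relation.

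\begin{itemize}
\item Multiply the stabilizer relation by $\mathcal{X}_i$. Since the $\mathcal{X}_j$ for different parties commute and each squares to the identity on the relevant support (applying the support argument party by party), we get
\[
\mathcal{X}_i\ket{\hat\psi}^{\mathrm{cond}}=\pm\prod_{j\neq i}\mathcal{X}_j\ket{\hat\psi}^{\mathrm{cond}}.
\]
\item The right-hand side acts as the identity on party $i$. Its party-$i$ reduced state therefore equals $\mathrm{Tr}_{\bar i}\big[(\prod_{j\ne i}\mathcal{X}_j)\ketbra{\hat\psi}{\hat\psi}(\prod_{j\ne i}\mathcal{X}_j)^\dagger\big]$. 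This reduced state is supported on $H_i^{\mathrm{cond}}$, because the operators act on the complement. Concretely, in the Schmidt form the vector is $\sum_k\sqrt{\lambda_k}\ket{e_k}\otimes O_{\bar i}\ket{g_k}$, whose party-$i$ support lies in $\mathrm{span}\{\ket{e_k}\}$.
\item Hence $\mathcal{X}_iH_i^{\mathrm{cond}}\subseteq H_i^{\mathrm{cond}}$.
\end{itemize}

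The subtle point is justifying the squaring step $\mathcal{X}_i^2\prod_{j\neq i}\mathcal{X}_j\ket{\hat\psi}=\prod_{j\neq i}\mathcal{X}_j\ket{\hat\psi}$. Applying $\mathcal{X}_i$ to both sides of the stabilizer relation gives $\mathcal{X}_i\ket{\hat\psi}^{\mathrm{cond}}=\pm\mathcal{X}_i^2\prod_{j\ne i}\mathcal{X}_j\ket{\hat\psi}^{\mathrm{cond}}$. Because $\mathcal{X}_i^2$ commutes with the other parties' operators, this equals $\pm\prod_{j\ne i}\mathcal{X}_j\,\mathcal{X}_i^2\ket{\hat\psi}^{\mathrm{cond}}=\pm\prod_{j\ne i}\mathcal{X}_j\ket{\hat\psi}^{\mathrm{cond}}$, using $\mathcal{X}_i^2\ket{\hat\psi}^{\mathrm{cond}}=\ket{\hat\psi}^{\mathrm{cond}}$ from the lemmas. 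This avoids needing $\mathcal{X}_i^2=\id$ on any larger space.

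I expect the main obstacle to be exactly this bookkeeping: making sure each manipulation uses only relations valid on the state, and that the support of $\mathrm{Tr}_{\bar i}$ is exactly $H_i^{\mathrm{cond}}$. The latter is given by \eqref{eq:def-H-i-cond}.
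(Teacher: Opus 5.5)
Your proposal is correct and follows the paper's proof. Both derive $\mathcal{X}_i\ket{\hat\psi}^{\mathrm{cond}}=\pm\prod_{j\neq i}\mathcal{X}_j\ket{\hat\psi}^{\mathrm{cond}}$ from the stabilizer relation of \cref{lem:FirstBell,lem:SecondBell} and then compare the party-$i$ reduced supports, where your Schmidt-decomposition argument spells out the one-line partial-trace step of the paper (the only cosmetic difference is that you multiply by $\mathcal{X}_i$ and use $\mathcal{X}_i^2\ket{\hat\psi}^{\mathrm{cond}}=\ket{\hat\psi}^{\mathrm{cond}}$, while the paper multiplies by $\prod_{j\neq i}\mathcal{X}_j$ and uses the same relation for $j\neq i$).
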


\begin{proof}
Assuming that either Bell inequality \Romannum{1} or \Romannum{2} is maximally violated, \cref{lem:FirstBell,lem:SecondBell} 
imply that \cref{eq:stab1} or \cref{eq:stab2} holds. Multiplying both sides by $\prod_{j \neq i} \mathcal{X}_j$ and using $\mathcal{X}_j^2 \ket{\psi}^{\mathrm{cond}} = \ket{\psi}^{\mathrm{cond}}$ for all $j$, we obtain
\begin{equation}
\mathcal{X}_i \ket{\psi}^{\mathrm{cond}} 
= \pm \left( \prod_{j \neq i} \mathcal{X}_j \right) \ket{\psi}^{\mathrm{cond}}.
\end{equation}
Then, taking the partial trace of both sides with respect to all parties except party $i$ and using the definition of $H_i^{\mathrm{cond}}$ from~\cref{eq:def-H-i-cond}, we find that $\mathcal X_i(H_i^{\mathrm{cond}})=H_i^{\mathrm{cond}}.$
\end{proof}

\begin{lemma} 
\label{lem:Xsep_SysJunk}
For any $i \in \{1,\dots,N\}$, the operator $\mathcal{X}_i$ restricted to the subspace $H_i^{\mathrm{cond}}$ has the form
\begin{equation}
\label{eq:Xsep}
\begin{aligned}
\mathcal{X}_i\big|_{H_i^{\mathrm{cond}}} &= \left( \frac{X + \ii Y}{2} \otimes T_i \;+\; \frac{X - \ii Y}{2} \otimes T_i^\dagger \right) \big|_{H_i^{\mathrm{cond}}},
\end{aligned}
\end{equation}
where 
\[
X = (\ket{10}\bra{01} + \ket{01}\bra{10})_{S_{i_R} S_{i_L}}, 
\qquad 
Y = (\ii \ket{10}\bra{01} - \ii \ket{01}\bra{10})_{S_{i_R} S_{i_L}},
\]
and $T_i$ is a unitary operator mapping $J^c_{i_L} \otimes J^c_{i_R}$ to $J^a_{i_L} \otimes J^a_{i_R}$.
\end{lemma}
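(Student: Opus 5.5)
We will work entirely inside $H_i^{\mathrm{cond}} = \mathcal{H}^c \oplus \mathcal{H}^a$, where $\mathcal{H}^c := \mathrm{span}\{\ket{10}\}_{S_{i_R}S_{i_L}}\otimes J^c_{i_R}\otimes J^c_{i_L}$ and $\mathcal{H}^a := \mathrm{span}\{\ket{01}\}_{S_{i_R}S_{i_L}}\otimes J^a_{i_R}\otimes J^a_{i_L}$. The plan is to show that $\mathcal{X}_i$ is block off-diagonal with respect to this splitting, and that its off-diagonal block is unitary.

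\textbf{Step 1: $\mathcal{X}_i$ preserves $H_i^{\mathrm{cond}}$.} By \cref{lem:X_HtoH}, $\mathcal{X}_i$ maps $H_i^{\mathrm{cond}}$ into itself. By \cref{lem:Zsep_SysJunk}, $\mathcal{Z}_i$ restricted to $H_i^{\mathrm{cond}}$ equals $Z_{S_i}\otimes\id$. Since $Z$ has eigenvalue $-1$ on $\ket{10}$ and $+1$ on $\ket{01}$, this means $\mathcal{Z}_i|_{H_i^{\mathrm{cond}}} = -\id_{\mathcal{H}^c}\oplus \id_{\mathcal{H}^a}$.

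\textbf{Step 2: block off-diagonal form.} \cref{lem:FirstBell,lem:SecondBell} give $\{\mathcal{X}_i,\mathcal{Z}_i\}=0$ and $\mathcal{X}_i^2=\id$ on $H_i^{\mathrm{cond}}$. Here we must upgrade these from identities on the state vector to operator identities on the support. The standard argument is as follows. The conditional state has full Schmidt support on $H_i^{\mathrm{cond}}\otimes(\text{rest})$, so an operator of the form $O_i\otimes\id$ that annihilates $\ket{\hat\psi}^{\mathrm{cond}}$ vanishes on $H_i^{\mathrm{cond}}$. This works provided $O_i$ maps into a space where it can be compared. Step 1 guarantees exactly this: both $\mathcal{X}_i$ and $\mathcal{Z}_i$ preserve $H_i^{\mathrm{cond}}$, so the products involved stay inside the support. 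The paper already states these operator identities right after \cref{eq:def-H-i-cond}, and we take them as given.

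Write $\mathcal{X}_i|_{H_i^{\mathrm{cond}}}=\begin{pmatrix} A & B\\ C & D\end{pmatrix}$ in the $\mathcal{H}^c\oplus\mathcal{H}^a$ decomposition. Anticommutation with $\mathrm{diag}(-\id,\id)$ forces $A=0$ and $D=0$. Since $\mathcal{X}_i$ is a self-adjoint (binary) observable, $B=C^\dagger$.

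\textbf{Step 3: the off-diagonal block is unitary and gives the claimed form.} The condition $\mathcal{X}_i^2=\id$ now reads $C^\dagger C=\id_{\mathcal{H}^c}$ and $CC^\dagger=\id_{\mathcal{H}^a}$, so $C$ is a unitary from $\mathcal{H}^c$ onto $\mathcal{H}^a$. Because the system parts of $\mathcal{H}^c$ and $\mathcal{H}^a$ are the one-dimensional spans of $\ket{10}$ and $\ket{01}$, $C$ factorizes as
\[
C = \ketbra{01}{10}\otimes T_i ,
\]
with $T_i: J^c_{i_R}\otimes J^c_{i_L}\to J^a_{i_R}\otimes J^a_{i_L}$ unitary. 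We then have $C^\dagger=\ketbra{10}{01}\otimes T_i^\dagger$. With $X$ and $Y$ as defined in the statement, $(X+\ii Y)/2=\ketbra{01}{10}$ and $(X-\ii Y)/2=\ketbra{10}{01}$. Substituting gives exactly \cref{eq:Xsep}.

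\textbf{Main obstacle.} The delicate point is Step 2's passage from state-level relations to operator relations on $H_i^{\mathrm{cond}}$. This needs the invariance from \cref{lem:X_HtoH} together with the fact that $H_i^{\mathrm{cond}}$ is precisely the local support of the conditional state. A second concern is that $T_i$ must be a genuine unitary, which requires the dimensions of the two junk supports to match. We obtain this directly from the unitarity of $C$, which forces $\dim(J^c_{i_R}\otimes J^c_{i_L})=\dim(J^a_{i_R}\otimes J^a_{i_L})$.
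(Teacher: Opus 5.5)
Your proof is correct and follows the paper's argument: $H_i^{\mathrm{cond}}$ is invariant under $\mathcal{X}_i$ by \cref{lem:X_HtoH}, $\mathcal{X}_i$ is written as a $2\times 2$ block matrix in the $\ket{01},\ket{10}$ decomposition, anticommutation with $Z_{S_i}\otimes\id$ kills the diagonal blocks, and $\mathcal{X}_i^2=\id$ makes the off-diagonal block unitary. Your remark on passing from state-level to operator-level identities on the support, which the paper only asserts after \cref{eq:def-H-i-cond}, is a useful clarification; note also that $B=C^\dagger$ needs only Hermiticity of $\mathcal{X}_i$, not that it is binary.
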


\begin{proof}
The proof is given in Supplemental Material~\ref{appendix:Xsep_SJ_3}.
\end{proof}

\begin{corollary} 
\label{cor:COR1}
\cref{lem:Xsep_SysJunk}, together with maximal violation of one of the Bell inequalities, implies
\begin{equation}
\label{eq:COR1} 
\begin{aligned}
\left( \bigotimes_{i=1}^N T_i \right)\;\ket{\zeta^c_{1} \cdots \zeta^c_{N}} 
&= e^{\ii \phi_l} \ket{\zeta^a_{1} \cdots \zeta^a_{N}}, \\
\left( \bigotimes_{i=1}^N T_i^\dagger \right) \;\ket{\zeta^a_{1} \cdots \zeta^a_{N}} 
&= e^{-\ii \phi_l} \ket{\zeta^c_{1} \cdots \zeta^c_{N}}.
\end{aligned}
\end{equation}
\end{corollary}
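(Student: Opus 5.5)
The plan is to apply the stabilizer relation from \cref{lem:FirstBell} or \cref{lem:SecondBell} directly to the explicit partially self-tested conditional state \cref{eq:ConditionalState}, and to read off the junk identity by comparing components. We write $\epsilon_l := e^{\ii\phi_l}\in\{+1,-1\}$. Depending on which Bell inequality is maximally violated, we have $\epsilon_l\,\mathcal{X}_1\cdots\mathcal{X}_N\ket{\hat\psi}^{\rm cond}=\ket{\hat\psi}^{\rm cond}$. Here $\epsilon_l=+1$ corresponds to inequality \Romannum{1} ($\phi_l=0$) and $\epsilon_l=-1$ to inequality \Romannum{2} ($\phi_l=\pi$).

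First, by \cref{lem:X_HtoH} each $\mathcal{X}_i$ preserves $H_i^{\rm cond}$. The reduced state of $\ket{\hat\psi}^{\rm cond}$ on party $i$ is supported on $H_i^{\rm cond}$, so we may replace every $\mathcal{X}_i$ by its restriction given in \cref{lem:Xsep_SysJunk}. On the system part, $(X+\ii Y)/2=\ketbra{10}{01}$ and $(X-\ii Y)/2=\ketbra{01}{10}$. Hence $\mathcal{X}_i$ maps $\ket{01}_{S_i}\otimes\ket{\cdot}$ to $\ket{10}_{S_i}\otimes T_i\ket{\cdot}$, and maps $\ket{10}_{S_i}\otimes\ket{\cdot}$ to $\ket{01}_{S_i}\otimes T_i^\dagger\ket{\cdot}$.

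Second, we apply $\bigotimes_i\mathcal{X}_i$ to the two branches of \cref{eq:ConditionalState}. Care is needed with the bookkeeping: $T_i$ acts on party $i$'s junk $J_{i_R}J_{i_L}$, while the junk states $\zeta_j$ live on $J_{j_L}J_{(j+1)_R}$. Taking the tensor product over all parties, $\bigotimes_i T_i$ is nevertheless a well-defined operator on the full junk space $\bigotimes_j (J_{j_L}\otimes J_{(j+1)_R})$. We also need to use the correct matching of $c/a$ junk with the $\ket{01}/\ket{10}$ system labels, consistent with the definition of $H_i^{\rm cond}$; there $\ket{10}$ pairs with $J^c$, and $T_i$ maps $J^c$ to $J^a$.

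The result is that the branch $\ket{10\cdots10}\otimes\ket{\zeta^c_1\cdots\zeta^c_N}$ is mapped to $\ket{01\cdots01}\otimes(\bigotimes_iT_i)\ket{\zeta^c_1\cdots\zeta^c_N}$. Similarly, the other branch is mapped to $\ket{10\cdots10}\otimes(\bigotimes_iT_i^\dagger)\ket{\zeta^a_1\cdots\zeta^a_N}$, up to the consistent orientation convention between $T_i$ and $T_i^\dagger$.

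Third, we equate $\epsilon_l$ times this image with $\ket{\hat\psi}^{\rm cond}$. The system states $\ket{01\cdots01}$ and $\ket{10\cdots10}$ are orthonormal, so we may project onto each of them separately. Projecting onto $\ket{01\cdots01}$ gives $\epsilon_l(\bigotimes T_i)\ket{\zeta^c}=\ket{\zeta^a}$. Since $\epsilon_l=\pm1$, this is the first line of \cref{eq:COR1}. Projecting onto $\ket{10\cdots10}$ gives the second line. Alternatively, the second line follows from the first by applying $\bigotimes_iT_i^\dagger$, using that each $T_i$ is unitary between the relevant supports.

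The main obstacle is this bookkeeping. The junk states $\zeta_j$ straddle two parties, while the $T_i$ act per party. We must also justify applying the restricted forms of \cref{lem:Xsep_SysJunk} simultaneously for all parties, not just one at a time. Locality makes this legitimate: the operators act on distinct tensor factors, and each local reduced support is invariant under the corresponding $\mathcal{X}_i$ by \cref{lem:X_HtoH}. This lets us insert the projector onto $H_i^{\rm cond}$ party by party without changing the state.
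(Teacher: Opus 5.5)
Your argument is correct and follows the paper's own proof: substitute the block form of \cref{lem:Xsep_SysJunk} into the stabilizer relation of \cref{lem:FirstBell} or \cref{lem:SecondBell}, apply it to the two branches of \cref{eq:ConditionalState}, and read off the junk identities from the orthogonal $\ket{01\cdots01}$ and $\ket{10\cdots10}$ components; you simply spell out the bookkeeping in more detail. There is one slip in your first step: with the paper's $X$ and $Y$, one gets $(X+\ii Y)/2=\ketbra{01}{10}$ and $(X-\ii Y)/2=\ketbra{10}{01}$, not the reverse, so $\mathcal{X}_i$ sends $\ket{10}\otimes J^c$ to $\ket{01}\otimes J^a$ via $T_i$ and sends $\ket{01}\otimes J^a$ back via $T_i^\dagger$. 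That is exactly the mapping your second step uses, so correct the first step rather than appealing to an ``orientation convention''.
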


\begin{proof}
\cref{lem:FirstBell,lem:SecondBell} imply that \cref{eq:stab1,eq:stab2} hold. Substituting the decomposition~\cref{eq:Xsep} into the stabilizer relations and using the structure of the conditional state, \cref{eq:ConditionalState}, yields the result.
\end{proof}

\begin{lemma} 
\label{lem:JunkRLSep}
For each $i \in \{1,\dots,N\}$, the operator $T_i: J^c_{i_L} \otimes J^c_{i_R} \rightarrow J^a_{i_L} \otimes J^a_{i_R}$ from \cref{lem:Xsep_SysJunk} factorizes across the left and right junk subsystems. More precisely, there exist unitary operators $T_i^{(L)}:J^c_{i_L}\to J^a_{i_L}$ and $T_i^{(R)}:J^c_{i_R}\to J^a_{i_R}$ such that
\begin{equation}
\begin{aligned}
T_i \ket{\zeta^c_{i-1}} \ket{\zeta^c_i}
&= (T_i^{(L)} \otimes T_i^{(R)}) \ket{\zeta^c_{i-1}} \ket{\zeta^c_i}.
\end{aligned}
\end{equation}
\end{lemma}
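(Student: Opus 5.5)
The plan is to exploit the fact that $\mathcal{X}_i$ is a local operator of party $i$. Its action on the conditional state is nevertheless constrained by the stabilizer relation of \cref{cor:COR1}, which involves the global product $\bigotimes_j T_j$. The idea is to show that $T_i$, acting on $\ket{\zeta^c_{i-1}}\ket{\zeta^c_i}$, cannot create correlations between the left junk $J_{i_L}$ and the right junk $J_{i_R}$. The reason is that these two registers are entangled with \emph{different} neighbours, $i+1$ and $i-1$, in independently prepared states.

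First, we use \cref{cor:COR1}:
\[
\Big(\bigotimes_j T_j\Big)\ket{\zeta^c_1\cdots\zeta^c_N}=e^{\ii\phi_l}\ket{\zeta^a_1\cdots\zeta^a_N}.
\]
Since each $T_j$ is unitary, we can isolate $T_i$ by applying $\bigotimes_{j\neq i}T_j^\dagger$ to both sides. The key observation is then a cut of the loop. Take the bipartition $\mathcal{P}_1=\{J_{i_R}\}\cup\{\text{junk of parties } i-1,i-2,\dots\}$ up to some cut edge $e$ away from $i$, and its complement $\mathcal{P}_2\ni J_{i_L}$. For $N\ge3$ we can choose the cut so that no $T_j$ with $j\neq i$ straddles it on the edges containing $\zeta_{i-1},\zeta_i$.

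The right-hand side $\ket{\zeta^a_1\cdots\zeta^a_N}$ is a product over edges. Hence its Schmidt rank across any cut through party $i$ (separating $J_{i_L}$ from $J_{i_R}$) equals the product of the ranks of the edge states crossing the cut, all of which lie outside party $i$. On the left-hand side, the operators $T_j$ for $j\neq i$ act locally on their own parties, so they do not change the entanglement across a cut that separates party $i$'s two halves, provided each $j\neq i$ lies entirely on one side. We therefore choose the two cuts $\{J_{i_L},\text{parties } i+1..k\}$ versus $\{J_{i_R},\text{parties } k+1..i-1\}$. Comparing Schmidt ranks, or more precisely reduced states, shows that $T_i$ acting on $\ket{\zeta^c_{i-1}}_{J_{(i-1)_L}J_{i_R}}\ket{\zeta^c_i}_{J_{i_L}J_{(i+1)_R}}$ yields a state that is a product across $(J_{(i-1)_L}J_{i_R})\,|\,(J_{i_L}J_{(i+1)_R})$. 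Concretely, the reduced state on $J_{(i-1)_L}J_{(i+1)_R}$ must remain the product $\rho_{(i-1)_L}\otimes\rho_{(i+1)_R}$ after $T_i$, since $T_i$ does not touch those registers. Purity of the image on the two pairs then follows from the product form of the target.

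From the product structure of the output, the plan is to conclude as follows. Write $\ket{\zeta^c_{i-1}}$ and $\ket{\zeta^c_i}$ in Schmidt form. A unitary on $J_{i_R}\otimes J_{i_L}$ maps a state whose purifications $J_{(i-1)_L}$, $J_{(i+1)_R}$ are fixed to a state that is again a product $\ket{\xi}_{(i-1)_L i_R}\ket{\xi'}_{i_L (i+1)_R}$. By uniqueness of purifications, there exist isometries $T_i^{(R)}$ on the support $J^c_{i_R}$ and $T_i^{(L)}$ on $J^c_{i_L}$ with $T_i\ket{\zeta^c_{i-1}}\ket{\zeta^c_i}=(T^{(L)}_i\otimes T^{(R)}_i)\ket{\zeta^c_{i-1}}\ket{\zeta^c_i}$. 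Unitarity onto $J^a$ follows from matching supports via \cref{cor:COR1}.

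The main obstacle is the middle step: establishing that the image $T_i\ket{\zeta^c_{i-1}}\ket{\zeta^c_i}$ is a product across the $L/R$ split of party $i$ (up to the fixed phase). This requires handling the other $T_j$'s carefully, since they also correlate their own left and right junk. I would first try an inductive argument peeling off $T_j$'s along the loop using reduced density matrices on single edges. If that fails, I would use the relation $\bigotimes_j T_j^\dagger$ on the $\zeta^a$ side symmetrically to bound ranks in both directions.
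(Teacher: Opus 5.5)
Your two-stage plan is sound: first show that $\Psi_i:=T_i\ket{\zeta^c_{i-1}}\ket{\zeta^c_i}$ is a product across $(J_{(i-1)_L}J_{i_R})\,|\,(J_{i_L}J_{(i+1)_R})$, then extract $T_i^{(L)},T_i^{(R)}$ by uniqueness of purifications. The second stage is fine. The gap is the first stage, which you assert but do not establish.

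Your ``concrete'' justification does not work. The marginal of $\Psi_i$ on $J_{(i-1)_L}J_{(i+1)_R}$ equals $\rho_{(i-1)_L}\otimes\rho_{(i+1)_R}$ for \emph{every} unitary $T_i$ on $J_{i_R}J_{i_L}$, entangling or not. For example, take $\zeta^c_{i-1},\zeta^c_i$ to be Bell pairs and $T_i$ a CNOT from $J_{i_R}$ to $J_{i_L}$. That outer marginal stays maximally mixed and product, yet the marginal of $\Psi_i$ on $J_{(i-1)_L}J_{i_R}$ is $\frac12\sum_a\ketbra{aa}{aa}$, which is mixed.

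Nor does $\Psi_i$ inherit the product form of $\ket{\zeta^a}$. \Cref{eq:COR1} only gives $T_i\ket{\zeta^c}=e^{\ii\phi_l}(\bigotimes_{j\neq i}T_j^\dagger)\ket{\zeta^a}$, and $T^\dagger_{i\pm1}$ act on $J_{(i-1)_L}$ and $J_{(i+1)_R}$ jointly with registers outside the two pairs.

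The cut comparison is the right idea, but as you set it up it yields only $\operatorname{rank}(\zeta^a_k)=\operatorname{rank}(\Psi_i)\cdot\operatorname{rank}(\zeta^c_k)$, for an edge $k\notin\{i-1,i\}$ (this is where $N\geq 3$ enters). The crossing edge carries $\zeta^c_k$ on the left of \cref{eq:COR1} and $\zeta^a_k$ on the right, and nothing you have established relates the two. Moreover, the junk may be infinite-dimensional (the paper deliberately avoids dimension assumptions), and then rank identities say nothing. The difficulty you name, namely the other $T_j$ correlating their own junk, is not the real one: those operators are local to the cut and drop out.

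What closes the gap is your fallback, made precise with a multiplicative invariant that is finite and nonzero, for instance purity. Let $\sigma_i$ be the marginal of $\Psi_i$ on $J_{(i-1)_L}J_{i_R}$, and $\rho^x_k$ the marginal of $\zeta^x_k$ on $J_{(k+1)_R}$. Comparing purities of the marginals on your $\mathcal{P}_1$:
\begin{itemize}
\item the first line of \cref{eq:COR1} gives $\mathrm{Tr}\big[(\rho^a_k)^2\big]=\mathrm{Tr}\big[\sigma_i^2\big]\,\mathrm{Tr}\big[(\rho^c_k)^2\big]$;
\item the second line gives $\mathrm{Tr}\big[(\rho^c_k)^2\big]=\mathrm{Tr}\big[\sigma_i'^2\big]\,\mathrm{Tr}\big[(\rho^a_k)^2\big]$, where $\sigma'_i$ is the analogous marginal of $T_i^\dagger\ket{\zeta^a_{i-1}}\ket{\zeta^a_i}$.
\end{itemize}
Since $\mathrm{Tr}\big[(\rho^c_k)^2\big]>0$ and purities are at most $1$, both purities equal $1$, so $\Psi_i$ is a product. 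Your purification argument then yields $T_i^{(R)}:J^c_{i_R}\to J^a_{i_R}$ and $T_i^{(L)}:J^c_{i_L}\to J^a_{i_L}$. Surjectivity follows from $T_i(J^c_{i_L}\otimes J^c_{i_R})=J^a_{i_L}\otimes J^a_{i_R}$.

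Repaired this way, your route differs from the paper's. The paper combines \cref{cor:COR1} with \cref{lem:ProductStructureLemmaInfDim}, an operator-space argument: it moves alternate unitaries across full-Schmidt-rank edges via the map $\Gamma$ on the completed spaces $\mathcal{O}_J$, then compares coefficients in a brick-wall identity. That yields product form of the operators themselves. The cut/purity argument is more elementary, needs no completion to be dimension-free, and works directly with the two different edge states $\zeta^c_k,\zeta^a_k$ appearing in \cref{eq:COR1}. It gives only the product action on the given state, which is exactly what the statement asks for.
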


\begin{proof}
This is a consequence of \cref{cor:COR1} and applying \cref{lem:ProductStructureLemmaInfDim} of Supplemental Material~\ref{appendix:Tsep_RL_3}.
\end{proof}

\begin{corollary}
\label{cor:COR2}   
\cref{lem:JunkRLSep,cor:COR1} imply that, for all $i \in \{1,\dots,N\}$,
\begin{equation}
\label{eq:COR2}
T_i^{(L)} T_{i+1}^{(R)} \ket{\zeta^c_i}
= e^{\ii \hat{\varphi}_i}\ket{\zeta^a_i},
\end{equation}
with phases satisfying $\sum_{i=1}^N \hat{\varphi}_i = \phi_l$.
\end{corollary}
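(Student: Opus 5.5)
The plan is to combine the global relation of \cref{cor:COR1} with the local factorization of \cref{lem:JunkRLSep}. Then I will use the tensor-product structure of the junk states to split one global phase into $N$ local phases, one for each source.

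Substituting $T_i = T_i^{(L)}\otimes T_i^{(R)}$ from \cref{lem:JunkRLSep} into the first line of \cref{eq:COR1}, I regroup the operators by source rather than by party. The junk state $\ket{\zeta^c_i}$ lives on $J_{i_L}\otimes J_{(i+1)_R}$. The only operators acting on it are $T_i^{(L)}$, from party $i$, and $T_{i+1}^{(R)}$, from party $i+1$, with indices taken modulo $N$. Since the $T$'s act on distinct tensor factors, the relation becomes
\begin{equation*}
\bigotimes_{i=1}^N \Big( T_i^{(L)} T_{i+1}^{(R)} \ket{\zeta^c_i} \Big) = e^{\ii\phi_l} \bigotimes_{i=1}^N \ket{\zeta^a_i}.
\end{equation*}
Strictly, \cref{lem:JunkRLSep} only gives the factorization on the relevant states. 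That suffices here, because each party's $T_i$ acts on the reduced supports $J^c_{i_L}\otimes J^c_{i_R}$, which contain the components of the product state.

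Next I use the elementary fact that if a product of nonzero vectors equals another product, $\bigotimes_i \ket{v_i} = c\bigotimes_i \ket{w_i}$ with $c\neq 0$, then each factor is proportional: $\ket{v_i} = c_i\ket{w_i}$ with $\prod_i c_i = c$. To prove it, I contract with $\bra{w_j}$ on all factors $j\neq i$. This is nonzero because the $\ket{w_j}$ are normalized, so $\ket{v_i}$ is a multiple of $\ket{w_i}$. Applying this with $\ket{v_i}=T_i^{(L)}T_{i+1}^{(R)}\ket{\zeta^c_i}$ and $\ket{w_i}=\ket{\zeta^a_i}$ gives $T_i^{(L)}T_{i+1}^{(R)}\ket{\zeta^c_i}=c_i\ket{\zeta^a_i}$.

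Since the $T$'s are unitaries, or at least isometries on the relevant supports, and the junk states are normalized, each $|c_i|=1$. Writing $c_i=e^{\ii\hat\varphi_i}$, the product relation gives $\sum_i\hat\varphi_i=\phi_l \pmod{2\pi}$. This is \cref{eq:COR2}. The second line of \cref{eq:COR1} gives the conjugate statement and is consistent with this.

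The main obstacle is the bookkeeping of supports. The $T_i^{(L)}$ and $T_{i+1}^{(R)}$ are defined only as maps $J^c\to J^a$ on supports of reduced states, so I must check that $\ket{\zeta^c_i}$ actually lies in $J^c_{i_L}\otimes J^c_{(i+1)_R}$. This holds because any bipartite state lies in the tensor product of the supports of its marginals. The regrouping step is then legitimate and the norm argument goes through.
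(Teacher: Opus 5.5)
Your proposal is correct and follows the paper's own route: substitute the factorization of \cref{lem:JunkRLSep} into \cref{eq:COR1}, regroup the operators source by source because they act on distinct tensor factors, and split the global phase $e^{\ii\phi_l}$ into local unit-modulus phases. You also spell out steps the paper leaves implicit, namely factorwise proportionality of equal product vectors, $|c_i|=1$ from unitarity and normalization, and the containment $\ket{\zeta^c_i}\in J^c_{i_L}\otimes J^c_{(i+1)_R}$; the phase sum then holds modulo $2\pi$, and a suitable choice of representatives makes it exact.
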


\begin{proof}
Substituting the decomposition from \cref{lem:JunkRLSep} into the global relation~\cref{eq:COR1}, and using that the operators act on distinct Hilbert spaces, we obtain the local relations above. Matching the global phases on both sides then yields $\sum_{i=1}^N \hat{\varphi}_i = \phi_l$.
\end{proof}

We now define the local unitary operators $U_{i_R}$ and $U_{i_L}$ based on the action of the second-round measurement on the junk subsystems, namely the operators $T_i^{(R)}$ and $T_i^{(L)}$:
\begin{equation}
\begin{aligned}
U_{i_R} &:= \ketbra{0}{0}_{S_{i_R}} \otimes \id_{J_{i_R}} 
+ \ketbra{1}{1}_{S_{i_R}} \otimes T_i^{(R)}, \\
U_{i_L} &:= \ketbra{1}{1}_{S_{i_L}} \otimes \id_{J_{i_L}} 
+ \ketbra{0}{0}_{S_{i_L}} \otimes T_i^{(L)}.
\end{aligned}
\end{equation}
Applying these unitary operators to the partially self-tested state $\ket{\hat{\psi}_i}$, by construction and using \cref{cor:COR2}, we obtain
\begin{equation}
U_{i_L} U_{(i+1)_R} \ket{\hat{\psi}_i}
= \frac{1}{\sqrt{2}}
\bigl(
e^{\ii \hat{\varphi}_i} \ket{01}_{S_{i_L}, S_{(i+1)_R}}
+ \ket{10}_{S_{i_L}, S_{(i+1)_R}}
\bigr)
\otimes
\ket{\zeta_i^{a}}_{J_{i_L} J_{(i+1)_R}}.
\end{equation}
Next,
defining $U_i := U_{i_R} U_{i_L}$ and considering the action of these unitaries on the measurement operators yields
\begin{equation}
\begin{aligned}
U_i \mathcal{Z}_i U_i^\dagger
&= U_i \big( \ketbra{01}{01} \otimes \id - \ketbra{10}{10} \otimes \id \big) U_i^\dagger, \\
&= Z \otimes \id,
\end{aligned}
\end{equation}
and
\begin{equation}
\begin{aligned}
U_i \mathcal{X}_i U_i^\dagger
&= U_i \big( \ketbra{01}{10} \otimes T_i + \ketbra{10}{01} \otimes T_i^\dagger \big) U_i^\dagger, \\
&= \ketbra{01}{10} \otimes T_i T_i^\dagger 
+ \ketbra{10}{01} \otimes T_i T_i^\dagger, \\
&= X \otimes \id.
\end{aligned}
\end{equation}

\subsection{\texorpdfstring{\cref{thm:fullselftesting}}{Theorem~\ref{thm:fullselftesting}} for \texorpdfstring{$N=2$}{N=2} with stronger assumptions}
\label{appendix:fullselftesting_N2}

The following statement is a weaker analogue of
\cref{thm:fullselftesting} for a two-party loop. Unlike the case of 
\(N\geq 3\), we do not derive the full structure from the distribution alone; instead, we assume a partial structure for the states and measurements and show that they together with the distribution are sufficient to recover the trusted two-qubit strategy up to local unitaries.

The trusted qubit strategy is the same as the case for $N \geq 3$ and as described in Supplemental Material~\ref{appendix:ExperimentAndMainResult}: Consider a fixed two-party loop \(l\) of parties $1, 2$. The states are give by~\cref{eq:qubit-state-psi-theta} or
\begin{equation}
\ket{\psi_i}_{i_L,(i+1)_R}
=
\frac{1}{\sqrt{2}}
\left(
\ket{01}_{i_L,(i+1)_R}
+
e^{\ii\varphi_i}
\ket{10}_{i_L,(i+1)_R}
\right),
\qquad i\in\{1,2\},
\end{equation}
where indices are understood modulo \(2\). We assume that the phases satisfy
\begin{equation}
\label{eq:phase_constraint_N2}
\varphi_1+\varphi_2=\phi_l,
\qquad
\phi_l\in\{0,\pi\}.
\end{equation}
The trusted measurements are given by \cref{eq:QubitProjectors} with the coefficients as in \cref{eq:meas_coeff_forA} for party \(1\) and \cref{eq:meas_coeff_forother} for party \(2\). 

Note that restricting to the event when $\forall i, \, a_i \in \{1_\pm\}$, the conditional global state is a two-qubit Bell state, with relative phase \(\phi_l\). Moreover, defining the conditional observable
\begin{equation}
    M_{x_i}^{(i)}= \Pi_{i, x_i}^{1_+} - \Pi_{i, x_i}^{1_-}
\end{equation}
and
\begin{equation}
\begin{aligned}
\mathcal{X}_{1}
&=
\frac{M^{(1)}_0+M^{(1)}_1}{\sqrt{2}},
&
\mathcal{Z}_{1}
&=
\frac{M^{(1)}_0-M^{(1)}_1}{\sqrt{2}},
\\
\mathcal{X}_{2}
&=
M^{(2)}_1,
&
\mathcal{Z}_{2}
&=
M^{(2)}_0,
\end{aligned}
\end{equation}
we find out that they satisfy
\[
\mathcal{Z}_i=Z_{i_R i_L},
\qquad
\mathcal{X}_i=X_{i_R i_L},
\qquad i\in\{1,2\},
\]
where
\begin{equation}
Z_{i_R i_L}
=
\left(
\ketbra{01}{01}
-
\ketbra{10}{10}
\right)_{{i_R}{i_L}},
\qquad
X_{i_R i_L}
=
\left(
\ketbra{01}{10}
+
\ketbra{10}{01}
\right)_{{i_R}{i_L}} .
\end{equation}
Therefore, the conditional distribution maximally violates Bell
inequality \Romannum{1} when \(\phi_l=0\), and Bell inequality \Romannum{2} when \(\phi_l=\pi\).

\begin{theorem}[full self-testing for a fixed two-party loop under extra structural assumptions]
\label{thm:fullselftesting_N2}

Consider an arbitrary untrusted strategy producing the same conditional
distribution for the two-party loop as above. We assume that, after post-selection
on the event that both parties obtain outputs $1_\pm$, the source states have the partially self-tested form
\begin{equation}
\label{eq:partial_state_N2}
\ket{\hat{\psi}_i}_{i_L,(i+1)_R}
=
\frac{1}{\sqrt{2}}
\left(
\ket{01}_{S_{i_L}S_{(i+1)_R}}
\otimes
\ket{\zeta_i^{c}}_{J_{i_L}J_{(i+1)_R}}
+
\ket{10}_{S_{i_L}S_{(i+1)_R}}
\otimes
\ket{\zeta_i^{a}}_{J_{i_L}J_{(i+1)_R}}
\right),
\end{equation}
up to local isometries. 
Let \(\hat{M}_{x_i}^{(i)}\), for \(x_i\in\{0,1\}\), denote the corresponding
untrusted conditional observables, and define
\begin{equation}
\label{eq:observableToXZ_N2}
\begin{aligned}
\mathcal{X}_{1}
&=
\frac{\hat{M}^{(1)}_0+\hat{M}^{(1)}_1}{\sqrt{2}},
&
\mathcal{Z}_{1}
&=
\frac{\hat{M}^{(1)}_0-\hat{M}^{(1)}_1}{\sqrt{2}},
\\
\mathcal{X}_{2}
&=
\hat{M}^{(2)}_1,
&
\mathcal{Z}_{2}
&=
\hat{M}^{(2)}_0 .
\end{aligned}
\end{equation}
Assume that the operators \(\mathcal{Z}_i\) and \(\mathcal{X}_i\) on the conditional subspace
\begin{align}
H_i^{\rm cond}
=
\mathrm{span}\{\ket{10}\}_{S_{i_R}S_{i_L}}\otimes J^c_{i_R}\otimes J^c_{i_L}
\;\oplus\;
\mathrm{span}\{\ket{01}\}_{S_{i_R}S_{i_L}}\otimes J^a_{i_R}\otimes J^a_{i_L},
\end{align}
have the following
structure:
\begin{equation}
\label{eq:Z_structure_N2}
\mathcal{Z}_i
=
Z_{S_i}\otimes \id_{J_i},
\end{equation}
and
\begin{equation}
\label{eq:X_structure_N2}
\mathcal{X}_i
=
\frac{X+\ii Y}{2}\otimes T_i
+
\frac{X-\ii Y}{2}\otimes T_i^\dagger,
\end{equation}
where
\begin{equation}
\begin{aligned}
X_{S_i}
&=
\left(
\ketbra{01}{10}
+
\ketbra{10}{01}
\right)_{S_{i_R}S_{i_L}},
\\
Y_{S_i}
&=
\left(
\ii\ketbra{10}{01}
-
\ii\ketbra{01}{10}
\right)_{S_{i_R}S_{i_L}},
\end{aligned}
\end{equation}
and
\[
T_i:
J^c_{i_L}\otimes J^c_{i_R}
\longrightarrow
J^a_{i_L}\otimes J^a_{i_R}
\]
is unitary.
Finally, assume that each \(T_i\) factorizes across the left and right junk
supports on the relevant state:
\begin{equation}
\label{eq:T_factor_N2}
T_i
\left(
\ket{\zeta^c_{i-1}}_{J_{(i-1)_L}J_{i_R}}
\ket{\zeta^c_i}_{J_{i_L}J_{(i+1)_R}}
\right)
=
\left(
T_i^{(L)}\otimes T_i^{(R)}
\right)
\left(
\ket{\zeta^c_{i-1}}_{J_{(i-1)_L}J_{i_R}}
\ket{\zeta^c_i}_{J_{i_L}J_{(i+1)_R}}
\right),
\end{equation}
where \(T_i^{(L)}\) and \(T_i^{(R)}\) are unitary maps on the corresponding
left and right junk supports.

Then the untrusted two-party-loop strategy is equivalent to the trusted qubit
strategy up to local unitary transformations. More precisely, there exist local
unitaries \(U_{i_R}\) and \(U_{i_L}\), acting on
\(S_{i_R}\otimes J_{i_R}\) and \(S_{i_L}\otimes J_{i_L}\), respectively, such
that
\begin{equation}
\label{eq:N2_state_selftest}
U_{i_L}U_{(i+1)_R}
\ket{\hat{\psi}_i}_{i_L,(i+1)_R}
=
\frac{1}{\sqrt{2}}
\left(
\ket{01}_{S_{i_L}S_{(i+1)_R}}
+
e^{\ii\hat{\varphi}_i}
\ket{10}_{S_{i_L}S_{(i+1)_R}}
\right)
\otimes
\ket{\zeta_i}_{J_{i_L}J_{(i+1)_R}},
\end{equation}
with
\begin{equation}
\hat{\varphi}_1+\hat{\varphi}_2=\phi_l .
\end{equation}
Moreover, defining \(U_i:=U_{i_R}U_{i_L}\), the conditional observables satisfy
\begin{equation}
\label{eq:N2_measurement_selftest}
\begin{aligned}
U_i\mathcal{Z}_iU_i^\dagger
&=
Z_{S_i}\otimes \id_{J_i},\\
U_i\mathcal{X}_iU_i^\dagger
&=
X_{S_i}\otimes \id_{J_i}.
\end{aligned}
\end{equation}
\end{theorem}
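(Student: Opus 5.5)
The proof follows the second half of the $N\geq 3$ argument of \cref{thm:fullselftesting}, from \cref{cor:COR1} onward. The structural facts that were derived there (\cref{lem:Zsep_SysJunk,lem:Xsep_SysJunk,lem:JunkRLSep}) are now given as hypotheses \eqref{eq:Z_structure_N2}, \eqref{eq:X_structure_N2} and \eqref{eq:T_factor_N2}. The only input still needed from the distribution is the stabilizer relation. For $N=2$ the conditional correlations maximally violate Bell inequality \Romannum{1} when $\phi_l=0$, or Bell inequality \Romannum{2} when $\phi_l=\pi$. I would first check that the sum-of-squares argument behind \cref{lem:FirstBell,lem:SecondBell} does not use $N\geq3$. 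Assuming it does not, we get $e^{-\ii\phi_l}\mathcal{X}_1\mathcal{X}_2$ (i.e.\ $\pm\mathcal{X}_1\mathcal{X}_2$) stabilizing the conditional state
\[
\tfrac{1}{\sqrt2}\bigl(\ket{01,01}\ket{\zeta_1^a\zeta_2^a}+\ket{10,10}\ket{\zeta_1^c\zeta_2^c}\bigr).
\]
We also get $\mathcal{X}_i^2=\mathcal{Z}_i^2=\id$ and $\{\mathcal{X}_i,\mathcal{Z}_i\}=0$ on it. For two parties the Bell expression is just a CHSH-type expression for party $1$ against $\mathcal{X}_2,\mathcal{Z}_2$, so the standard CHSH SOS decomposition should suffice.

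Next, substitute \eqref{eq:X_structure_N2} into the stabilizer relation. The operator $\frac{X+\ii Y}{2}\otimes T_i$ maps the $\ket{10}$ sector with junk in $J^c$ to the $\ket{01}$ sector with junk in $J^a$, and its adjoint does the reverse. Comparing the two orthogonal system components gives the analogue of \cref{cor:COR1}:
\[
(T_1\otimes T_2)\ket{\zeta_1^c\zeta_2^c}=e^{\ii\phi_l}\ket{\zeta_1^a\zeta_2^a}.
\]
Then use the factorization assumption \eqref{eq:T_factor_N2} and regroup the operators by source. Source $1$ is acted on by $T_1^{(L)}$ and $T_2^{(R)}$, and source $2$ by $T_2^{(L)}$ and $T_1^{(R)}$. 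This gives
\[
\bigl(T_1^{(L)}T_2^{(R)}\ket{\zeta^c_1}\bigr)\otimes\bigl(T_2^{(L)}T_1^{(R)}\ket{\zeta^c_2}\bigr)=e^{\ii\phi_l}\ket{\zeta^a_1}\otimes\ket{\zeta^a_2}.
\]

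Here lies the main obstacle: a product-vector equality between normalized vectors forces each factor to agree up to a phase, with the phases multiplying to $e^{\ii\phi_l}$. This yields $T_i^{(L)}T_{i+1}^{(R)}\ket{\zeta_i^c}=e^{\ii\hat\varphi_i}\ket{\zeta_i^a}$ with $\hat\varphi_1+\hat\varphi_2=\phi_l$. There is a subtlety for $N=2$, because both sources connect the same two parties, so I must check the regrouping respects the Hilbert-space labels. It does, since $J_{1_L}J_{2_R}$ and $J_{2_L}J_{1_R}$ are distinct tensor factors. One must also ensure that $T_i^{(L)},T_i^{(R)}$ are defined on the supports of the reduced junk states. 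This holds by the definitions of $J^{c/a}_{i_{L/R}}$; one may extend them unitarily outside the supports if needed.

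Finally, define $U_{i_R},U_{i_L}$ exactly as in the $N\geq3$ proof, as controlled applications of $T_i^{(R)}$, $T_i^{(L)}$ on the relevant system basis state. Applying these to \eqref{eq:partial_state_N2} maps both junk branches to $\ket{\zeta_i^a}$ with a relative phase $e^{\pm\ii\hat\varphi_i}$. A relabeling of the phase convention (or swapping which branch is controlled) gives \eqref{eq:N2_state_selftest}, with $\ket{\zeta_i}=\ket{\zeta_i^a}$. Conjugating $\mathcal{Z}_i$ leaves it invariant because $U_i$ is block diagonal in the system basis. Conjugating $\mathcal{X}_i$ cancels $T_i$ against $T_i^{(L)}\otimes T_i^{(R)}$ and yields $X_{S_i}\otimes\id$, giving \eqref{eq:N2_measurement_selftest}. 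These last computations are routine and mirror the end of the $N\geq3$ proof.
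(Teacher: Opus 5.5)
Your proposal is correct and follows the paper's proof essentially step by step. It obtains the stabilizer $e^{\ii\phi_l}\mathcal{X}_1\mathcal{X}_2$ from the maximal (CHSH-type) violation, derives the analogues of \cref{cor:COR1} and \cref{cor:COR2} from the assumed $\mathcal{X}_i$ structure and the factorization of $T_i$, and then uses the same controlled unitaries as in the $N\geq 3$ case. Your phase-convention remark correctly turns the $e^{\ii\hat\varphi_i}\ket{01}+\ket{10}$ form those unitaries produce into the stated form, since $-\phi_l\equiv\phi_l$ modulo $2\pi$.

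The one point worth making explicit is in the conjugation of $\mathcal{X}_i$. Cancelling $T_i$ against $T_i^{(L)}\otimes T_i^{(R)}$ uses that $\ket{\zeta^c_{i-1}}\ket{\zeta^c_i}$ has full support on $J^c_{i_R}\otimes J^c_{i_L}$, so the factorization hypothesis, stated only on that state, is in fact an operator identity on this space.
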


\begin{proof}
Maximal violation of Bell inequality \Romannum{1} or \Romannum{2} (\cref{lem:FirstBell,lem:SecondBell} and for $N=2$) implies that 
\begin{equation}
e^{\ii \phi_l} \mathcal{X}_1 \mathcal{X}_2 \ket{\hat{\psi}}^{\mathrm{cond}} = \ket{\hat{\psi}}^{\mathrm{cond}}.
\end{equation}
Now by plugging in the structure that is assumed for the conditional observables \cref{eq:X_structure_N2}, and the structure for the states \cref{eq:partial_state_N2} in the above equation one can repeat the argument of \cref{cor:COR1} to obtain the corresponding relations~\cref{eq:COR1} for the two-loop:
\begin{equation}
\begin{aligned}
T_1 T_2 \;\ket{\zeta^c_{1} \zeta^c_{2}} 
&= e^{\ii \phi_l} \ket{\zeta^a_{1} \zeta^a_{2}}, \\
T_1^\dagger T_2^\dagger \;\ket{\zeta^a_{1} \zeta^a_{2}} 
&= e^{-\ii \phi_l} \ket{\zeta^c_{1} \zeta^c_{2}}.
\end{aligned}
\end{equation}
Furthermore, by using the structure of $T$ operators in the assumption \cref{eq:T_factor_N2}, an argument analogous to that of \cref{cor:COR2} applies, yielding \cref{eq:COR2} for the two-party loop:
\begin{equation}
T_i^{(L)} T_{i+1}^{(R)} \ket{\zeta^c_i}
= e^{\ii \hat{\varphi}_i}\ket{\zeta^a_i},
\end{equation}
where $\sum_{i=1}^N \hat{\varphi}_i = \phi_l$.
\end{proof}

\section{No local bosonic simulation for the fermionic distributions}
\label{appendix:contradiction}

Here we prove the main result of this work, \cref{thm:MainTheoremNoSimultaneuousSimulation}. The proof proceeds by contradiction. Suppose that there exists a local bosonic (or distinguishable-particle) strategy for the same rewiring experiment involving three parties $A$, $B$, and $C$, which reproduces all fermionic distributions
\[
P_f^{ABC}, \quad P_f^{ACB}, \quad P_f^{AB}, \quad P_f^{AC}, \quad P_f^{BC}.
\]
Let $\ket{\psi_\alpha}$, $\ket{\psi_\beta}$, and $\ket{\psi_\gamma}$ denote the states prepared by parties $A$, $B$, and $C$, respectively.

We first apply \cref{thm:fullselftesting} to the loops $ABC$ and $ACB$ to infer properties of the states and measurements in these configurations. 

By locality, the prepared states are independent of the loop configuration and
therefore remain fixed across all loops. In contrast, the measurement operators
associated with a given party may depend partially on the loop. For instance,
the states prepared by the parties may carry labels identifying their sender, so
that a receiving party can, in principle, infer the origin of an incoming
subsystem and adapt its measurement accordingly. However, since the parties are
local and have no access to the global loop configuration, this is the only
available source of loop-dependent information. Thus, any admissible variation
in the measurement strategy can only arise from differences in the incoming
subsystem.

To make this dependence explicit, we refine the notation by adding to each
measurement operator a superscript indicating the incoming state on the right of
the party.

Consider the loop $ABC$. The qubit strategy in this loop must reproduce the distribution $P_f^{ABC}$. This distribution corresponds to the trusted qubit strategy described in Supplemental Material~\ref{appendix:ExperimentAndMainResult} where the phases satisfy $\varphi_\alpha + \varphi_\beta + \varphi_\gamma = 0$. Then \cref{thm:fullselftesting} implies that any qubit strategy reproducing this distribution admits local unitary operators $U_{i_R}$ and $U_{i_L}$, for all $i \in \{A,B,C\}$, such that
\begin{align}
\label{eq:loopABC_alpha}
U_{A_L}^{(\gamma)} U_{B_R}^{(\alpha)} \ket{\hat{\psi}_\alpha} &=  \frac{1}{\sqrt{2}}
\bigl(
e^{\ii \hat{\varphi}^{ABC}_\alpha} \ket{01}
+ \ket{10}
\bigr)_{S_{\alpha_R} S_{\alpha_L}}
\otimes
\ket{\zeta_\alpha^{a}}_{J_{\alpha}} \\
\label{eq:loopABC_beta}
U_{B_L}^{(\alpha)} U_{C_R}^{(\beta)} \ket{\hat{\psi}_\beta} &=  \frac{1}{\sqrt{2}}
\bigl(
e^{\ii \hat{\varphi}^{ABC}_\beta} \ket{01}
+ \ket{10}
\bigr)_{S_{\beta_R} S_{\beta_L}}
\otimes
\ket{\zeta_\beta^{a}}_{J_{\beta}} \\
\label{eq:loopABC_gamma}
U_{C_L}^{(\beta)} U_{A_R}^{(\gamma)} \ket{\hat{\psi}_\gamma} &=  \frac{1}{\sqrt{2}}
\bigl(
e^{\ii \hat{\varphi}^{ABC}_\gamma} \ket{01}
+ \ket{10}
\bigr)_{S_{\gamma_R} S_{\gamma_L}}
\otimes
\ket{\zeta_\gamma^{a}}_{J_{\gamma}},
\end{align}
where $\hat{\varphi}^{ABC}_\alpha + \hat{\varphi}^{ABC}_\beta + \hat{\varphi}^{ABC}_\gamma = \phi_{ABC} = 0$. 
On the right-hand side of the above equations, the subsystems are labeled according to the state labeling for convenience. For the precise mapping between state and party labelings, see \cref{eq:labellingMap} or \cref{fig:labels}.

Similarly for loop $ACB$,
\begin{align}
\label{eq:loopACB_alpha}
U_{A_L}^{(\beta)} U_{C_R}^{(\alpha)} \ket{\hat{\psi}_\alpha}
&= \frac{1}{\sqrt{2}}
\bigl(
e^{\ii \hat{\varphi}^{ACB}_\alpha} \ket{01}
+ \ket{10}
\bigr)_{S_{\alpha_R} S_{\alpha_L}}
\otimes
\ket{\zeta_\alpha^{a}}_{J_{\alpha}} \\
\label{eq:loopACB_gamma}
U_{C_L}^{(\alpha)} U_{B_R}^{(\gamma)} \ket{\hat{\psi}_\gamma}
&= \frac{1}{\sqrt{2}}
\bigl(
e^{\ii \hat{\varphi}^{ACB}_\gamma} \ket{01}
+ \ket{10}
\bigr)_{S_{\gamma_R} S_{\gamma_L}}
\otimes
\ket{\zeta_\gamma^{a}}_{J_{\gamma}} \\
\label{eq:loopACB_beta}
U_{B_L}^{(\gamma)} U_{A_R}^{(\beta)} \ket{\hat{\psi}_\beta} 
&= \frac{1}{\sqrt{2}}
\bigl(
e^{\ii \hat{\varphi}^{ACB}_\beta} \ket{01}
+ \ket{10}
\bigr)_{S_{\beta_R} S_{\beta_L}}
\otimes
\ket{\zeta_\beta^{a}}_{J_{\beta}},
\end{align}
with $ \hat{\varphi}^{ACB}_\alpha 
+ \hat{\varphi}^{ACB}_\gamma 
+ \hat{\varphi}^{ACB}_\beta 
= \phi_{ACB} = 0$.

It is crucial to note that the junk states appearing in the expression for the loop $ACB$ are the same as those appearing for the loop $ABC$, as a consequence of the locality condition.

We now turn to the two-party loops. The full self-testing result of \cref{thm:fullselftesting} does not apply directly in this case. Instead, we show that, as a consequence of the locality constraint, the stronger assumptions required for the $N=2$ analogue of \cref{thm:fullselftesting} are satisfied for all two-party loops involving party $A$; see \cref{thm:fullselftesting_N2}.

More precisely, as noted above, the states prepared by the parties are the same in all cases, since the network configuration is not known at the state-preparation stage. Moreover, by locality, the operators appearing in the two-party loops coincide with the corresponding operators appearing in the three-party loops. For instance, party $A$ implements the same operator in the loops $AC$ and $ABC$, while party $C$ implements the same operator in the loops $AC$ and $ACB$, because these configurations are locally indistinguishable from the perspective of the respective party; see \cref{fig:3loops,fig:2loopswithA}.

\begin{figure}
    \centering
    \includegraphics[width=0.6\linewidth]{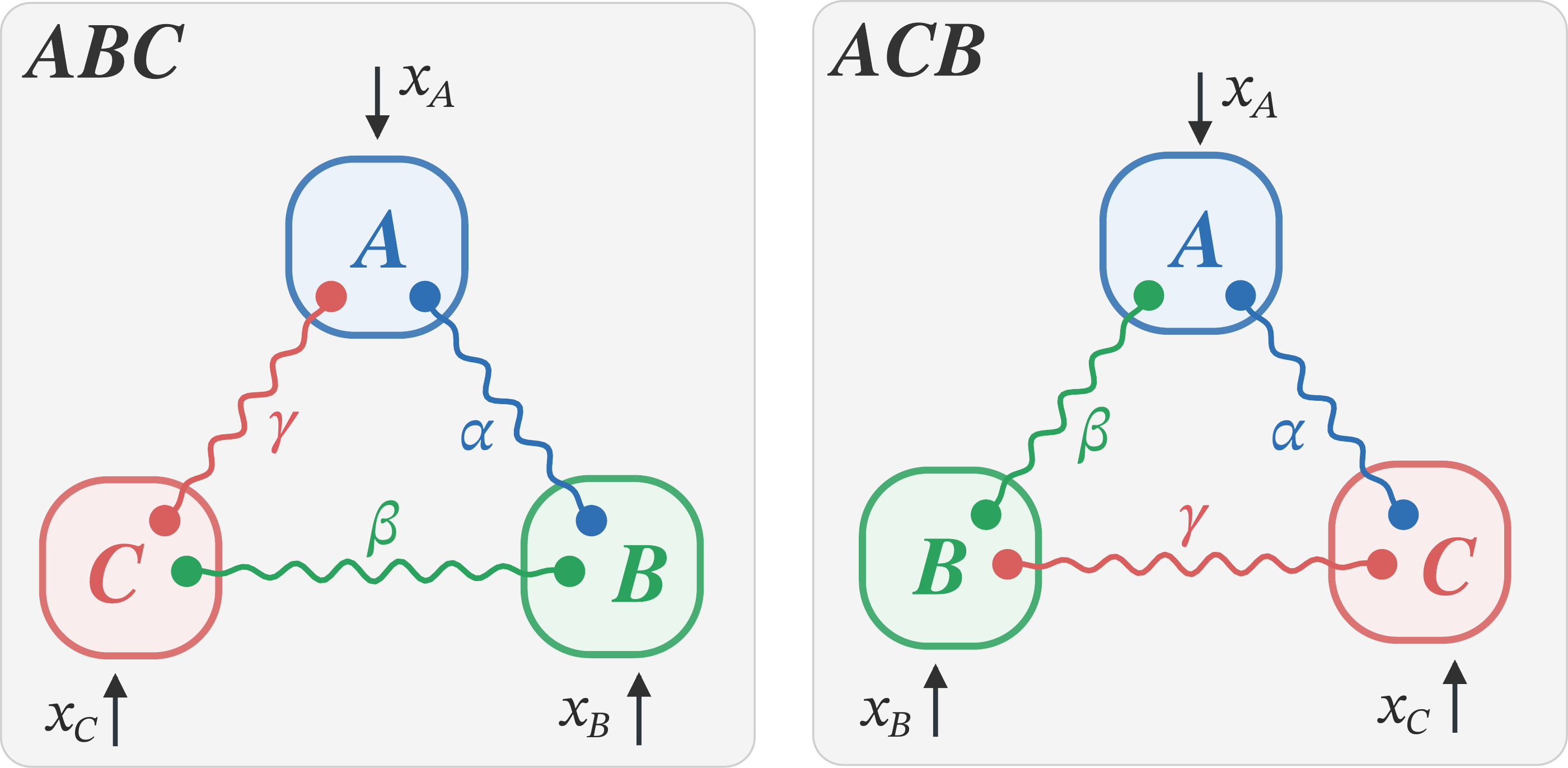}
    \caption{All the configurations of three-party loops.}
    \label{fig:3loops}
\end{figure}

\begin{figure}
    \centering
    \includegraphics[width=0.6\linewidth]{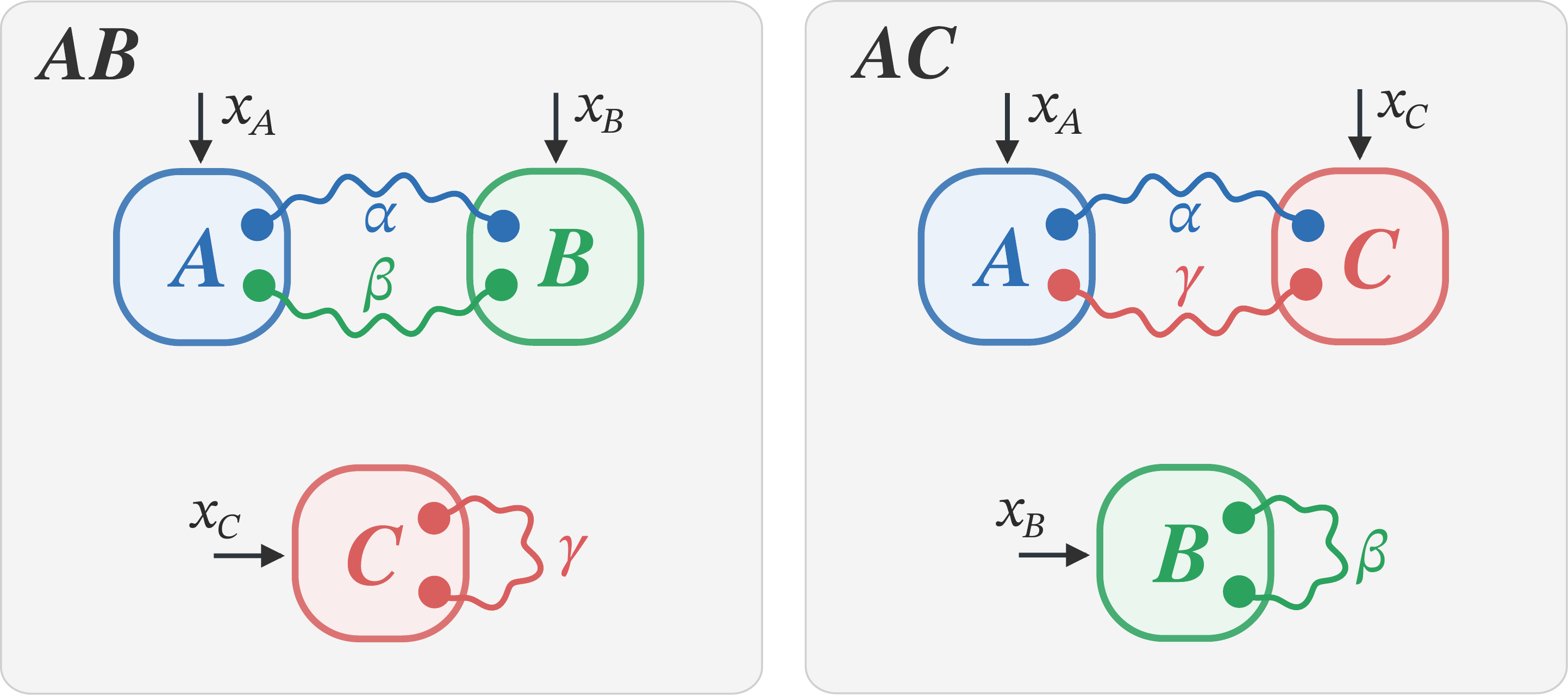}
    \caption{All the configurations of two-party loops including $A$.}
    \label{fig:2loopswithA}
\end{figure}
Thus, the structural assumptions of \cref{thm:fullselftesting_N2} are satisfied for all two-party loops. We further note that the assumptions on the trusted measurement coefficients are satisfied only for the two-party loops involving $A$, namely $AC$ and $AB$.

First consider the loop $AC$. The qubit strategy in this loop must reproduce the distribution $P_f^{AC}$. This distribution corresponds to the trusted qubit strategy described in Supplemental Material~\ref{appendix:ExperimentAndMainResult} where the phases satisfy $\varphi_\alpha + \varphi_\gamma = \pi$.
Therefore, for the $AC$ loop, we obtain
\begin{align}
\label{eq:loopAC_alpha}
U_{A_L}^{(\gamma)} U_{C_R}^{(\alpha)} \ket{\hat{\psi}_\alpha} &=  \frac{1}{\sqrt{2}}
\bigl(
e^{\ii \hat{\varphi}^{AC}_\alpha} \ket{01}
+ \ket{10}
\bigr)_{S_{\alpha_R} S_{\alpha_L}}
\otimes
\ket{\zeta_\alpha^{a}}_{J_{\alpha}} \\
\label{eq:loopAC_gamma}
U_{C_L}^{(\alpha)} U_{A_R}^{(\gamma)} \ket{\hat{\psi}_\gamma} &=  \frac{1}{\sqrt{2}}
\bigl(
e^{\ii \hat{\varphi}^{AC}_\gamma} \ket{01}
+ \ket{10}
\bigr)_{S_{\gamma_R} S_{\gamma_L}}
\otimes
\ket{\zeta_\gamma^{a}}_{J_{\gamma}},
\end{align}
where $\hat{\varphi}^{AC}_\alpha + \hat{\varphi}^{AC}_\gamma= \phi_{AC}=\pi$.

Similarly for loop $AB$, we obtain
\begin{align}
\label{eq:loopAB_alpha}
U_{A_L}^{(\beta)} U_{B_R}^{(\alpha)} \ket{\hat{\psi}_\alpha} &=  \frac{1}{\sqrt{2}}
\bigl(
e^{\ii \hat{\varphi}^{AB}_\alpha} \ket{01}
+ \ket{10}
\bigr)_{S_{\alpha_R} S_{\alpha_L}}
\otimes
\ket{\zeta_\alpha^{a}}_{J_{\alpha}} \\
\label{eq:loopAB_beta}
U_{B_L}^{(\alpha)} U_{A_R}^{(\beta)} \ket{\hat{\psi}_\beta} &=  \frac{1}{\sqrt{2}}
\bigl(
e^{\ii \hat{\varphi}^{AB}_\beta} \ket{01}
+ \ket{10}
\bigr)_{S_{\beta_R} S_{\beta_L}}
\otimes
\ket{\zeta_\beta^{a}}_{J_{\beta}},
\end{align}
where $\hat{\varphi}^{AB}_\alpha + \hat{\varphi}^{AB}_\beta= \phi_{AB}=\pi$.

The locality assumption ensures that the junk states and local unitaries appearing in these equations are the same across the different loop configurations.

Note that a similar argument to that used for loops $AC$ and $AB$ does not apply to loop $BC$, since the trusted measurement coefficients in this loop do not satisfy \cref{eq:meas_coeff_forA,eq:meas_coeff_forother}. Nevertheless, the distribution in this loop can be determined from the inferred properties of the measurement operators in the other loops, which coincide with those in loop $BC$ by the locality constraint. Indeed, we will show that the resulting distribution differs from the fermionic distribution in loop $BC$.

From \cref{eq:loopABC_alpha,eq:loopAB_alpha}, it follows that the actions of the operators $U_{A_L}^{(\gamma)}$ and $U_{A_L}^{(\beta)}$ on the state $\ket{\hat{\psi}_\alpha}$, when projected with $\bra{0}_{S_{\alpha_R}}$, coincide up to a phase factor:
\begin{align}
   \bra{0}_{S_{\alpha_R}} U_{A_L}^{(\gamma)}  \ket{\hat{\psi}_\alpha} &=  \frac{1}{\sqrt{2}}
e^{\ii \hat{\varphi}^{ABC}_\alpha}\big(U_{B_R}^{(\alpha)}\big)^\dagger \ket{1}_{ S_{\alpha_L}}
\otimes
\ket{\zeta_\alpha^{a}}_{J_{\alpha}},\\
\bra{0}_{S_{\alpha_R}} U_{A_L}^{(\beta)}  \ket{\hat{\psi}_\alpha} &=  \frac{1}{\sqrt{2}}
e^{\ii \hat{\varphi}^{AB}_\alpha}\big(U_{B_R}^{(\alpha)}\big)^\dagger \ket{1}_{ S_{\alpha_L}}
\otimes
\ket{\zeta_\alpha^{a}}_{J_{\alpha}}.
\end{align}
This motivates to define a phase $\phi_{A_L}$ as
\begin{align}
\label{eq:OperatorPhaseFactorRelation1}
\phi_{A_L} := \hat{\varphi}^{AB}_\alpha - \hat{\varphi}^{ABC}_\alpha.
\end{align}
Similarly, from \cref{eq:loopACB_beta,eq:loopAB_beta}, it follows that the actions of $U_{B_L}^{(\gamma)}$ and $U_{B_L}^{(\alpha)}$ on $\ket{\hat{\psi}_\beta}$, when projected with $\bra{0}_{S_{\beta_R}}$, coincide up to the phase factor
\begin{align}
\label{eq:OperatorPhaseFactorRelation2}
\phi_{B_L} := \hat{\varphi}^{ACB}_\beta - \hat{\varphi}^{AB}_\beta.
\end{align}
The same equations also give
\begin{align}
\label{eq:usedinloopBC1}
\bra{1}_{S_{\beta_R}} U_{B_L}^{(\gamma)}\ket{\hat{\psi}_\beta}= \bra{1}_{S_{\beta_R}} U_{B_L}^{(\alpha)}\ket{\hat{\psi}_\beta}.
\end{align}
Moreover, from \cref{eq:loopABC_gamma,eq:loopAC_gamma}, it follows that the actions of $U_{C_L}^{(\alpha)}$ and $U_{C_L}^{(\beta)}$ on $\ket{\hat{\psi}_\gamma}$, when projected onto $\bra{0}_{S_{\gamma_R}}$, coincide up to the phase factor
\begin{equation}
\label{eq:OperatorPhaseFactorRelation3}
\phi_{C_L} := \hat{\varphi}^{AC}_\gamma - \hat{\varphi}^{ABC}_\gamma.
\end{equation}
Considering \cref{eq:loopABC_beta,eq:loopAB_beta}, it follows that the actions of $U_{A_R}^{(\beta)}$ and $U_{C_R}^{(\beta)}$ on $\ket{\hat{\psi}_\beta}$, when projected onto $\bra{1}_{S_{\beta_L}}$, coincide up to the phase factor
\begin{equation}
\label{eq:OperatorPhaseFactorRelation4}
\phi_{A_R} := \hat{\varphi}^{AB}_\beta - \hat{\varphi}^{ABC}_\beta.
\end{equation}
Similarly, from \cref{eq:loopACB_gamma,eq:loopAC_gamma}, it follows that the actions of $U_{B_R}^{(\gamma)}$ and $U_{A_R}^{(\gamma)}$ on $\ket{\hat{\psi}_\gamma}$, when projected onto $\bra{1}_{S_{\gamma_L}}$, coincide up to the phase factor
\begin{equation}
\label{eq:OperatorPhaseFactorRelation5}
\phi_{B_R} := \hat{\varphi}^{ACB}_\gamma -\hat{\varphi}^{AC}_\gamma .
\end{equation}
These equations also yield
\begin{align}
\label{eq:usedinloopBC2}
\bra{0}_{S_{\gamma_L}} U_{B_R}^{(\gamma)}\ket{\hat{\psi}_\gamma}= \bra{0}_{S_{\gamma_L}} U_{A_R}^{(\gamma)}\ket{\hat{\psi}_\gamma}.
\end{align}
Finally, from \cref{eq:loopABC_alpha,eq:loopAC_alpha}, it follows that the actions of $U_{C_R}^{(\alpha)}$ and $U_{B_R}^{(\alpha)}$ on $\ket{\hat{\psi}_\alpha}$, when projected onto $\bra{1}_{S_{\alpha_L}}$, coincide up to the phase factor
\begin{equation}
\label{eq:OperatorPhaseFactorRelation6}
\phi_{C_R} := \hat{\varphi}^{AC}_\alpha - \hat{\varphi}^{ABC}_\alpha.
\end{equation}
We note that, using \cref{eq:loopACB_alpha,eq:loopAB_alpha} and repeating the same argument, we realize that this phase factor also satisfies
\begin{equation}
\label{eq:OperatorPhaseFactorRelation7}
\phi_{C_R} = \hat{\varphi}^{AB}_\alpha - \hat{\varphi}^{ACB}_\alpha.
\end{equation}

We now rewrite the phase constraints in each loop using the relations derived above.

For the loop $ABC$, the phase constraint reads
\begin{align}
\hat{\varphi}^{ABC}_\alpha + \hat{\varphi}^{ABC}_\beta + \hat{\varphi}^{ABC}_\gamma
= \phi_{ABC} = 0.
\end{align}
For the loop $AC$, the phase condition $\hat{\varphi}^{AC}_\alpha + \hat{\varphi}^{AC}_\gamma = \phi_{AC} = \pi$, together with \cref{eq:OperatorPhaseFactorRelation3,eq:OperatorPhaseFactorRelation6}, yields
\begin{align}
\hat{\varphi}^{ABC}_\alpha + \hat{\varphi}^{ABC}_\gamma + \phi_{C_R} + \phi_{C_L}
= \pi.
\end{align}
For the loop $AB$, the phase condition $\hat{\varphi}^{AB}_\alpha + \hat{\varphi}^{AB}_\beta = \phi_{AB} = \pi$, together with \cref{eq:OperatorPhaseFactorRelation1,eq:OperatorPhaseFactorRelation4}, gives
\begin{align}
\hat{\varphi}^{ABC}_\alpha + \hat{\varphi}^{ABC}_\beta + \phi_{A_R} + \phi_{A_L}
= \pi.
\end{align}
For the loop $ACB$, we express each phase using the relations derived above. Then, \cref{eq:OperatorPhaseFactorRelation2,eq:OperatorPhaseFactorRelation4} imply
\[
\hat{\varphi}^{ACB}_\beta = \hat{\varphi}^{ABC}_\beta + \phi_{B_L} + \phi_{A_R},
\]
and \cref{eq:OperatorPhaseFactorRelation3,eq:OperatorPhaseFactorRelation5} yield
\[
\hat{\varphi}^{ACB}_\gamma = \hat{\varphi}^{ABC}_\gamma + \phi_{C_L} + \phi_{B_R}.
\]
Similarly, using \cref{eq:OperatorPhaseFactorRelation7,eq:OperatorPhaseFactorRelation1}, we obtain
\[
\hat{\varphi}^{ACB}_\alpha = \hat{\varphi}^{ABC}_\alpha + \phi_{C_R} + \phi_{A_L}.
\]
Substituting these expressions into the phase constraint $\hat{\varphi}^{ACB}_\alpha + \hat{\varphi}^{ACB}_\beta + \hat{\varphi}^{ACB}_\gamma = \phi_{ACB} = 0$, we obtain
\begin{align}
\hat{\varphi}^{ABC}_\alpha + \hat{\varphi}^{ABC}_\beta + \hat{\varphi}^{ABC}_\gamma
+ \phi_{A_R} + \phi_{A_L} + \phi_{B_R} + \phi_{B_L} + \phi_{C_R} + \phi_{C_L}
= 0.
\end{align}

\begin{figure}
    \centering
    \includegraphics[width=0.3\linewidth]{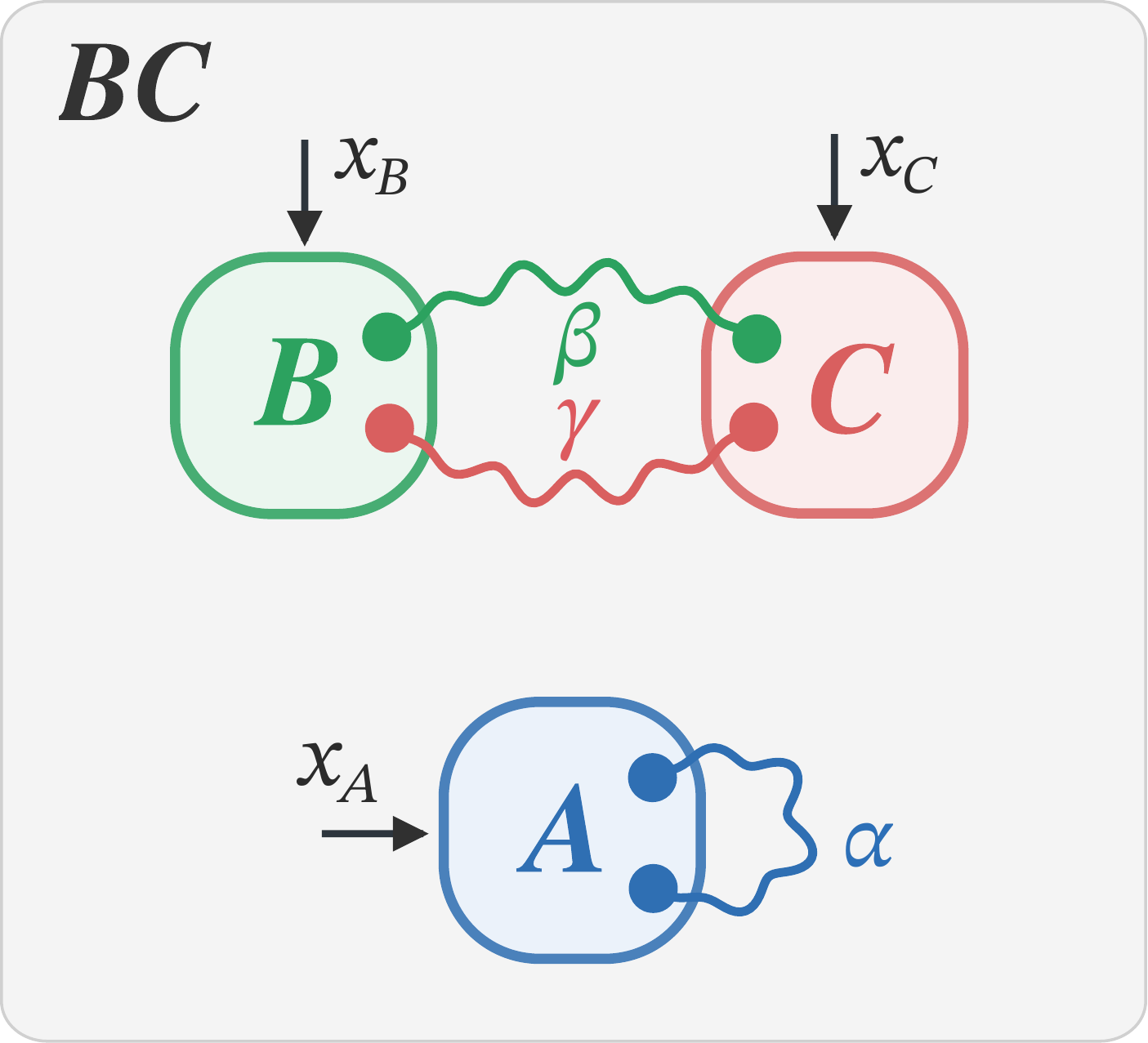}
    \caption{Loop $BC$.}
    \label{fig:bcloop}
\end{figure}
We now derive the corresponding phase constraint for the loop $BC$ using the relations obtained above. By adding the phase constraints of the three-party loops and subtracting those of the two-party loops, we obtain
\begin{align}
\label{eq:AccumulatedPhaseInBC}
\hat{\varphi}^{ABC}_\beta + \hat{\varphi}^{ABC}_\gamma + \phi_{B_R} + \phi_{B_L} = 0.
\end{align}
This expression corresponds to the total accumulated phase in the loop $BC$, as we show below. 

As illustrated in \cref{fig:bcloop}, the operators associated with this loop are $U_{B_L}^{(\gamma)} U_{B_R}^{(\gamma)}$ and $U_{C_R}^{(\beta)} U_{C_L}^{(\beta)}$. From \cref{eq:loopABC_beta,eq:OperatorPhaseFactorRelation2,eq:usedinloopBC1}, we obtain
\begin{align}
U_{B_L}^{(\gamma)} U_{C_R}^{(\beta)} \ket{\hat{\psi}_\beta}
&= \frac{1}{\sqrt{2}}
\bigl(
e^{\ii (\hat{\varphi}^{ABC}_\beta + \phi_{B_L})} \ket{01}
+ \ket{10}
\bigr)_{S_{\beta_R} S_{\beta_L}}
\otimes
\ket{\zeta_\beta^{a}}_{J_{\beta}}.
\end{align}
Similarly, from \cref{eq:loopABC_gamma,eq:OperatorPhaseFactorRelation5,eq:usedinloopBC2}, we have
\begin{align}
U_{C_L}^{(\beta)} U_{B_R}^{(\gamma)} \ket{\hat{\psi}_\gamma}
&= \frac{1}{\sqrt{2}}
\bigl(
e^{\ii (\hat{\varphi}^{ABC}_\gamma + \phi_{B_R})} \ket{01}
+ \ket{10}
\bigr)_{S_{\gamma_R} S_{\gamma_L}}
\otimes
\ket{\zeta_\gamma^{a}}_{J_{\gamma}}.
\end{align}
Therefore, the total accumulated phase in the loop $BC$ is
\begin{equation}
\phi_{BC} := \hat{\varphi}^{ABC}_\beta + \hat{\varphi}^{ABC}_\gamma + \phi_{B_R} + \phi_{B_L} = 0.
\end{equation}

We are now ready to compute the conditional probability distribution in the loop $BC$. The conditional state in this loop, after the application of the local unitaries, reads
\begin{equation}
\ket{\tilde{\psi}}^{\mathrm{cond}}_{BC} = \frac{1}{\sqrt{2}}
\bigl(
e^{\ii \phi_{BC}} \ket{01,01} + \ket{10,10}
\bigr)_{S_{\beta_R} S_{\beta_L} S_{\gamma_R} S_{\gamma_L}}
\otimes \ket{\zeta_\beta^{a}}_{J_{\beta}} \otimes \ket{\zeta_\gamma^{a}}_{J_{\gamma}},
\end{equation}
where $\phi_{BC} = 0$.

Rewriting the system part in terms of the party labeling, we obtain
\begin{equation}
\label{eq:BCState}
\ket{\tilde{\psi}}^{\mathrm{cond}}_{BC} = \frac{1}{\sqrt{2}}
\bigl(
\ket{01,01} + \ket{10,10}
\bigr)_{S_{B_L} S_{C_R} S_{C_L} S_{B_R}}
\otimes \ket{\zeta_\beta^{a}}_{J_{\beta}} \otimes \ket{\zeta_\gamma^{a}}_{J_{\gamma}}.
\end{equation}

We now recall the conditional PVM elements of the measurement in this loop.
\begin{equation}
\begin{aligned}
\ket{\hat{m}^{s_i}_{i, x_i=1}}\bra{\hat{m}^{s_i}_{i, x_i=1}} &= \frac{\id + s_i \mathcal{X}_i}{2}, \\
\ket{\hat{m}^{s_i}_{i, x_i=0}}\bra{\hat{m}^{s_i}_{i, x_i=0}} &= \frac{\id + s_i \mathcal{Z}_i}{2}.
\end{aligned}
\end{equation}

Since the state in this loop is self-tested (see \cref{eq:BCState}), we can determine the corresponding self-tested measurements and compute the resulting distribution. According to \cref{thm:fullselftesting}, the measurement operators satisfy
\begin{equation}
\begin{aligned}
U_i \mathcal{Z}_i U_i^\dagger &= (Z)_{S_i} \otimes \id_{J_i}, \\
U_i \mathcal{X}_i U_i^\dagger &= (X)_{S_i} \otimes \id_{J_i},
\end{aligned}
\end{equation}
where
\begin{equation*}
\begin{aligned}
(Z)_{S_i} &= (\ketbra{01}{01} - \ketbra{10}{10})_{S_{i_R} S_{i_L}}, \\
(X)_{S_i} &= (\ketbra{01}{10} + \ketbra{10}{01})_{S_{i_R} S_{i_L}}.
\end{aligned}
\end{equation*}
Therefore, the self-tested POVM elements are
\begin{equation}
\begin{aligned}
\ket{\tilde{m}^{s_i}_{i, x_i=1}}\bra{\tilde{m}^{s_i}_{i, x_i=1}} &= \left( \frac{\id + s_i X_i}{2} \right)_{S_i} \otimes \id_{J_i}, \\
\ket{\tilde{m}^{s_i}_{i, x_i=0}}\bra{\tilde{m}^{s_i}_{i, x_i=0}} &= \left( \frac{\id + s_i Z_i}{2} \right)_{S_i} \otimes \id_{J_i}.
\end{aligned}
\end{equation}

We now consider the following conditional probability in the loop $BC$:
\[
P^{BC}(a_B, a_C \mid (a_B, a_C) \in \{1_+, 1_- \}^2 , x_B, x_C).
\]
For $x_B = x_C = 1$, this is given by
\begin{equation}
\begin{aligned}
P^{BC}(a_B, a_C \mid (a_B, a_C) \in \{1_+, 1_- \}^2 , x_B = x_C = 1)
&= \frac{1}{4} \bra{\psi}^{\mathrm{cond}}_{BC}
(\id + s_B \mathcal{X}_B)(\id + s_C \mathcal{X}_C)
\ket{\psi}^{\mathrm{cond}}_{BC} \\
&= \frac{1}{4} \bra{\psi}
(\id + s_B X_B)(\id + s_C X_C)
\ket{\psi},
\end{aligned}
\end{equation}
where
\[
\ket{\psi} = \frac{1}{\sqrt{2}}
\bigl(
\ket{01,01} + \ket{10,10}
\bigr)_{S_{B_L} S_{C_R} S_{C_L} S_{B_R}}.
\]
A direct computation yields
\begin{equation}
P^{BC}(a_B, a_C \mid (a_B, a_C) \in \{1_+, 1_- \}^2 , x_B = x_C = 1)
= \frac{1}{4} \bigl(1 + s_B s_C\bigr).
\end{equation}
On the other hand, from \cref{eq:Dist_2}, the fermionic strategy predicts
\begin{equation}
\begin{aligned}
P_f^{BC}(a_B, a_C \mid (a_B, a_C) \in \{1_+, 1_- \}^2 , x_B = x_C = 1)
&= \frac{1}{2} \bigl| u_{B,1}^{s_B} u_{C,1}^{s_C} - v_{B,1}^{s_B} v_{C,1}^{s_C} \bigr|^2 \\
&= \frac{1}{2} \left( \frac{1}{2} - \frac{s_B s_C}{2} \right)^2 \\
&= \frac{1}{4} \bigl(1 - s_B s_C\bigr),
\end{aligned}
\end{equation}
where we used $u_{i,1}^{s_i} = \frac{1}{\sqrt{2}}$ and $v_{i,1}^{s_i} = s_i \frac{1}{\sqrt{2}}$ for $i \in \{B,C\}$.
For each outcome \((a_B,a_C)\), the individual terms differ by
\begin{align}
&\left|
P_f^{BC}
\left(
a_B,a_C
\,\middle|\,
(a_B, a_C) \in \{1_+, 1_- \}^2 ,\,
x_B=x_C=1
\right)
\right. \nonumber \\
&\left.
\hspace{1.5cm}
-
P^{BC}
\left(
a_B,a_C
\,\middle|\,
(a_B, a_C) \in \{1_+, 1_- \}^2 ,\,
x_B=x_C=1
\right)
\right|
=
\frac{1}{2}.
\end{align}
Hence, for the total variation distance between the two conditional
distributions, we obtain
\begin{align}
d^{\rm cond}_{\rm TV}
\left(
P_f^{BC},P^{BC}
\right) = \dfrac{1}{2}&\sum_{\substack{a_B,\,a_C \\  x_B,x_C}}
\left|
P_f^{BC}
\left(
a_B,a_C
\,\middle|\,
(a_B, a_C) \in \{1_+, 1_- \}^2 ,\,
x_B , x_C
\right)
\right. \nonumber \\
&\left.
\hspace{1.5cm}
-
P^{BC}
\left(
a_B,a_C
\,\middle|\,
(a_B, a_C) \in \{1_+, 1_- \}^2 ,\,
x_B, x_C
\right)
\right|
=
1.
\end{align}

Therefore, the predicted distributions differ, contradicting the assumption that a local Q-QIT strategy reproduces all fermionic loop distributions. This proves the claim.

Note that the discrepancy between the distributions $P_f^{BC}$ and $P^{BC}$ cannot simply be resolved by relabeling the conditional outcomes $a_B,a_C$, as this would change the correlations in the remaining loops. In fact, this observation is consistent with the fact that naive simulation strategies where a local ``-1" phase is introduced by one of the parties are impossible. For a relabeling strategy of the outcomes $a_B,a_C$ to work, the parties would need to know in which loops $l$ the relabeling is performed and in which loops it is not. This is exactly the information that is not accessible to the parties, giving rise to a contradiction. 

Finally, we note that the coarse-grained distributions in the fermionic and qubit cases coincide for this loop. Hence, the difference between the full distributions is captured by
\begin{equation}
\begin{aligned}
d_{\rm TV}\!\left(P_f^{BC},P^{BC}\right)
&=
\frac{1}{2}
\sum_{\substack{a_B,\,a_C \\ x_B,x_C}}
\left|
P_f^{BC}\!\left(a_B,a_C \mid x_B,x_C\right)
-
P^{BC}\!\left(a_B,a_C \mid x_B,x_C\right)
\right| =
\frac{1}{2}.
\end{aligned}
\end{equation}

\section{Bosonic simulation using pre-shared entanglement}\label{appendix:BK-simulation}

While we prove that the thought experiment described above cannot be reproduced by a fully local bosonic strategy, we remark that it could be reproduced if we allowed the bosonic parties to initially share a fixed entangled global state. This highlights the crucial nature for our result of the assumption that the bosonic parties, just like the fermionic ones, begin the protocol with a locally preparable product state. A bosonic system with free access to a suitable source of globally entangled states could not in fact be operationally distinguished from a fermionic system.

For completeness, we discuss here a possible construction of how our thought experiment could be simulated by bosons with access to a pre-shared entangled state. The idea is to apply a fermion-to-qubit mapping that preserves locality of operators (though not of states). There exist many variants of mappings of this type, mostly inspired by the superfast encodings of Bravyi and Kitaev~\cite{bravyi_fermionic_2002}: our construction specifically is based on the \emph{generalised superfast encodings} of Ref.~\cite{setia_superfast_2019}.

As discussed in Supplemental Material~\ref{app:experimental-implementation}, in order to implement the thought experiment a fermionic system needs to be able to apply gates of the form $U=e^{-i\alpha H_{jk}}$ on pairs of modes $(j,k)$, with $H_{jk}=i(f_{j}^\dagger f_{k} -f_{k}^\dagger f_{j})$, and to perform measurements of the occupation number of single modes. Our goal here is therefore to show that both the Hamiltonian $H_{jk}$ and the occupation number operator $n_{k}=f^\dagger_{k} f_{k} $ can be implemented as local operators on a suitable qubit system, once an initial entangled state is shared. To do this we follow Refs.~\cite{bravyi_fermionic_2002, setia_superfast_2019} and introduce for every pair of modes $(j,k)$ the Hermitian operators
\begin{align}
    A_{jk}:=-i(f^\dagger_j+f_j)(f^\dagger_k+f_k)\,, \quad  \quad B_k:=1-2f_k^\dagger f_k \,. \label{eq:fermionic_AB}
\end{align}
After some algebra, one can show that the Hamiltonian $H_{jk}$ and the number operator $n_k$ can be expressed in terms of $A_{jk}$ and $B_k$ as 
\begin{align}
    H_{jk}=\frac{1}{2}A_{jk}(B_jB_k-1)\,, \quad  \quad n_k=\frac{1-B_k}{2} \,. \label{eq:H_and_n}
\end{align}
If follows that we can equivalently focus on defining a local qubit implementation for $A_{jk}$ and $B_k$, from which we will be then able to construct the one of $H_{jk}$ and $n_k$.

Let us now consider in more detail for which pairs of modes $(j,k)$ we need to implement these operators $A_{jk}$. Firstly, each party needs to apply a gate on the pair $(\theta_R,\theta_L)$ with $\theta\in\{\alpha,\beta,\gamma\}$ to prepare the initial state (Step 2. in Supplemental Material~\ref{app:experimental-implementation}).  Then, after the modes are rearranged by the referee, each party will hold their initial mode $\theta_R$ and receive a further mode $\tilde\theta_L$ from one of the other parties. Each party will need to apply a gate on this pair of modes $(\tilde\theta_L,\theta_R)$ that it is now holding (Step 4. in Supplemental Material~\ref{app:experimental-implementation}). For example, party $A$ keeps $\alpha_R$ and could receive either $\beta_L$ or $\gamma_L$ and would then need to perform a gate on $(\beta_L,\alpha_R)$ or $(\gamma_L,\alpha_R)$ (in the case that it is excluded from the loop, it will again perform a gate on $(\alpha_R,\alpha_L)$ as in the previous step). Each of these gates needs to be implemented locally by only acting on the two subsystems $(\tilde\theta_L,\theta_R)$ that the party is holding at the moment the gate is applied. In summary, all the pairs of modes that, at any point in any round of the experiment, may need to be acted jointly upon are represented by edges in the graph $G$ shown in Figure~\ref{BK-graph}.

\begin{figure}[ht]
    \centering
    \includegraphics[width=0.4\linewidth]{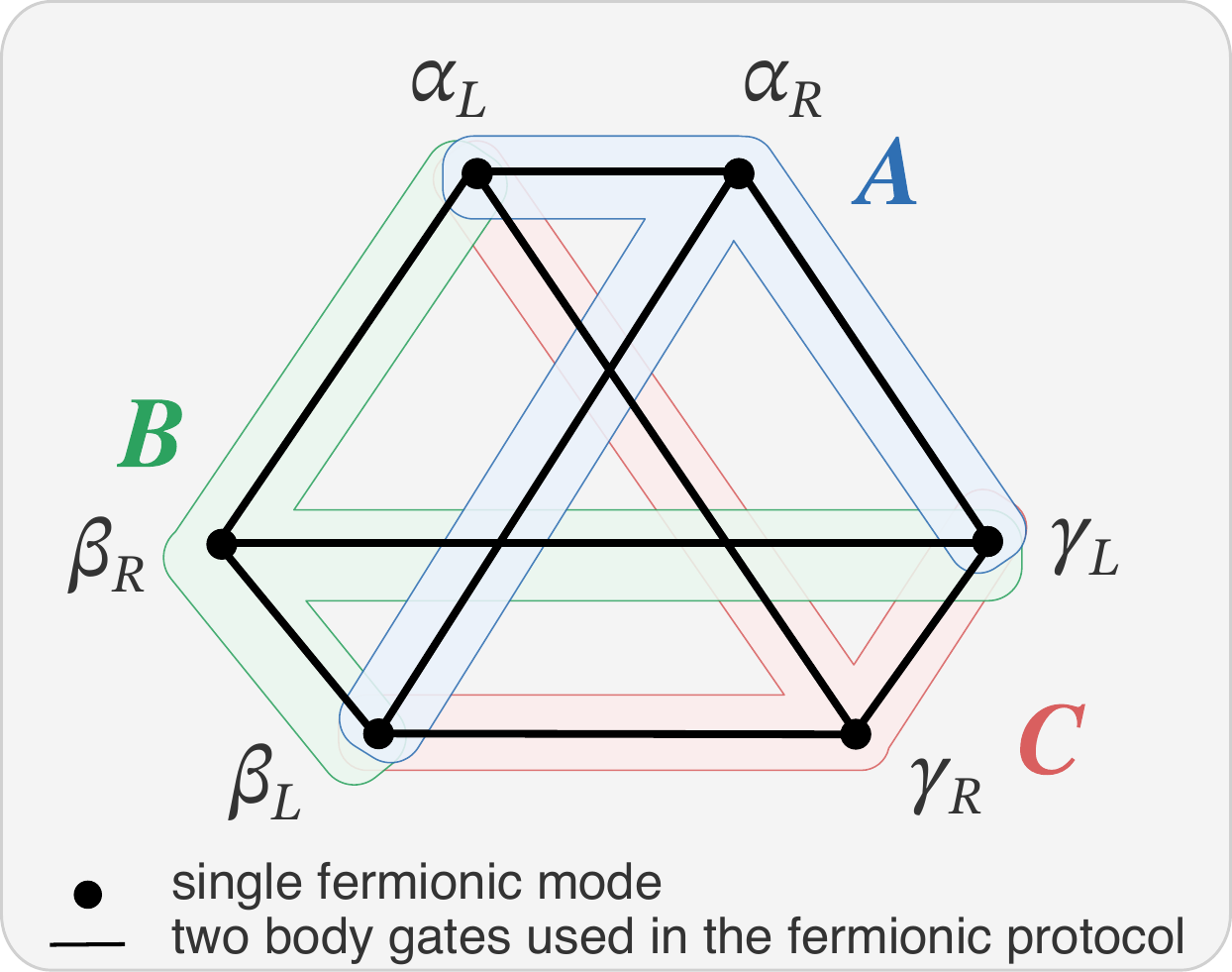}
    \caption{The black lines representat the graph $G$: each vertex represents a fermionic mode in the experiment, two vertices are connected if the two modes are jointly acted on by a local party at some point in some round of the experiment. The coloured shadings reflect which party acts on each such pair of modes.}
    \label{BK-graph}
\end{figure}

We are now ready to introduce a qubit construction that reproduces these local fermionic operators. For this, the bosonic parties will represent each fermionic mode by a system of two qubits. We will label $\alpha_R$, $\alpha_R'$ the qubits representing the fermionic mode $\alpha_R$, and analogously for the remaining modes. We will further denote by $X_k$, $Y_k$, $Z_k$ the standard Pauli operators acting on qubit $k$. We now introduce a set of qubit operators $\hat{A}_{jk}$ and $\hat{B}_{k}$ for every edge $(j,k)$ and vertex $k$ in the graph $G$. These are intended to be qubit representations of the corresponding fermionic operators $A_{jk}$ and $B_{k}$ from Eq.~\eqref{eq:fermionic_AB} and are defined as follows:
\begin{align}
    \hat{A}_{\alpha_R,\alpha_L}&=X_{\alpha_R}X_{\alpha_L}           &\qquad  \hat B_{\alpha_R}&=Z_{\alpha_R} Z_{\alpha_R'},\nonumber\\
    \hat{A}_{\beta_R,\beta_L}&=X_{\beta_R}X_{\beta_L}           &\qquad \hat B_{\alpha_L}&=Z_{\alpha_L} Z_{\alpha_L'},\nonumber\\
    \hat{A}_{\gamma_R,\gamma_L}&=X_{\gamma_R}X_{\gamma_L}           &\qquad  \hat B_{\beta_R}&=Z_{\beta_R} Z_{\beta_R'},\nonumber\\
    \hat{A}_{\alpha_L,\beta_R}&=X_{\alpha_L'}Z_{\alpha_L} X_{\beta_R'}Z_{\beta_R} &\qquad  \hat B_{\beta_L}&=Z_{\beta_L} Z_{\beta_L'},\nonumber\\
    \hat{A}_{\beta_L,\gamma_R}&=X_{\beta_L'}Z_{\beta_L} X_{\gamma_R'}Z_{\gamma_R} &\qquad  \hat B_{\gamma_R}&=Z_{\gamma_R} Z_{\gamma_R'},\nonumber\\
    \hat{A}_{\gamma_L,\alpha_R}&=X_{\gamma_L'}Z_{\gamma_L} X_{\alpha_R'}Z_{\alpha_R} &\qquad  \hat B_{\gamma_L}&=Z_{\gamma_L} Z_{\gamma_L'},\nonumber\\
    \hat{A}_{\alpha_L,\gamma_R}&=Y_{\alpha_L'}Z_{\alpha_L} Y_{\gamma_R'}Z_{\gamma_R}, \nonumber\\
    \hat{A}_{\beta_L,\alpha_R}&=Y_{\beta_L'}Z_{\beta_L} Y_{\alpha_R'}Z_{\alpha_R}, \nonumber\\
    \hat{A}_{\gamma_L,\beta_R}&=Y_{\gamma_L'}Z_{\gamma_L} Y_{\beta_R'}Z_{\beta_R}. \label{eq:encoded_AB}
\end{align}
Firstly, notice that each of these operators only acts on the qubits pertaining to the subsystems that the corresponding fermionic operator is supposed to act on. For example, when the party $A$ needs to implement the operator $\hat{A}_{\beta_L,\alpha_R}$ on its subsystem $\alpha_R$ and on the subsystem $\beta_L$ it has received from $B$, then it can do it just by acting on the qubits representing precisely the subsystems $\alpha_R$ and $\beta_L$ which it holds. In other words, these operators satisfy the locality conditions of the experiment. 
One subtlety here is that the parties need to know from whom they received the subsystem, in order to know how to act. For instance, party $A$ needs to apply $\hat{A}_{\beta_L,\alpha_R}$ when it receives a system from $B$ and $\hat{A}_{\gamma_L,\alpha_R}$ when it receives a system from $C$: as $\hat{A}_{\beta_L,\alpha_R}$ and $\hat{A}_{\gamma_L,\alpha_R}$ have a different form it needs to know in which case it finds itself. However, it is not hard to convey this information without violating the protocol's assumptions: each party could for example append a further register of qubits to their transferred subsystem which encodes the identity of the sender. The receiving parties would need to apply the corresponding $\hat{A}_{jk}$ conditioned on the value of this extra register.

Second, following Ref.~\cite{setia_superfast_2019}, we can observe that these operators $\hat{A}_{jk}$ and $\hat{B}_{k}$ obey exactly the same mutual (anti-)commutation relations as the corresponding fermionic operators $A_{jk}$ and $B_{k}$. However, for $\hat{A}_{jk}$ and $\hat{B}_{k}$ to exactly reproduce the fermionic operator algebra they also need to fulfill some further conditions. Indeed, the fermionic operators $A_{jk}$ are not all algebraically independent and the operators $\hat{A}_{jk}$ should satisfy the same mutual dependencies. These dependencies can be reduced to the following conditions~\cite{bravyi_fermionic_2002, setia_superfast_2019}:
\begin{equation}
    i^sA_{k_1,k_2} A_{k_2,k_3}\cdots A_{k_s,k_1} = \id,
\end{equation}
for all cycles in the graph G (\textit{i.e.} for all closed sequences of edges $(k_1,k_2), (k_2,k_3), \dots (k_s,k_1)$). It is clear that the operators defined above in~\eqref{eq:encoded_AB} do not satisfy these relations exactly (and it would not be possible to achieve this for the given graph, see Ref.~\cite{Guaita2025localityofqubit}). However, in the spirit of the superfast encodings, we can define a subspace $\mathcal{C}$ of states in the global qubit Hilbert space for which we have $i^sA_{k_1,k_2} A_{k_2,k_3}\cdots A_{k_s,k_1}\ket{\psi} =\ket{\psi}$ for all $\ket{\psi}\in\mathcal{C}$ and all cycles in the graph $G$. That is, within this subspace the operators $\hat{A}_{jk}$ and $\hat{B}_{k}$ act exactly like the fermionic operator algebra. The space $\mathcal{C}$ represents the encoded fermionic states. 

The final step that we need to complete the construction is to find the state within $\mathcal{C}$ that corresponds to the initial state of the fermionic protocol. According to Step 1. in Supplemental Material~\ref{app:experimental-implementation} this is the fermionic state $\fket{010101}_{\alpha_R\alpha_L\beta_R\beta_L\gamma_R\gamma_L}$. The corresponding qubit state $\ket{\psi_{\rm init}}\in\mathcal{C}$ is identified by two sets of conditions. First, it needs to satisfy the defining conditions of the subspace $\mathcal{C}$, namely $i^sA_{k_1,k_2} A_{k_2,k_3}\cdots A_{k_s,k_1}\ket{\psi_{\rm init}} =\ket{\psi_{\rm init}}$ for all cycles in $G$. Second, it needs to satisfy the defining conditions of the Fock state $\fket{010101}_{\alpha_R\alpha_L\beta_R\beta_L\gamma_R\gamma_L}$, that is $n_{\alpha_R}\ket{\psi_{\rm init}}=n_{\beta_R}\ket{\psi_{\rm init}}=n_{\gamma_R}\ket{\psi_{\rm init}}=0$ and $n_{\alpha_L}\ket{\psi_{\rm init}}=n_{\beta_L}\ket{\psi_{\rm init}}=n_{\gamma_L}\ket{\psi_{\rm init}}=\ket{\psi_{\rm init}}$. Noting that there are four independent cycles in the graph $G$ and applying relation~\eqref{eq:H_and_n}, these conditions ultimately are
\begin{align}
    \hat B_{\alpha_R}\ket{\psi_{\rm init}}&=\ket{\psi_{\rm init}},   &\quad \hat{A}_{\alpha_R,\alpha_L}\hat{A}_{\alpha_L,\beta_R}\hat{A}_{\beta_R,\beta_L}\hat{A}_{\beta_L,\alpha_R}\ket{\psi_{\rm init}}&=\ket{\psi_{\rm init}}, \nonumber \\
    \hat B_{\beta_R}\ket{\psi_{\rm init}}&=\ket{\psi_{\rm init}},   &\quad\hat{A}_{\beta_R,\beta_L}\hat{A}_{\beta_L,\gamma_R}\hat{A}_{\gamma_R,\gamma_L}\hat{A}_{\gamma_L,\beta_R}\ket{\psi_{\rm init}}&=\ket{\psi_{\rm init}},\nonumber\\
    \hat B_{\gamma_R}\ket{\psi_{\rm init}}&=\ket{\psi_{\rm init}}, &\quad \hat{A}_{\gamma_R,\gamma_L}\hat{A}_{\gamma_L,\alpha_R}\hat{A}_{\alpha_R,\alpha_L}\hat{A}_{\alpha_L,\gamma_R}\ket{\psi_{\rm init}}&=\ket{\psi_{\rm init}},\nonumber\\
    -\hat B_{\alpha_L}\ket{\psi_{\rm init}}&=\ket{\psi_{\rm init}}, &\quad  -\hat{A}_{\alpha_R,\alpha_L}\hat{A}_{\alpha_L,\beta_R}  \hat{A}_{\beta_R,\beta_L}\hat{A}_{\beta_L,\gamma_R}\hat{A}_{\gamma_R,\gamma_L}\hat{A}_{\gamma_L,\alpha_R}\ket{\psi_{\rm init}}&=\ket{\psi_{\rm init}}, \nonumber\\
    -\hat B_{\beta_L}\ket{\psi_{\rm init}}&=\ket{\psi_{\rm init}},  &\quad &\nonumber\\
     -\hat B_{\gamma_L}\ket{\psi_{\rm init}}&=\ket{\psi_{\rm init}}.  &\quad &
\end{align}
Substituting the expressions for $\hat{A}_{jk}$ and $\hat{B}_{k}$ from~\eqref{eq:encoded_AB}, we finally have that $\ket{\psi_{\rm init}}$ is defined by the following ten independent conditions:
\begin{align}
    Z_{\alpha_R}Z_{\alpha_R'}\ket{\psi_{\rm init}}&=\ket{\psi_{\rm init}},   &\quad -Y_{\alpha_R}Y_{\alpha_R'}Y_{\alpha_L}X_{\alpha_L'}Y_{\beta_R}X_{\beta_R'}Y_{\beta_L}Y_{\beta_L'}\ket{\psi_{\rm init}}&=\ket{\psi_{\rm init}}, \nonumber \\
    Z_{\beta_R}Z_{\beta_R'}\ket{\psi_{\rm init}}&=\ket{\psi_{\rm init}},   &\quad -Y_{\beta_R}Y_{\beta_R'}Y_{\beta_L}X_{\beta_L'}Y_{\gamma_R}X_{\gamma_R'}Y_{\gamma_L}Y_{\gamma_L'}\ket{\psi_{\rm init}}&=\ket{\psi_{\rm init}},\nonumber\\
    Z_{\gamma_R}Z_{\gamma_R'}\ket{\psi_{\rm init}}&=\ket{\psi_{\rm init}}, &\quad -Y_{\gamma_R}Y_{\gamma_R'}Y_{\gamma_L}X_{\gamma_L'}Y_{\alpha_R}X_{\alpha_R'}Y_{\alpha_L}Y_{\alpha_L'}\ket{\psi_{\rm init}}&=\ket{\psi_{\rm init}},\nonumber\\
    -Z_{\alpha_L}Z_{\alpha_L'}\ket{\psi_{\rm init}}&=\ket{\psi_{\rm init}}, &\quad  Y_{\alpha_R}X_{\alpha_R'}Y_{\alpha_L}X_{\alpha_L'}Y_{\beta_R}X_{\beta_R'}Y_{\beta_L}X_{\beta_L'}Y_{\gamma_R}X_{\gamma_R'}Y_{\gamma_L}X_{\gamma_L'}\ket{\psi_{\rm init}}&=\ket{\psi_{\rm init}}, \nonumber\\
    -Z_{\beta_L}Z_{\beta_L'}\ket{\psi_{\rm init}}&=\ket{\psi_{\rm init}},  &\quad &\nonumber\\
     -Z_{\gamma_L}Z_{\gamma_L'}\ket{\psi_{\rm init}}&=\ket{\psi_{\rm init}}.  &\quad &
\end{align}

Recalling that the total qubit system is composed of 12 qubits, these conditions define a four-dimensional stablilizer code space. Any state from this subspace is a valid choice for $\ket{\psi_{\rm init}}$. Sharing this state initially between the parties allows them to implement the rest of the protocol locally by using the operators defined above. Note that any such valid $\ket{\psi_{\rm init}}$ must contain entanglement between the parties, to avoid a contradiction with the main result of this paper (see also the analysis in Ref.~\cite{Guaita2025localityofqubit}).

\section{\texorpdfstring{Generalization to $N$}{Generalization to N}}
\label{appendix:NPartyGeneralization}
In this section, we explain how the three-party game described in
Supplemental Material~\ref{appendix:ExperimentAndMainResult} can be generalized to \(N\geq 3\)
parties. The construction is obtained by considering loops involving all
\(N\) parties, and also loops involving \(N-1\) parties.

Let the parties be denoted by \(A_1,\dots,A_N\). Each party \(A_i\) locally
prepares a bipartite state and sends one half of it to the referee, as in
Supplemental Material~\ref{appendix:ExperimentAndMainResult}. The referee then redistributes the received subsystems according to a chosen loop configuration. More precisely,
the referee considers all directed loops containing all \(N\) parties, with all
possible cyclic orderings, and all directed loops containing \(N-1\) parties,
again with all possible cyclic orderings. In the latter case, the party that is
not included in the loop receives back the subsystem that it originally sent to
the referee.

After the redistribution, the rest of the protocol is the same for all loop
configurations. Each party
\(A_i\) receives an input bit \(x_i\) and produces an output $a_i \in \{0, 1_-, 1_+, 2 \}$.

The fermionic strategy is the direct \(N\)-party analogue of the strategy
described in Supplemental Material~\ref{appendix:ExperimentAndMainResult}. Each party prepares the
same fermionic single-particle state as in~\cref{eq:FermionicTrustedState}.
After the redistribution of the states by the referee, the parties then perform the measurement as in
\cref{eq:FermionicProjectors}. We choose one
distinguished party, say \(A_1\), to use the measurement coefficients in
\cref{eq:meas_coeff_forA}, while all other parties use the coefficients in
\cref{eq:meas_coeff_forother}. This is the same choice used in the fixed-loop
self-testing statement of \cref{thm:fullselftesting}.

We now argue that the corresponding family of fermionic distributions cannot be
simultaneously reproduced by a local Q-QIT strategy, that is, by a strategy
based on distinguishable particles, bosons, or qubits. The claim is that no such
strategy can reproduce the exact fermionic distributions for all \(N\)-party
loops and all \((N-1)\)-party loops described above.

The proof follows the same logic as the three-party contradiction in
Supplemental Material~\ref{appendix:contradiction}. Suppose, for contradiction, that there exists a
local Q-QIT strategy reproducing all the fermionic loop distributions. For each
fixed loop configuration involving all \(N\) parties, the self-testing theorem
of \cref{thm:fullselftesting} implies that the corresponding states and
measurements are equivalent, up to local unitaries, to the trusted qubit
strategy described in the theorem. In particular, one obtains self-testing
relations of the form given in \cref{eq:FullSelf-testedState}, analogous to
those used in Supplemental Material~\ref{appendix:contradiction}, now for all \(N\)-party
loops.

The same self-testing theorem also applies to all \((N-1)\)-party loops that
include the special party \(A_1\). We then use the locality constraint,
which is crucial. The states prepared by the parties are fixed before the
referee chooses the loop configuration, and therefore cannot depend on that
configuration. Similarly, a party may adapt its measurement only to the local
subsystem it receives, not to the global loop chosen by the referee. Hence,
whenever the same locally prepared state is received in two different loop
configurations, the corresponding local unitary appearing in the self-tested
description must be the same.

Applying the fixed-loop self-testing relations to all \(N\)-party loops and to
the \((N-1)\)-party loops that contain \(A_1\), one obtains a collection of unitariy realtion and 
phase relations analogous to those derived in
Supplemental Material~\ref{appendix:contradiction}. These are the analogues of the relations
\crefrange{eq:loopABC_alpha}{eq:loopAB_beta},
appearing in the three-party proof. Together with locality, these relations
determine how the local unitaries must act also in the \((N-1)\)-party loops
that do not contain \(A_1\). In other words, the loops that include \(A_1\)
fix the effective action of the strategy on the remaining \((N-1)\)-party
loops.

The resulting prediction for the remaining loop is incompatible with the
fermionic one. More explicitly, consider the loop
\(\ell=(A_2,\dots,A_N)\), and take all inputs to be equal to \(1\).
The Q-QIT strategy then predicts a conditional distribution
\[
P^{A_2\cdots A_N}
\left(
a_{A_2},\dots,a_{A_N}
\,\middle|\,
(a_{A_2} \cdots a_{A_N}) \in \{1_+, 1_- \}^{N-1},\,
x_{A_2}=\cdots=x_{A_N}=1
\right),
\]
which differs from the fermionic distribution
\[
P_f^{A_2\cdots A_N}
\left(
a_{A_2},\dots,a_{A_N}
\,\middle|\,
(a_{A_2} \cdots a_{A_N}) \in \{1_+, 1_- \}^{N-1},\,
x_{A_2}=\cdots=x_{A_N}=1
\right).
\]
For each outcome \((a_{A_2},\dots,a_{A_N}) \in \{ 1_\pm\}^{N-1}\) , the absolute difference is
\begin{align}
&\left|
P_f^{A_2\cdots A_N}
\left(
a_{A_2},\dots,a_{A_N}
\,\middle|\,
(a_{A_2} \cdots a_{A_N}) \in \{1_+, 1_- \}^{N-1},\,
x_{A_2}=\cdots=x_{A_N}=1
\right)
\right. \nonumber \\
&\left.
\hspace{1.5cm}
-
P^{A_2\cdots A_N}
\left(
a_{A_2},\dots,a_{A_N}
\,\middle|\,
(a_{A_2} \cdots a_{A_N}) \in \{1_+, 1_- \}^{N-1},\,
x_{A_2}=\cdots=x_{A_N}=1
\right)
\right|
=
\frac{1}{2^{N-2}} .
\end{align}
Summing over all \(2^{N-1}\) possible values of
\((a_{A_2},\dots,a_{A_N})\), we obtain
\begin{align}
&\sum_{\substack{a_{A_i},\,x_{A_i}\\ i=2,\dots,N}}
\left|
P_f^{A_2\cdots A_N}
\left(
a_{A_2},\dots,a_{A_N}
\,\middle|\,
(a_{A_2} \cdots a_{A_N}) \in \{1_+, 1_- \}^{N-1},\,
x_{A_2}, \cdots, x_{A_N}
\right)
\right. \nonumber \\
&\left.
\hspace{1.5cm}
-
P^{A_2\cdots A_N}
\left(
a_{A_2},\dots,a_{A_N}
\,\middle|\,
(a_{A_2} \cdots a_{A_N}) \in \{1_+, 1_- \}^{N-1},\,
x_{A_2}, \cdots, x_{A_N}
\right)
\right|
=
2 .
\end{align}
Equivalently, the total variation distance between the two conditional
distributions is
\begin{equation}
d_{\rm TV}^{\rm cond}
\left(
P_f^{A_2\cdots A_N},P^{A_2\cdots A_N}
\right)
=
1.
\end{equation}
Therefore, a local Q-QIT strategy cannot reproduce all the fermionic loop
distributions simultaneously.

The distance between the full distributions is
\begin{equation}
d_{\rm TV}
\left(
P_f^{A_2\cdots A_N},P^{A_2\cdots A_N}
\right)
=
\frac{1}{2^{N-2}},
\end{equation}
which decays exponentially with \(N\). Nevertheless, this mismatch occurs for \((N-2)!\) loops. Hence, the cumulative discrepancy over all such loops scales as
\[
\frac{(N-2)!}{2^{N-2}},
\]
and therefore increases with \(N\).

\section{First quantized description of the fermionic protocol}\label{appendix:FirstQuantization}
Here we describe the state prepration phase of the fermionic trusted protocol in the first quantization picture.
Although equivalent to the second-quantized description used in the main text
and previous appendices, the first-quantized formulation is less transparent
from the perspective of locally preparable states and local operations, since both states and measurements
must be antisymmetrized. Once antisymmetrised, a local operation is represented as a mathematical operator acting over all particles, making the notion of locality harder to read in the mathematical formalism of first quantization. Second quantization solves that problem by going to a mode picture, in which the described systems are modes which may or may not contain particles. Operations are acting over these modes, which makes the notion of locality more intuitive to read in the mathematical formalism of second quantization.

Let \(S_N\) denote the permutation group on \(N\) particles. For each
\(\sigma \in S_N\), let \(P_\sigma\) be the corresponding permutation operator on
the \(N\)-particle tensor-product space, defined by
\begin{equation}
    P_\sigma
    \bigl(
        \ket{\psi_1}_1 \otimes \cdots \otimes \ket{\psi_N}_N
    \bigr)
    =
    \ket{\psi_{\sigma(1)}}_1 \otimes \cdots
    \otimes \ket{\psi_{\sigma(N)}}_N .
\end{equation}
We denote by \(\mathcal{A}_N\) the fermionic antisymmetrization map,
\begin{equation}
    \mathcal{A}_N
    =
    \frac{1}{\sqrt{N!}}
    \sum_{\sigma \in S_N}
    \epsilon(\sigma) P_\sigma,
\end{equation}
where \(\epsilon(\sigma)\) is the signature of the permutation \(\sigma\). Thus, for
any \(N\)-particle state \(\ket{\psi}\),
\begin{equation}
    \mathcal{A}_N \ket{\psi}
    =
    \frac{1}{\sqrt{N!}}
    \sum_{\sigma \in S_N}
    \epsilon(\sigma) P_\sigma \ket{\psi}.\label{eq:antisymmetrised-state}
\end{equation}

In this formalism the notion of operator locality can be defined in the following way. Let us consider single particle operators\footnotemark{}, \textit{i.e.} acting on a single particle at a time. For distinguishable particles, an operator acting just on the $i$-th particle can be written as
\begin{equation}
    \id_{1}  \otimes \cdots \otimes O_{i} \otimes \cdots \otimes \id_{N}. \label{eq:single-particle-op-distinguishable}
\end{equation}
This operator is \emph{local} if it only modifies the single particle states of particle $i$ in a spatially local way. More precisely, consider a basis $\{\ket{\alpha}_i\}_\alpha$ of the single particle Hilbert space where each state $\ket{\alpha}_i$ is localised in a specific spatial region (\textit{e.g.} the states of a particle localised at the position of one of the game's local parties). Then for a local operator we require that $\braket{\alpha}{O_i|\alpha'}=0$ unless $\ket{\alpha}$ and $\ket{\alpha'}$ are localised in the same spatial region.

\footnotetext{One should in general also consider many-body interactions which act on more than one particle simultaneously. However for our specific experiment only single particle operators are ever necessary, so for simplicity we limit our discussion to those. }

In the case of indistinguishable particles, an operator of the form~\eqref{eq:single-particle-op-distinguishable} cannot exist as it acts individually on the $i$-th particle, meaning it should be able to distinguish it from the others. Notice also that the operator~\eqref{eq:single-particle-op-distinguishable} does not map an antisymmetrised state~\eqref{eq:antisymmetrised-state} to an antisymmetrized state. A single particle operator for indistinguishable particles must indeed act uniformly across all particles, \textit{i.e.} it must have the form
\begin{align}
    \sum_{i=1}^N  \underset{\mbox{$i$-th particle}}{\id  \otimes \cdots \otimes \underset{\uparrow}{O} \otimes \cdots \otimes \id },
    \label{eq:single-particle-op-indistinguishable}
\end{align}
where the same operator $O$ acts on all particles. The operator is in particular \emph{local} if the operator $O$ is local in the sense defined above. Note that an operator of this form in particular commutes with the antisymmetrisation  operator $\mathcal{A}_N$, thereby preserving the antisymmetrisation of states.

\subsection{Particle-picture beam-splitter action}

For a $50{:}50$ beam splitter, the amplitudes at the input and output ports are
related by a $2\times 2$ unitary transformation whose entries all have magnitude
$1/\sqrt{2}$. With a symmetric choice of phases, this transformation is
\begin{equation}
U_{\rm BS}^{(i)}
=
\frac{1}{\sqrt{2}}
\begin{pmatrix}
1 & i\\
i & 1
\end{pmatrix}.
\end{equation}
Thus, denoting the input ports amplitude by $a_{\rm in}^\dagger$ and $b_{\rm in}^\dagger$  respectively,
\begin{equation}
a^\dagger_{\rm in}
\mapsto
\frac{1}{\sqrt{2}}
\left(
a^\dagger_{\rm out}
+i b^\dagger_{\rm out}
\right),
\end{equation}
and
\begin{equation}
b^\dagger_{\rm in}
\mapsto
\frac{1}{\sqrt{2}}
\left(
i a^\dagger_{\rm out}
+
b^\dagger_{\rm out}
\right).
\end{equation}

Equivalently, one may use the asymmetric, or Hadamard, beam-splitter convention,
defined by
\begin{equation}
U_{\rm BS}^{(H)}
=
\frac{1}{\sqrt{2}}
\begin{pmatrix}
1 & 1\\
1 & -1
\end{pmatrix}.
\end{equation}
In what follows, we use the Hadamard convention for convenience.

In the particle picture, a single-particle state $\ket{I_i}$ entering a
$50{:}50$ beam splitter is transformed as 
\begin{equation}
\ket{I_i}
\mapsto
\frac{1}{\sqrt{2}}
\left(
\ket{K_i}
+
\ket{T_{i+1}}
\right),
\end{equation}
Here, \(I_i\) denotes the initial single-particle state before the beam splitter at party \(i\). The beam splitter maps this state into a superposition of two output modes: \(K_i\), corresponding to the component kept at party \(i\), and \(T_{i+1}\), corresponding to the component transmitted to party \(i+1\). These states are different single-particle states that live in the same Hilbert space as $I_i$. They can be corresponding to different states for instance due to different spatial locations. See \cref{fig:SuperPositionState}(a).

When we need to keep track of which particle occupies a given single-particle state, we write the particle label explicitly as a subscript on the ket. For example, if particle \(i\) is initially at party \(A\), then the beam-splitter transformation is
\begin{equation}
\label{eq:BSFirstQuant}
\ket{I_A}_{i}
\mapsto
\frac{1}{\sqrt{2}}
\left(
\ket{K_A}_{i}
+
\ket{T_B}_{i}
\right).
\end{equation}
This means that particle \(i\), initially located at \(A\), is transformed into a coherent superposition of being kept at \(A\) and being transmitted to \(B\).

\subsection{\texorpdfstring{Three-party loop \(A\to B\to C\to A\)}{Three-party loop A → B → C → A}}
Consider three space-like separated parties \(A,B,C\), each holding one
identical fermion labeled by 1, 2, 3 respectively. We denote the single-particle states by
\(\ket{I_A}_1, \ket{I_B}_2 , \ket{I_C}_3\). The
antisymmetrized initial state is 
\begin{align}
\ket{\Psi_{\rm in}^{ABC}}
&= \mathcal{A}_3
\left(
\ket{I_A}_1 \ket{I_B}_2 \ket{I_C}_3
\right) \nonumber\\
&=
\frac{1}{\sqrt{6}}
\Big[
\ket{I_A}_1\ket{I_B}_2\ket{I_C}_3
+
\ket{I_B}_1\ket{I_C}_2\ket{I_A}_3
+
\ket{I_C}_1\ket{I_A}_2\ket{I_B}_3
\nonumber\\
&\hspace{2.4cm}
-
\ket{I_A}_1\ket{I_C}_2\ket{I_B}_3
-
\ket{I_C}_1\ket{I_B}_2\ket{I_A}_3
-
\ket{I_B}_1\ket{I_A}_2\ket{I_C}_3
\Big].
\end{align}

We consider the three-party directed loop
\begin{equation}
A\to B,\qquad B\to C,\qquad C\to A .
\end{equation}
Thus the beam-splitter transformations are
\begin{align}
\ket{I_A}_{i}
&\mapsto
\frac{1}{\sqrt{2}}
\left(
\ket{K_A}_{i}
+
\ket{T_B}_{i}
\right), \\
\ket{I_B}_{j}
&\mapsto
\frac{1}{\sqrt{2}}
\left(
\ket{K_B}_{j}
+
\ket{T_C}_{j}
\right), \\
\ket{I_C}_{k}
&\mapsto
\frac{1}{\sqrt{2}}
\left(
\ket{K_C}_{k}
+
\ket{T_A}_{k}
\right).
\end{align}
Note that the action of each beam splitter is implicitly symmetrized with respect to the particles; that is, it takes the form given in \cref{eq:single-particle-op-indistinguishable}.

Applying these transformations to the antisymmetrized input state gives
\begin{align}
\ket{\Psi_{\rm out}^{ABC}}
=
\frac{1}{\sqrt{6}}\frac{1}{2\sqrt{2}}
\Big[
&
\left(\ket{K_A}_{1}+\ket{T_B}_{1}\right)
\left(\ket{K_B}_{2}+\ket{T_C}_{2}\right)
\left(\ket{K_C}_{3}+\ket{T_A}_{3}\right)
\nonumber\\
&+
\left(\ket{K_B}_{1}+\ket{T_C}_{1}\right)
\left(\ket{K_C}_{2}+\ket{T_A}_{2}\right)
\left(\ket{K_A}_{3}+\ket{T_B}_{3}\right)
\nonumber\\
&+
\left(\ket{K_C}_{1}+\ket{T_A}_{1}\right)
\left(\ket{K_A}_{2}+\ket{T_B}_{2}\right)
\left(\ket{K_B}_{3}+\ket{T_C}_{3}\right)
\nonumber\\
&-
\left(\ket{K_A}_{1}+\ket{T_B}_{1}\right)
\left(\ket{K_C}_{2}+\ket{T_A}_{2}\right)
\left(\ket{K_B}_{3}+\ket{T_C}_{3}\right)
\nonumber\\
&-
\left(\ket{K_C}_{1}+\ket{T_A}_{1}\right)
\left(\ket{K_B}_{2}+\ket{T_C}_{2}\right)
\left(\ket{K_A}_{3}+\ket{T_B}_{3}\right)
\nonumber\\
&-
\left(\ket{K_B}_{1}+\ket{T_C}_{1}\right)
\left(\ket{K_A}_{2}+\ket{T_B}_{2}\right)
\left(\ket{K_C}_{3}+\ket{T_A}_{3}\right)
\Big].
\end{align}
Equivalently, since the beam splitters act independently at the single-particle
level, the output state can be written compactly as the antisymmetrization of
the beam-splitter-transformed product state:
\begin{equation}
\ket{\Psi_{\rm out}^{ABC}}
=
\mathcal{A}_3
\left[
\frac{1}{2\sqrt{2}}
\left(\ket{K_A}_{1}+\ket{T_B}_{1}\right)
\left(\ket{K_B}_{2}+\ket{T_C}_{2}\right)
\left(\ket{K_C}_{3}+\ket{T_A}_{3}\right)
\right]
\end{equation}

We now postselect on the event that each party has exactly one fermion. This
condition selects the terms in which either all particles are kept or all
particles are transmitted. After normalization, the postselected state is
\begin{align}
\ket{\Psi_{\rm post}^{ABC}}
=
\frac{1}{\sqrt{12}}
\Big[
&
\ket{K_A}_{1}\ket{K_B}_{2}\ket{K_C}_{3}
+
\ket{T_B}_{1}\ket{T_C}_{2}\ket{T_A}_{3}
\nonumber\\
&+
\ket{K_B}_{1}\ket{K_C}_{2}\ket{K_A}_{3}
+
\ket{T_C}_{1}\ket{T_A}_{2}\ket{T_B}_{3}
\nonumber\\
&+
\ket{K_C}_{1}\ket{K_A}_{2}\ket{K_B}_{3}
+
\ket{T_A}_{1}\ket{T_B}_{2}\ket{T_C}_{3}
\nonumber\\
&-
\ket{K_A}_{1}\ket{K_C}_{2}\ket{K_B}_{3}
-
\ket{T_B}_{1}\ket{T_A}_{2}\ket{T_C}_{3}
\nonumber\\
&-
\ket{K_C}_{1}\ket{K_B}_{2}\ket{K_A}_{3}
-
\ket{T_A}_{1}\ket{T_C}_{2}\ket{T_B}_{3}
\nonumber\\
&-
\ket{K_B}_{1}\ket{K_A}_{2}\ket{K_C}_{3}
-
\ket{T_C}_{1}\ket{T_B}_{2}\ket{T_A}_{3}
\Big].
\end{align}

Equivalently, this state can be written as a coherent superposition of the
fully kept and fully transmitted antisymmetric components,
\begin{equation}
\ket{\Psi_{\rm post}^{ABC}}
=
\frac{1}{\sqrt{2}}
\left(
\mathcal{A}_3
\left[
\ket{K_A}_{1}\ket{K_B}_{2}\ket{K_C}_{3}
\right]
+
\mathcal{A}_3
\left[
\ket{T_B}_{1}\ket{T_C}_{2}\ket{T_A}_{3}
\right]
\right).
\end{equation}
Defining
\begin{align}
\ket{\Psi_K^{ABC}}
:=
\mathcal{A}_3
\left[
\ket{K_A}_{1}\ket{K_B}_{2}\ket{K_C}_{3}
\right],
\end{align}
and
\begin{align}
\ket{\Psi_T^{ABC}}
:=
\mathcal{A}_3
\left[
\ket{T_A}_{1}\ket{T_B}_{2}\ket{T_C}_{3}
\right],
\end{align}
we obtain
\begin{equation}
\ket{\Psi_{\rm post}^{ABC}}
=
\frac{1}{\sqrt{2}}
\left(
\ket{\Psi_K^{ABC}}
+
\ket{\Psi_T^{ABC}}
\right).
\end{equation}

\subsection{\texorpdfstring{Two-party loop \(A\leftrightarrow B\), with \(C\) excluded}{Two-party loop A ↔ B, with C excluded}}

We now consider the case in which only \(A\) and \(B\) are connected in a loop,
\begin{equation}
A\to B,\qquad B\to A,
\end{equation}
while \(C\) is excluded from the loop. The input state is again the
antisymmetrized three-fermion state
\begin{align}
\ket{\Psi_{\rm in}^{AB|C}}
&= \mathcal{A}_3
\left(
\ket{I_A}_1 \ket{I_B}_2 \ket{I_C}_3
\right) \nonumber\\
&=
\frac{1}{\sqrt{6}}
\Big[
\ket{I_A}_1\ket{I_B}_2\ket{I_C}_3
+
\ket{I_B}_1\ket{I_C}_2\ket{I_A}_3
+
\ket{I_C}_1\ket{I_A}_2\ket{I_B}_3
\nonumber\\
&\hspace{2.4cm}
-
\ket{I_A}_1\ket{I_C}_2\ket{I_B}_3
-
\ket{I_C}_1\ket{I_B}_2\ket{I_A}_3
-
\ket{I_B}_1\ket{I_A}_2\ket{I_C}_3
\Big].
\end{align}

For a generic product term
\begin{equation}
\ket{I_A}_{i}\ket{I_B}_{j}\ket{I_C}_{k},
\end{equation}
the beam-splitter transformations are now
\begin{align}
\ket{I_A}_{i}
&\mapsto
\frac{1}{\sqrt{2}}
\left(
\ket{K_A}_{i}
+
\ket{T_B}_{i}
\right), \\
\ket{I_B}_{j}
&\mapsto
\frac{1}{\sqrt{2}}
\left(
\ket{K_B}_{j}
+
\ket{T_A}_{j}
\right), \\
\ket{I_C}_{k}
&\mapsto
\frac{1}{\sqrt{2}}
\left(
\ket{K_C}_{k}
+
\ket{T_C}_{k}
\right).
\end{align}
Note that here the \(C\)-particle is not transmitted to another party; but it is given back to itself.

Applying these transformations to the antisymmetrized input state gives
\begin{align}
\ket{\Psi_{\rm out}^{AB|C}}
=
\frac{1}{\sqrt{6}}\frac{1}{2\sqrt{2}}
\Big[
&
\left(\ket{K_A}_{1}+\ket{T_B}_{1}\right)
\left(\ket{K_B}_{2}+\ket{T_A}_{2}\right)
\left(\ket{K_C}_{3}+\ket{T_C}_{3}\right)
\nonumber\\
&+
\left(\ket{K_B}_{1}+\ket{T_A}_{1}\right)
\left(\ket{K_C}_{2}+\ket{T_C}_{2}\right)
\left(\ket{K_A}_{3}+\ket{T_B}_{3}\right)
\nonumber\\
&+
\left(\ket{K_C}_{1}+\ket{T_C}_{1}\right)
\left(\ket{K_A}_{2}+\ket{T_B}_{2}\right)
\left(\ket{K_B}_{3}+\ket{T_A}_{3}\right)
\nonumber\\
&-
\left(\ket{K_A}_{1}+\ket{T_B}_{1}\right)
\left(\ket{K_C}_{2}+\ket{T_C}_{2}\right)
\left(\ket{K_B}_{3}+\ket{T_A}_{3}\right)
\nonumber\\
&-
\left(\ket{K_C}_{1}+\ket{T_C}_{1}\right)
\left(\ket{K_B}_{2}+\ket{T_A}_{2}\right)
\left(\ket{K_A}_{3}+\ket{T_B}_{3}\right)
\nonumber\\
&-
\left(\ket{K_B}_{1}+\ket{T_A}_{1}\right)
\left(\ket{K_A}_{2}+\ket{T_B}_{2}\right)
\left(\ket{K_C}_{3}+\ket{T_C}_{3}\right)
\Big].
\end{align}
Equivalently, since the beam splitters act independently at the single-particle
level, the output state can be written compactly as the antisymmetrization of
the beam-splitter-transformed product state:
\begin{equation}
\ket{\Psi_{\rm out}^{AB|C}}
=
\mathcal{A}_3
\left[
\frac{1}{2\sqrt{2}}
\left(\ket{K_A}_{1}+\ket{T_B}_{1}\right)
\left(\ket{K_B}_{2}+\ket{T_A}_{2}\right)
\left(\ket{K_C}_{3}+\ket{T_C}_{3}\right)
\right].
\end{equation}

We now postselect on the event that each party has exactly one fermion. This
condition selects the terms in which the particles at \(A\) and \(B\) are either
both kept or both transmitted, while the particle at \(C\) remains in one of its
two local states. After normalisation
\begin{align}
\ket{\Psi_{\rm post}^{AB|C}}
=
\frac{1}{\sqrt{24}}
\Big[
&
\ket{K_A}_{1}\ket{K_B}_{2}
\left(
\ket{K_C}_{3}+\ket{T_C}_{3}
\right)
+
\ket{T_B}_{1}\ket{T_A}_{2}
\left(
\ket{K_C}_{3}+\ket{T_C}_{3}
\right)
\nonumber\\
&+
\ket{K_B}_{1}
\left(
\ket{K_C}_{2}+\ket{T_C}_{2}
\right)
\ket{K_A}_{3}
+
\ket{T_A}_{1}
\left(
\ket{K_C}_{2}+\ket{T_C}_{2}
\right)
\ket{T_B}_{3}
\nonumber\\
&+
\left(
\ket{K_C}_{1}+\ket{T_C}_{1}
\right)
\ket{K_A}_{2}\ket{K_B}_{3}
+
\left(
\ket{K_C}_{1}+\ket{T_C}_{1}
\right)
\ket{T_B}_{2}\ket{T_A}_{3}
\nonumber\\
&-
\ket{K_A}_{1}
\left(
\ket{K_C}_{2}+\ket{T_C}_{2}
\right)
\ket{K_B}_{3}
-
\ket{T_B}_{1}
\left(
\ket{K_C}_{2}+\ket{T_C}_{2}
\right)
\ket{T_A}_{3}
\nonumber\\
&-
\left(
\ket{K_C}_{1}+\ket{T_C}_{1}
\right)
\ket{K_B}_{2}\ket{K_A}_{3}
-
\left(
\ket{K_C}_{1}+\ket{T_C}_{1}
\right)
\ket{T_A}_{2}\ket{T_B}_{3}
\nonumber\\
&-
\ket{K_B}_{1}\ket{K_A}_{2}
\left(
\ket{K_C}_{3}+\ket{T_C}_{3}
\right)
-
\ket{T_A}_{1}\ket{T_B}_{2}
\left(
\ket{K_C}_{3}+\ket{T_C}_{3}
\right)
\Big].
\end{align}
The postselected state can be written
as
\begin{equation}
\ket{\Psi_{\rm post}^{AB|C}}
=
\frac{1}{\sqrt{2}}
\left(
\mathcal{A}_3
\left[
\ket{K_A}_{1}\ket{K_B}_{2}
\frac{
\ket{K_C}_{3}+\ket{T_C}_{3}
}{\sqrt{2}}
\right]
-
\mathcal{A}_3
\left[
\ket{T_A}_{1}\ket{T_B}_{2}
\frac{
\ket{K_C}_{3}+\ket{T_C}_{3}
}{\sqrt{2}}
\right]
\right).
\end{equation}
The relative minus sign appears because, in the transmitted component, the
particles associated with \(A\) and \(B\) are exchanged.

Equivalently, define
\begin{align}
\ket{\Psi_K^{AB|C}}
:=
\mathcal{A}_3
\left[
\ket{K_A}_{1}\ket{K_B}_{2}
\frac{
\ket{K_C}_{3}+\ket{T_C}_{3}
}{\sqrt{2}}
\right],
\end{align}
and
\begin{align}
\ket{\Psi_T^{AB|C}}
:=
\mathcal{A}_3
\left[
\ket{T_A}_{1}\ket{T_B}_{2}
\frac{
\ket{K_C}_{3}+\ket{T_C}_{3}
}{\sqrt{2}}
\right].
\end{align}
Then
\begin{equation}
\ket{\Psi_{\rm post}^{AB|C}}
=
\frac{1}{\sqrt{2}}
\left(
\ket{\Psi_K^{AB|C}}
-
\ket{\Psi_T^{AB|C}}
\right).
\end{equation}

The relative sign is the key feature. The three-party loop produces a plus sign between the all-kept and all-transmitted components, because it involves two fermionic exchanges. By contrast, the two-party loop produces a minus sign between the \(A,B\)-kept and \(A,B\)-transmitted components, due to a single fermionic exchange. This sensitivity to the parity of the loop is the origin of the fermionic advantage, which can be revealed at the level of observable correlations by applying suitable measurements to the states above.
Note that the same computation can be done with bosons, replacing the the fermionic antisymmetrization map $\mathcal{A}_N$ by the bosonic symmetrization map $\mathcal{S}_N$:
\[
    \mathcal{S}_N
    =
    \frac{1}{\sqrt{N!}}
    \sum_{\sigma \in S_N}
    P_\sigma.
\]
In that case, the derivation would be essentially the same, but with no minus sign appearing in the $AB$ loop case.

\FloatBarrier
\numberwithin{equation}{subsection}
\section{Proof of Lemmas}
\subsection{GHZ-state self-testing}
\label{appendix:GHZ_selftesting}

In this Supplemental Material we recall the stabilizer relations that follow from maximal
violation of the two Bell inequalities used in the GHZ self-testing
construction, following Ref.~\cite{baccari2020scalable}. We state the two cases
separately, since the sign of the \(N\)-body stabilizer differs for the two Bell
inequalities.

\subsubsection{\texorpdfstring{\cref{lem:FirstBell}}{First Bell lemma}}
\label{appendix:FirstBell}

If Bell inequality \Romannum{1} is maximally violated, namely
\begin{equation}
\label{eq:Bell_N_first}
\left\langle
B
\equiv
(N-1)(M_0^{(1)}+M_1^{(1)})M_1^{(2)}\cdots M_1^{(N)}
+
\sum_{i=2}^{N}
(M_0^{(1)}-M_1^{(1)})M_0^{(i)}
\right\rangle
=
\beta_Q,
\end{equation}
where
\[
\beta_C=2(N-1),
\qquad
\beta_Q=2(N-1)\sqrt{2},
\]
then \(\mathcal{X}_1\cdots \mathcal{X}_N\) is a stabilizer of the
conditional state:
\begin{equation}
\mathcal{X}_1\cdots \mathcal{X}_N
\ket{\psi}^{\rm cond}
=
\ket{\psi}^{\rm cond}.
\end{equation}
Moreover, for every \(i\in\{1,\dots,N\}\),
\begin{equation}
\begin{aligned}
\mathcal{X}_i^2\ket{\psi}^{\rm cond}
=
\mathcal{Z}_i^2\ket{\psi}^{\rm cond}
&=
\ket{\psi}^{\rm cond},\\
\{\mathcal{X}_i,\mathcal{Z}_i\}\ket{\psi}^{\rm cond}
&=
0.
\end{aligned}
\end{equation}
Here
\begin{equation}
\begin{aligned}
\mathcal{X}_{1}
&=
\frac{M^{(1)}_0+M^{(1)}_1}{\sqrt{2}},
&
\mathcal{Z}_{1}
&=
\frac{M^{(1)}_0-M^{(1)}_1}{\sqrt{2}},
\\
\mathcal{X}_{i}
&=
M^{(i)}_1,
&
\mathcal{Z}_{i}
&=
M^{(i)}_0,
\qquad i\in\{2,\dots,N\}.
\end{aligned}
\end{equation}

\begin{proof}
Maximal violation of Bell inequality \Romannum{1} gives
\begin{equation}
\left\langle
B
\equiv
(N-1)(M_0^{(1)}+M_1^{(1)})M_1^{(2)}\cdots M_1^{(N)}
+
\sum_{i=2}^{N}
(M_0^{(1)}-M_1^{(1)})M_0^{(i)}
\right\rangle
=
\beta_Q .
\end{equation}
Equivalently, the following sum-of-squares decomposition is saturated:
\begin{equation}
\left\langle
\beta_Q\id-B
\right\rangle
=
\frac{1}{\sqrt{2}}
\left\langle
(N-1)(\id-S_1)^2
+
\sum_{i=2}^{N}
(\id-S_i)^2
\right\rangle
=
0,
\end{equation}
where
\begin{equation}
\label{eq:stabilizers_first}
\begin{aligned}
S_1
&=
\mathcal{X}_1\cdots \mathcal{X}_N,\\
S_i
&=
\mathcal{Z}_1\mathcal{Z}_i,
\qquad i\in\{2,\dots,N\}.
\end{aligned}
\end{equation}
Since each term in the sum of squares is positive, maximal violation implies
that each square vanishes on the conditional state. Therefore,
\begin{equation}
S_i\ket{\psi}^{\rm cond}
=
\ket{\psi}^{\rm cond},
\qquad
i\in\{1,\dots,N\}.
\end{equation}
Thus the operators \(S_i\) are stabilizers of the state attaining maximal
violation of Bell inequality \Romannum{1}.

We first show that all \(\mathcal{X}_i\) and \(\mathcal{Z}_i\) square to the
identity on \(\ket{\psi}^{\rm cond}\). For parties \(i\in\{2,\dots,N\}\), this
is immediate from the construction, since \(M_0^{(i)}\) and \(M_1^{(i)}\) are
Hermitian unitary observables. It remains to prove the same statement for
party \(1\).

From \cref{eq:stabilizers_first}, for every \(i\in\{2,\dots,N\}\),
\begin{equation}
\mathcal{Z}_1\mathcal{Z}_i\ket{\psi}^{\rm cond}
=
\ket{\psi}^{\rm cond}.
\end{equation}
Since \(\mathcal{Z}_i^2=\id\) for \(i\geq 2\), this implies
\begin{equation}
\mathcal{Z}_1\ket{\psi}^{\rm cond}
=
\mathcal{Z}_i\ket{\psi}^{\rm cond}.
\end{equation}
Consequently,
\begin{equation}
\mathcal{Z}_1^2\ket{\psi}^{\rm cond}
=
\mathcal{Z}_1\mathcal{Z}_i\ket{\psi}^{\rm cond}
=
S_i\ket{\psi}^{\rm cond}
=
\ket{\psi}^{\rm cond}.
\end{equation}

We now prove the corresponding relation for \(\mathcal{X}_1\). Since
\[
S_1=\mathcal{X}_1\mathcal{X}_2\cdots\mathcal{X}_N
\]
and \(\mathcal{X}_i^2=\id\) for \(i\geq 2\), we have
\begin{equation}
\mathcal{X}_1^2\ket{\psi}^{\rm cond}
=
S_1^2\ket{\psi}^{\rm cond}.
\end{equation}
Using \(S_1\ket{\psi}^{\rm cond}=\ket{\psi}^{\rm cond}\), it follows that
\begin{equation}
\mathcal{X}_1^2\ket{\psi}^{\rm cond}
=
S_1^2\ket{\psi}^{\rm cond}
=
S_1\ket{\psi}^{\rm cond}
=
\ket{\psi}^{\rm cond}.
\end{equation}

It remains to prove the anticommutation relations. For party \(1\), the
anticommutation follows directly from the definitions:
\[
\mathcal{X}_1
=
\frac{M_0^{(1)}+M_1^{(1)}}{\sqrt{2}},
\qquad
\mathcal{Z}_1
=
\frac{M_0^{(1)}-M_1^{(1)}}{\sqrt{2}},
\]
together with \((M_0^{(1)})^2=(M_1^{(1)})^2=\id\).

We now prove the relation for the remaining parties. It is enough to present the
argument for party \(2\); the same reasoning applies to any
\(i\in\{2,\dots,N\}\). Since
\[
S_1=\mathcal{X}_1\mathcal{X}_2\cdots\mathcal{X}_N,
\]
and since the observables on different parties commute, we can write, on the
relevant conditional support,
\begin{equation}
\mathcal{X}_2\ket{\psi}^{\rm cond}
=
\mathcal{X}_1\mathcal{X}_3\cdots\mathcal{X}_N
S_1\ket{\psi}^{\rm cond}.
\end{equation}
Similarly,
\begin{equation}
\mathcal{X}_2
\left(
\mathcal{Z}_2\ket{\psi}^{\rm cond}
\right)
=
\mathcal{X}_1\mathcal{X}_3\cdots\mathcal{X}_N
S_1
\left(
\mathcal{Z}_2\ket{\psi}^{\rm cond}
\right).
\end{equation}
Therefore,
\begin{equation}
\begin{aligned}
\{\mathcal{X}_2,\mathcal{Z}_2\}
\ket{\psi}^{\rm cond}
&=
\mathcal{X}_1\mathcal{X}_3\cdots\mathcal{X}_N
\{S_1,\mathcal{Z}_2\}
\ket{\psi}^{\rm cond}.
\end{aligned}
\end{equation}
Using the stabilizer relation
\[
\mathcal{Z}_1\mathcal{Z}_2\ket{\psi}^{\rm cond}
=
\ket{\psi}^{\rm cond},
\]
we may replace \(\mathcal{Z}_2\ket{\psi}^{\rm cond}\) by
\(\mathcal{Z}_1\ket{\psi}^{\rm cond}\) inside the above expression. Hence,
\begin{equation}
\begin{aligned}
\{\mathcal{X}_2,\mathcal{Z}_2\}
\ket{\psi}^{\rm cond}
&=
\mathcal{X}_1\mathcal{X}_3\cdots\mathcal{X}_N
\{S_1,\mathcal{Z}_1\}
\ket{\psi}^{\rm cond}\\
&=
\mathcal{X}_1\mathcal{X}_2
\{\mathcal{X}_1,\mathcal{Z}_1\}
\ket{\psi}^{\rm cond}\\
&=0.
\end{aligned}
\end{equation}
The final equality follows from
\(\{\mathcal{X}_1,\mathcal{Z}_1\}=0\). The same argument gives
\begin{equation}
\{\mathcal{X}_i,\mathcal{Z}_i\}\ket{\psi}^{\rm cond}
=
0,
\qquad
i\in\{2,\dots,N\}.
\end{equation}
This proves the claim.
\end{proof}

\subsubsection{\texorpdfstring{\cref{lem:SecondBell}}{Second Bell lemma}}\label{appendix:SecondBell}

If Bell inequality \Romannum{2} is maximally violated, namely
\begin{equation}
\label{eq:Bell_N_second}
\left\langle
B
\equiv
-(N-1)(M_0^{(1)}+M_1^{(1)})M_1^{(2)}\cdots M_1^{(N)}
+
\sum_{i=2}^{N}
(M_0^{(1)}-M_1^{(1)})M_0^{(i)}
\right\rangle
=
\beta_Q,
\end{equation}
where
\[
\beta_C=2(N-1),
\qquad
\beta_Q=2(N-1)\sqrt{2},
\]
then \(-\mathcal{X}_1\cdots\mathcal{X}_N\) is a stabilizer of the conditional
state:
\begin{equation}
-
\mathcal{X}_1\cdots\mathcal{X}_N
\ket{\psi}^{\rm cond}
=
\ket{\psi}^{\rm cond}.
\end{equation}
Moreover, for every \(i\in\{1,\dots,N\}\),
\begin{equation}
\begin{aligned}
\mathcal{X}_i^2\ket{\psi}^{\rm cond}
=
\mathcal{Z}_i^2\ket{\psi}^{\rm cond}
&=
\ket{\psi}^{\rm cond},\\
\{\mathcal{X}_i,\mathcal{Z}_i\}\ket{\psi}^{\rm cond}
&=
0.
\end{aligned}
\end{equation}
Here
\begin{equation}
\begin{aligned}
\mathcal{X}_{1}
&=
\frac{M^{(1)}_0+M^{(1)}_1}{\sqrt{2}},
&
\mathcal{Z}_{1}
&=
\frac{M^{(1)}_0-M^{(1)}_1}{\sqrt{2}},
\\
\mathcal{X}_{i}
&=
M^{(i)}_1,
&
\mathcal{Z}_{i}
&=
M^{(i)}_0,
\qquad i\in\{2,\dots,N\}.
\end{aligned}
\end{equation}

\begin{proof}
Maximal violation of Bell inequality \Romannum{2} gives
\begin{equation}
\left\langle
B
\equiv
-(N-1)(M_0^{(1)}+M_1^{(1)})M_1^{(2)}\cdots M_1^{(N)}
+
\sum_{i=2}^{N}
(M_0^{(1)}-M_1^{(1)})M_0^{(i)}
\right\rangle
=
\beta_Q .
\end{equation}
Equivalently, the following sum-of-squares decomposition is saturated:
\begin{equation}
\left\langle
\beta_Q\id-B
\right\rangle
=
\frac{1}{\sqrt{2}}
\left\langle
(N-1)(\id-S_1)^2
+
\sum_{i=2}^{N}
(\id-S_i)^2
\right\rangle
=
0,
\end{equation}
where
\begin{equation}
\label{eq:stabilizers_second}
\begin{aligned}
S_1
&=
-\mathcal{X}_1\cdots\mathcal{X}_N,\\
S_i
&=
\mathcal{Z}_1\mathcal{Z}_i,
\qquad i\in\{2,\dots,N\}.
\end{aligned}
\end{equation}
As before, since the sum of positive terms has vanishing expectation value, each
term vanishes on the conditional state. Therefore,
\begin{equation}
S_i\ket{\psi}^{\rm cond}
=
\ket{\psi}^{\rm cond},
\qquad
i\in\{1,\dots,N\}.
\end{equation}
Thus the operators \(S_i\) are stabilizers of the state attaining maximal
violation of Bell inequality \Romannum{2}.

We first prove that all \(\mathcal{X}_i\) and \(\mathcal{Z}_i\) square to the
identity on \(\ket{\psi}^{\rm cond}\). For parties \(i\in\{2,\dots,N\}\), this
again follows from the fact that \(M_0^{(i)}\) and \(M_1^{(i)}\) are Hermitian
unitary observables.

For \(\mathcal{Z}_1\), we use the stabilizers
\[
S_i=\mathcal{Z}_1\mathcal{Z}_i,
\qquad i\in\{2,\dots,N\}.
\]
From \cref{eq:stabilizers_second}, we have
\begin{equation}
\mathcal{Z}_1\mathcal{Z}_i\ket{\psi}^{\rm cond}
=
\ket{\psi}^{\rm cond}.
\end{equation}
Since \(\mathcal{Z}_i^2=\id\) for \(i\geq 2\), this implies
\begin{equation}
\mathcal{Z}_1\ket{\psi}^{\rm cond}
=
\mathcal{Z}_i\ket{\psi}^{\rm cond}.
\end{equation}
Therefore,
\begin{equation}
\mathcal{Z}_1^2\ket{\psi}^{\rm cond}
=
\mathcal{Z}_1\mathcal{Z}_i\ket{\psi}^{\rm cond}
=
S_i\ket{\psi}^{\rm cond}
=
\ket{\psi}^{\rm cond}.
\end{equation}

We now prove the same statement for \(\mathcal{X}_1\). Since
\[
S_1
=
-\mathcal{X}_1\mathcal{X}_2\cdots\mathcal{X}_N,
\]
and \(\mathcal{X}_i^2=\id\) for \(i\geq 2\), we have
\begin{equation}
\mathcal{X}_1^2\ket{\psi}^{\rm cond}
=
S_1^2\ket{\psi}^{\rm cond}.
\end{equation}
Using \(S_1\ket{\psi}^{\rm cond}=\ket{\psi}^{\rm cond}\), we obtain
\begin{equation}
\mathcal{X}_1^2\ket{\psi}^{\rm cond}
=
S_1^2\ket{\psi}^{\rm cond}
=
S_1\ket{\psi}^{\rm cond}
=
\ket{\psi}^{\rm cond}.
\end{equation}

It remains to prove the anticommutation relations. For party \(1\), the
anticommutation follows directly from the definitions of
\(\mathcal{X}_1\) and \(\mathcal{Z}_1\), together with the fact that
\(M_0^{(1)}\) and \(M_1^{(1)}\) are Hermitian unitaries.

We now prove the relation for party \(2\); the same argument applies to all
\(i\in\{2,\dots,N\}\). Because now
\[
S_1
=
-\mathcal{X}_1\mathcal{X}_2\cdots\mathcal{X}_N,
\]
we have, on the relevant conditional support,
\begin{equation}
\mathcal{X}_2\ket{\psi}^{\rm cond}
=
-
\mathcal{X}_1\mathcal{X}_3\cdots\mathcal{X}_N
S_1\ket{\psi}^{\rm cond},
\end{equation}
and similarly,
\begin{equation}
\mathcal{X}_2
\left(
\mathcal{Z}_2\ket{\psi}^{\rm cond}
\right)
=
-
\mathcal{X}_1\mathcal{X}_3\cdots\mathcal{X}_N
S_1
\left(
\mathcal{Z}_2\ket{\psi}^{\rm cond}
\right).
\end{equation}
Therefore,
\begin{equation}
\begin{aligned}
\{\mathcal{X}_2,\mathcal{Z}_2\}
\ket{\psi}^{\rm cond}
&=
-
\mathcal{X}_1\mathcal{X}_3\cdots\mathcal{X}_N
\{S_1,\mathcal{Z}_2\}
\ket{\psi}^{\rm cond}.
\end{aligned}
\end{equation}
Using the stabilizer relation
\[
\mathcal{Z}_1\mathcal{Z}_2\ket{\psi}^{\rm cond}
=
\ket{\psi}^{\rm cond},
\]
we may replace \(\mathcal{Z}_2\ket{\psi}^{\rm cond}\) by
\(\mathcal{Z}_1\ket{\psi}^{\rm cond}\), giving
\begin{equation}
\begin{aligned}
\{\mathcal{X}_2,\mathcal{Z}_2\}
\ket{\psi}^{\rm cond}
&=
-
\mathcal{X}_1\mathcal{X}_3\cdots\mathcal{X}_N
\{S_1,\mathcal{Z}_1\}
\ket{\psi}^{\rm cond}\\
&=
\mathcal{X}_1\mathcal{X}_2
\{\mathcal{X}_1,\mathcal{Z}_1\}
\ket{\psi}^{\rm cond}\\
&=0.
\end{aligned}
\end{equation}
The last equality again follows from
\(\{\mathcal{X}_1,\mathcal{Z}_1\}=0\). Hence
\begin{equation}
\{\mathcal{X}_i,\mathcal{Z}_i\}\ket{\psi}^{\rm cond}
=
0,
\qquad
i\in\{2,\dots,N\}.
\end{equation}
This proves the claim.
\end{proof}

\subsection{\texorpdfstring{\cref{lem:Zsep_SysJunk}}{System-junk separation lemma}}
\label{appendix:Zsep}

We prove that, on the conditional subspace \(H_i^{\rm cond}\), the operator
\(\mathcal{Z}_i\) acts only on the two-dimensional system part and
trivially on the junk degrees of freedom. More precisely, for every
\(i\in\{1,\dots,N\}\),
\begin{equation}
\begin{aligned}
\mathcal Z_i\big|_{H_i^{\mathrm{cond}}} = Z_{a_i}\otimes \id_{J_i}\big|_{H_i^{\mathrm{cond}}},
\end{aligned}
\end{equation}
where
\[
Z= \left(\ket{01}\bra{01}- \ket{10}\bra{10}\right)_{S_{i_R} S_{i_L}} .
\]

\begin{proof}
The token-counting self-testing result of Ref.~\cite{sekatski2023partial}
implies that the coarse grained measurement of party \(i\) is, up to local
isometries,
\begin{equation}
\begin{aligned}
\Pi_i^{0}
&=
\ketbra{00}{00}_{S_{i_R}S_{i_L}}\otimes \id_{J_i},\\
\Pi_i^{1}
&=
\left(
\ketbra{01}{01}
+
\ketbra{10}{10}
\right)_{S_{i_R}S_{i_L}}
\otimes \id_{J_i},\\
\Pi_i^{2}
&=
\ketbra{11}{11}_{S_{i_R}S_{i_L}}\otimes \id_{J_i}.
\end{aligned}
\end{equation}
Moreover, the conditional state, given that all parties obtain an output $a_i \in \{1_\pm \}$, has
the form
\begin{equation}
\ket{\psi}^{\rm cond}
=
\frac{1}{\sqrt{2}}
\left(
\ket{\psi_a}^{\rm cond}
+
\ket{\psi_c}^{\rm cond}
\right),
\end{equation}
where
\begin{align}
\ket{\psi_a}^{\rm cond}
&:=
\ket{01,\dots,01}_{S_{1_R}S_{1_L}\cdots S_{N_R}S_{N_L}}
\otimes
\ket{\zeta_1^{a}}_{J_{1_L}J_{2_R}}
\otimes \cdots \otimes
\ket{\zeta_N^{a}}_{J_{N_L}J_{1_R}},
\\
\ket{\psi_c}^{\rm cond}
&:=
\ket{10,\dots,10}_{S_{1_R}S_{1_L}\cdots S_{N_R}S_{N_L}}
\otimes
\ket{\zeta_1^{c}}_{J_{1_L}J_{2_R}}
\otimes \cdots \otimes
\ket{\zeta_N^{c}}_{J_{N_L}J_{1_R}} .
\end{align}

We use the probability terms associated with events in which one party
obtains an output in $\{1_\pm \}$, while its immediate neighbors obtain either zero or two. These terms constrain the action of the operators \(\mathcal{Z}_i\) on the two components of the conditional state. The argument is slightly different for \(i\neq 1\) and \(i=1\): in the former case, \(\mathcal{Z}_i\) is a binary observable, so \(\id\pm\mathcal{Z}_i\geq 0\); in the latter, \(\mathcal{Z}_1\) is inferred from the conditional observables of party 1 and requires a separate step.

\paragraph{Case \(i\neq 1\).}
Let \(i\in\{2,\dots,N\}\). From the probability terms of the trusted strategy,
we have
\begin{equation}
P(a_i=1_+ \mid a_{i-1}=2,a_i=1_\pm,a_{i+1}=0,x_i=0)=1 .
\end{equation}
Equivalently,
\begin{align}
1
&=
\frac{
P(a_i=1_+,a_{i-1}=2,a_{i+1}=0 \mid x_i=0)
}{
P(a_{i-1}=2,a_i=1_\pm,a_{i+1}=0 \mid x_i=0)
}
\nonumber\\
&=
\frac{
\bra{\psi}
\Pi^2_{i-1}
\frac{\id+\mathcal{Z}_i}{2}
\Pi^0_{i+1}
\ket{\psi}
}{
P(a_{i-1}=2,a_i=1_\pm,a_{i+1}=0 \mid x_i=0)
}
\nonumber\\
&=
\bra{\psi_{i,a}}^{\rm cond}
\frac{\id+\mathcal{Z}_i}{2}
\ket{\psi_{i,a}}^{\rm cond}.
\end{align}
Here \(\ket{\psi_{i,a}}^{\rm cond}\) denotes the normalized state obtained after
conditioning on the event \(a_{i-1}=2,a_i=1_\pm,a_{i+1}=0\), i.e.
\begin{equation}
\ket{\psi_{i,a}}^{\rm cond}:= \frac{\Pi^2_{i-1} \Pi^0_{i+1} \ket{\psi}}
{\sqrt{P(a_{i-1}=2,a_{i+1}=0)}}.
\end{equation}
On the local
conditional space of party \(i\), this state has the same reduced support as
the component \(\ket{\psi_a}^{\rm cond}\). Hence,
\begin{equation}
\bra{\psi_a}^{\rm cond}
\frac{\id+\mathcal{Z}_i}{2}
\ket{\psi_a}^{\rm cond}
=
1 .
\end{equation}
Therefore,
\begin{equation}
\bra{\psi_a}^{\rm cond}
\mathcal{Z}_i
\ket{\psi_a}^{\rm cond}
=
1 .
\end{equation}

Similarly, the conditional probability
\begin{equation}
P(a_i=1_+ \mid a_{i-1}=0,a_i=1_\pm,a_{i+1}=2,x_i=0)=0
\end{equation}
gives
\begin{align}
0
&=
\frac{
P(a_i=1_+,a_{i-1}=0, a_{i+1}=2 \mid x_i=0)
}{
P(a_{i-1}=0,a_i=1_\pm,a_{i+1}=2 \mid x_i=0)
}
\nonumber\\
&=
\frac{
\bra{\psi}
\Pi^0_{i-1}
\frac{\id+\mathcal{Z}_i}{2}
\Pi^2_{i+1}
\ket{\psi}
}{
P(a_{i-1}=0,a_i=1_\pm,a_{i+1}=2 \mid x_i=0)
}
\nonumber\\
&=
\bra{\psi_{i,c}}^{\rm cond}
\frac{\id+\mathcal{Z}_i}{2}
\ket{\psi_{i,c}}^{\rm cond},
\end{align}
where 
\begin{equation}
\ket{\psi_{i,c}}^{\rm cond}:= \frac{\Pi^0_{i-1} \Pi^2_{i+1} \ket{\psi}}
{\sqrt{P(a_{i-1}=0,a_{i+1}=2)}}.
\end{equation}
Again, from the local perspective of party \(i\),
\(\ket{\psi_{i,c}}^{\rm cond}\) has the same reduced support as
\(\ket{\psi_c}^{\rm cond}\). Thus,
\begin{equation}
\bra{\psi_c}^{\rm cond}
\frac{\id+\mathcal{Z}_i}{2}
\ket{\psi_c}^{\rm cond}
=
0,
\end{equation}
or equivalently,
\begin{equation}
\bra{\psi_c}^{\rm cond}
\mathcal{Z}_i
\ket{\psi_c}^{\rm cond}
=
-1 .
\end{equation}

We have therefore obtained
\begin{align}
\bra{\psi_a}^{\rm cond}
\mathcal{Z}_i
\ket{\psi_a}^{\rm cond}
&=1,\\
\bra{\psi_c}^{\rm cond}
\mathcal{Z}_i
\ket{\psi_c}^{\rm cond}
&=-1.
\end{align}
This implies
\begin{align}
\bra{\psi_a}^{\rm cond}
(\id-\mathcal{Z}_i)
\ket{\psi_a}^{\rm cond}
&=0,\\
\bra{\psi_c}^{\rm cond}
(\id+\mathcal{Z}_i)
\ket{\psi_c}^{\rm cond}
&=0.
\end{align}
Since \(\mathcal{Z}_i\) is Hermitian and
\( \id \pm \mathcal{Z}_i \geq 0\), we can write
\begin{equation}
\begin{aligned}
&\left(\id-\mathcal{Z}_i\right)^{1/2}
\ket{\psi_{a}}^{\rm cond}=0,\\
&\left(\id+\mathcal{Z}_i\right)^{1/2}
\ket{\psi_{c}}^{\rm cond}=0 .
\end{aligned}    
\end{equation}
Multiplying by \(\left(\id-\mathcal{Z}_i\right)^{1/2}\) and
\(\left(\id+\mathcal{Z}_i\right)^{1/2}\), respectively, we obtain
\begin{equation}
\begin{aligned}
&\mathcal{Z}_i \ket{\psi_{a}}^{\rm cond}
=
\ket{\psi_{a}}^{\rm cond},\\
&\mathcal{Z}_i \ket{\psi_{c}}^{\rm cond}
=
-\ket{\psi_{c}}^{\rm cond}.
\end{aligned}
\end{equation}

Thus, \(\mathcal{Z}_i\) has eigenvalue \(+1\) on the component whose system part is
\(\ket{01}_{S_{i_R}S_{i_L}}\), and eigenvalue \(-1\) on the component whose system
part is \(\ket{10}_{S_{i_R}S_{i_L}}\). To express this action on the local
conditional support, define the junk supports
\begin{align}
J^a_{i_R}
&:=
\operatorname{supp}\!\left(
\operatorname{Tr}_{J_{(i-1)_L}}
\ketbra{\zeta^a_{i-1}}{\zeta^a_{i-1}}
\right),\\
J^a_{i_L}
&:=
\operatorname{supp}\!\left(
\operatorname{Tr}_{J_{(i+1)_R}}
\ketbra{\zeta^a_i}{\zeta^a_i}
\right),\\
J^c_{i_R}
&:=
\operatorname{supp}\!\left(
\operatorname{Tr}_{J_{(i-1)_L}}
\ketbra{\zeta^c_{i-1}}{\zeta^c_{i-1}}
\right),\\
J^c_{i_L}
&:=
\operatorname{supp}\!\left(
\operatorname{Tr}_{J_{(i+1)_R}}
\ketbra{\zeta^c_i}{\zeta^c_i}
\right).
\end{align}
On the relevant local conditional support
\begin{equation}
\left(
\ket{01}_{S_{i_R}S_{i_L}}
\otimes J^a_{i_R}\otimes J^a_{i_L}
\right)
\oplus
\left(
\ket{10}_{S_{i_R}S_{i_L}}
\otimes J^c_{i_R}\otimes J^c_{i_L}
\right),
\end{equation}
we therefore have
\begin{equation}
\mathcal{Z}_i
=
Z_{S_i}\otimes \id_{J_i},
\end{equation}
where
\begin{equation}
Z_{S_i}
=
\left(
\ketbra{01}{01}
-
\ketbra{10}{10}
\right)_{S_{i_R}S_{i_L}} .
\end{equation}

\paragraph{Case \(i=1\).}
It remains to treat party \(i=1\). In this case, the previous positivity
argument does not apply directly, because the operator \(\mathcal{Z}_1\) is
defined through the conditional observables in
\cref{eq:observableToXZ}. In particular, \(\id\pm\mathcal{Z}_1\) need not be
positive.

For \(x_1=0\), the relevant second-round observable is
\[
\frac{\mathcal{X}_1+\mathcal{Z}_1}{\sqrt{2}}.
\]
Using the same type of conditional probability terms as above, we obtain
\begin{align}
{u_{1,0}^{+}}^2
&=
P(a_1=1_+ \mid a_1=1_\pm,a_2=0,a_N=2,x_1=0)
\nonumber\\
&=
\frac{
P(a_1=1_+,a_2=0,a_N=2 \mid x_1=0)
}{
P(a_1=1_\pm,a_2=0,a_N=2 \mid x_1=0)
}
\nonumber\\
&=
\frac{
\bra{\psi}
\frac{
\id+\frac{\mathcal{X}_1+\mathcal{Z}_1}{\sqrt{2}}
}{2}
\Pi^0_{2}
\Pi^2_{N}
\ket{\psi}
}{
P(a_1=1,a_2=0,a_N=2 \mid x_1=0)
}
\nonumber\\
&=
\bra{\psi_{1,a}}^{\rm cond}
\left(
\frac{\id}{2}
+
\frac{\mathcal{X}_1+\mathcal{Z}_1}{2\sqrt{2}}
\right)
\ket{\psi_{1,a}}^{\rm cond}
\nonumber\\
&=
\bra{\psi_a}^{\rm cond}
\left(
\frac{\id}{2}
+
\frac{\mathcal{X}_1+\mathcal{Z}_1}{2\sqrt{2}}
\right)
\ket{\psi_a}^{\rm cond},
\end{align}

\begin{align}
{v_{1,0}^{+}}^2
&=
P(a_1=1_+ \mid a_1=1_\pm,a_2=2,a_N=0,x_1=0)
\nonumber\\
&=
\frac{
P(a_1=1_+,a_2=2,a_N=0 \mid x_1=0)
}{
P(a_1=1_\pm,a_2=2,a_N=0 \mid x_1=0)
}
\nonumber\\
&=
\frac{
\bra{\psi}
\frac{
\id+\frac{\mathcal{X}_1+\mathcal{Z}_1}{\sqrt{2}}
}{2}
\Pi^2_{2}
\Pi^0_{N}
\ket{\psi}
}{
P(a_1=1_\pm,a_2=2,a_N=0 \mid x_1=0)
}
\nonumber\\
&=
\bra{\psi_{1,c}}^{\rm cond}
\left(
\frac{\id}{2}
+
\frac{\mathcal{X}_1+\mathcal{Z}_1}{2\sqrt{2}}
\right)
\ket{\psi_{1,c}}^{\rm cond}
\nonumber\\
&=
\bra{\psi_c}^{\rm cond}
\left(
\frac{\id}{2}
+
\frac{\mathcal{X}_1+\mathcal{Z}_1}{2\sqrt{2}}
\right)
\ket{\psi_c}^{\rm cond}.
\end{align}
Here \(\ket{\psi_{1,a}}^{\rm cond}\) and
\(\ket{\psi_{1,c}}^{\rm cond}\) denote the normalized states obtained after the
corresponding conditioning events. As in the case \(i\neq 1\), these states have
the same reduced support on party \(1\) as
\(\ket{\psi_a}^{\rm cond}\) and \(\ket{\psi_c}^{\rm cond}\), respectively.

Similarly, for \(x_1=1\), the relevant conditional observable is
\[
\frac{\mathcal{X}_1-\mathcal{Z}_1}{\sqrt{2}},
\]
and hence
\begin{align}
{u_{1,1}^{+}}^2
&=
\bra{\psi_a}^{\rm cond}
\left(
\frac{\id}{2}
+
\frac{\mathcal{X}_1-\mathcal{Z}_1}{2\sqrt{2}}
\right)
\ket{\psi_a}^{\rm cond},
\\
{v_{1,1}^{+}}^2
&=
\bra{\psi_c}^{\rm cond}
\left(
\frac{\id}{2}
+
\frac{\mathcal{X}_1-\mathcal{Z}_1}{2\sqrt{2}}
\right)
\ket{\psi_c}^{\rm cond}.
\end{align}
Substituting the coefficient values from~\cref{eq:meas_coeff_forA} gives
\begin{align}
\bra{\psi_a}^{\rm cond}
\mathcal{Z}_1
\ket{\psi_a}^{\rm cond}
&=1,\\
\bra{\psi_c}^{\rm cond}
\mathcal{Z}_1
\ket{\psi_c}^{\rm cond}
&=-1.
\end{align}

Although \(\id\pm\mathcal{Z}_1\) need not be positive on the full Hilbert
space, \cref{lem:FirstBell} or \cref{lem:SecondBell} implies that
\[
\mathcal{Z}_1^2\ket{\psi}^{\rm cond}
=
\ket{\psi}^{\rm cond}.
\]
Moreover, \(\mathcal{Z}_1\) acts locally on party \(1\) and therefore acts
trivially on the system parts of the states of the other parties. Since
\[
\ket{\psi}^{\rm cond}
=
\frac{1}{\sqrt{2}}
\left(
\ket{\psi_a}^{\rm cond}
+
\ket{\psi_c}^{\rm cond}
\right),
\]
and the two components are orthogonal already on the system parts of the other parties, the above relation must hold separately on each component. Hence,
\begin{equation}
\begin{aligned}
\mathcal{Z}_1^2\ket{\psi_a}^{\rm cond}
&=
\ket{\psi_a}^{\rm cond},\\
\mathcal{Z}_1^2\ket{\psi_c}^{\rm cond}
&=
\ket{\psi_c}^{\rm cond}.
\end{aligned}
\end{equation}
Therefore,
\begin{align}
\bra{\psi_a}^{\rm cond}
(\id-\mathcal{Z}_1)^2
\ket{\psi_a}^{\rm cond}
&=
2\bra{\psi_a}^{\rm cond}
(\id-\mathcal{Z}_1)
\ket{\psi_a}^{\rm cond}
=
0,\\
\bra{\psi_c}^{\rm cond}
(\id+\mathcal{Z}_1)^2
\ket{\psi_c}^{\rm cond}
&=
2\bra{\psi_c}^{\rm cond}
(\id+\mathcal{Z}_1)
\ket{\psi_c}^{\rm cond}
=
0.
\end{align}
It follows that
\begin{align}
(\id-\mathcal{Z}_1)\ket{\psi_a}^{\rm cond}
&=0,\\
(\id+\mathcal{Z}_1)\ket{\psi_c}^{\rm cond}
&=0,
\end{align}
and hence
\begin{align}
\mathcal{Z}_1\ket{\psi_a}^{\rm cond}
&=
\ket{\psi_a}^{\rm cond},\\
\mathcal{Z}_1\ket{\psi_c}^{\rm cond}
&=
-\ket{\psi_c}^{\rm cond}.
\end{align}

The same identification of the relevant local conditional support then yields
\begin{equation}
\mathcal{Z}_1
=
Z_{S_1}\otimes \id_{J_1}
\qquad
\text{on } H_1^{\rm cond},
\end{equation}
where
\begin{equation}
Z_{S_1}
=
\left(
\ketbra{01}{01}
-
\ketbra{10}{10}
\right)_{S_{1_R}S_{1_L}} .
\end{equation}

Combining the two cases, we conclude that for every
\(i\in\{1,\dots,N\}\),
\begin{equation}
\mathcal{Z}_i
=
Z_{a_i}\otimes \id_{J_i}
\qquad
\text{on } H_i^{\rm cond}.
\end{equation}
This proves the claim.
\end{proof}
\subsection{\texorpdfstring{\cref{lem:Xsep_SysJunk}}{X-system-junk separation lemma}}
\label{appendix:Xsep_SJ_3}

We prove that the constraints
\begin{equation}
\begin{aligned}
&\mathcal{X}_i,\mathcal{Z}_i:
H_i^{\rm cond}
\rightarrow
H_i^{\rm cond}\oplus {H_i}^{\rm cond}_{\perp},\\
&\mathcal{X}_i^{2}=\mathcal{Z}_i^{2}=\id,\\
&\{\mathcal{X}_i,\mathcal{Z}_i\}=0,\\
&\mathcal{Z}_i=Z_{S_i}\otimes \id_{J_i},
\end{aligned}
\end{equation}
imply that, on the conditional subspace \(H_i^{\rm cond}\), the operator
\(\mathcal{X}_i\) has the form
\begin{equation}
\label{eq:Xsep_appendix}
\begin{aligned}
\mathcal{X}_i &: H_i^{\rm cond}
\rightarrow
H_i^{\rm cond}\oplus {H_i}^{\rm cond}_{\perp},\\
\mathcal{X}_i
&=
\frac{X+\ii Y}{2}\otimes T_i
+
\frac{X-\ii Y}{2}\otimes T_i^\dagger,
\end{aligned}
\end{equation}
where
\begin{equation}
X
=
\left(
\ket{10}\bra{01}
+
\ket{01}\bra{10}
\right)_{S_{i_R}S_{i_L}},
\qquad
Y
=
\left(
\ii\ket{10}\bra{01}
-
\ii\ket{01}\bra{10}
\right)_{S_{i_R}S_{i_L}},
\end{equation}
and \(T_i\) is a unitary map
\begin{equation}
T_i:
J^c_{i_L}\otimes J^c_{i_R}
\longrightarrow
J^a_{i_L}\otimes J^a_{i_R}.
\end{equation}

\begin{proof}
By \cref{lem:X_HtoH}, the operator \(\mathcal{X}_i\) maps
\(H_i^{\rm cond}\) into itself. Nevertheless, within the conditional subspace,
it may act nontrivially on both system part and the junk part.
We decompose \(H_i^{\rm cond}\) according to the system basis
\(\{\ket{01},\ket{10}\}_{S_{i_R}S_{i_L}}\). Since \(\mathcal{X}_i\) is
Hermitian, it can be written in block form as
\begin{equation}
\mathcal{X}_i =
\begin{pNiceMatrix}[first-row,last-col]
\ket{01} & \ket{10} & \\
R_i & T_i & \bra{01} \\
T_i^{\dagger} & P_i & \bra{10}
\end{pNiceMatrix}\;,
\end{equation}
where the blocks act on the corresponding junk spaces. More explicitly,
\begin{align}
R_i &: J^a_{i_L}\otimes J^a_{i_R}
\rightarrow
J^a_{i_L}\otimes J^a_{i_R},
\label{eq:R_i_domain}\\
T_i &: J^c_{i_L}\otimes J^c_{i_R}
\rightarrow
J^a_{i_L}\otimes J^a_{i_R},
\label{eq:T_i_domain}\\
P_i &: J^c_{i_L}\otimes J^c_{i_R}
\rightarrow
J^c_{i_L}\otimes J^c_{i_R}.
\label{eq:P_i_domain}
\end{align}

Using
\[
Z_{S_i}
=
\left(
\ketbra{01}{01}
-
\ketbra{10}{10}
\right)_{S_{i_R}S_{i_L}},
\]
we have, in the same block decomposition,
\begin{align}
\mathcal{Z}_i \mathcal{X}_i
&=
\left(Z_{S_i}\otimes \id_{J_i}\right)
\begin{pmatrix}
R_i & T_i \\
T_i^{\dagger} & P_i
\end{pmatrix}
=
\begin{pmatrix}
R_i & T_i \\
-T_i^{\dagger} & -P_i
\end{pmatrix}, \\
\mathcal{X}_i \mathcal{Z}_i
&=
\begin{pmatrix}
R_i & T_i \\
T_i^{\dagger} & P_i
\end{pmatrix}
\left(Z_{S_i}\otimes \id_{J_i}\right)
=
\begin{pmatrix}
R_i & -T_i \\
T_i^{\dagger} & -P_i
\end{pmatrix}.
\end{align}
Therefore,
\begin{equation}
\{\mathcal{X}_i,\mathcal{Z}_i\}
=
\begin{pmatrix}
2R_i & 0 \\
0 & -2P_i
\end{pmatrix}.
\end{equation}
The anticommutation relation
\(\{\mathcal{X}_i,\mathcal{Z}_i\}=0\) implies
\begin{equation}
R_i=0,
\qquad
P_i=0.
\end{equation}
Hence, on \(H_i^{\rm cond}\),
\begin{equation}
\mathcal{X}_i
=
\begin{pmatrix}
0 & T_i \\
T_i^{\dagger} & 0
\end{pmatrix}.
\end{equation}

We now impose \(\mathcal{X}_i^2=\id\). This gives
\begin{align}
\mathcal{X}_i^2
&=
\begin{pmatrix}
0 & T_i \\
T_i^{\dagger} & 0
\end{pmatrix}
\begin{pmatrix}
0 & T_i \\
T_i^{\dagger} & 0
\end{pmatrix}
=
\begin{pmatrix}
T_iT_i^{\dagger} & 0 \\
0 & T_i^{\dagger}T_i
\end{pmatrix}.
\end{align}
Thus,
\begin{equation}
T_iT_i^{\dagger}
=
T_i^{\dagger}T_i
=
\id,
\end{equation}
so \(T_i\) is unitary from
\(J^c_{i_L}\otimes J^c_{i_R}\) to
\(J^a_{i_L}\otimes J^a_{i_R}\).

Finally, observe that, in the basis
\(\{\ket{01},\ket{10}\}_{S_{i_R}S_{i_L}}\),
\begin{equation}
\frac{X+\ii Y}{2}
=
\ket{01}\bra{10},
\qquad
\frac{X-\ii Y}{2}
=
\ket{10}\bra{01}.
\end{equation}
Therefore,
\begin{equation}
\mathcal{X}_i
=
\frac{X+\ii Y}{2}\otimes T_i
+
\frac{X-\ii Y}{2}\otimes T_i^\dagger,
\end{equation}
with \(T_i\) unitary. This proves the claim.
\end{proof}
\subsection{\texorpdfstring{\cref{lem:JunkRLSep}}{Junk-RL separation lemma}}
\label{appendix:Tsep_RL_3}

Here, we generalize~\cite[Lemmas~1o and~1e]{sekatski2023partial} to an infinite-dimensional setting, which contributes to the proof of \cref{lem:JunkRLSep}.
This generalization allows us to state our main self-testing result, \cref{thm:fullselftesting}, without any dimensionality assumptions.

Let $|\zeta\rangle_{JJ'}$ be a non-zero vector in a bipartite Hilbert space $\mathcal H_{JJ'}=\mathcal H_J\otimes \mathcal H_{J'}$ where both $\cH_J, \cH_{J'}$ are separable. 
We can then consider the Schmidt decomposition
\begin{equation*}
|\zeta\rangle_{JJ'} = \sum_{i\in \mathbb N} \lambda_{i} |i\rangle_{J} |i\rangle_{J'}.
\end{equation*}
Here, without loss of generality, we assume that $\lambda_{i} > 0$ for all $i$, and $\{\ket i_J:\, i\in \mathbb N\}$ and $\{\ket i_{J'}:\, i\in \mathbb N\}$ are orthonormal bases in $\cH_J$ and $\cH_{J'}$, respectively. 

For any two bounded operators $X, Y: \mathcal{H}_{J} \rightarrow \mathcal{H}_{J}$, we define their inner product relative to the state $|\zeta\rangle_{JJ'}$ by
\begin{align*}
\langle X, Y \rangle_{J} &:= \langle\zeta| (X^\dagger Y \otimes \id_{J'}) |\zeta\rangle= \sum_{i} \lambda_{i}^2 \langle i| X^\dagger Y |i\rangle,
\end{align*}
with the induced norm given by
\begin{equation*}
\|X\|_{J}^2 = \langle X, X \rangle_{J} = \sum_{i} \lambda_{i}^2 \|X|i\rangle\|^2.
\end{equation*}
It is straightforward to verify that this is indeed a valid inner product and $\|X\|_{J}^2 < +\infty$ if $X$ is bounded. Based on this inner product, we define $\mathcal O_J$ to be the Hilbert space of the completion of the space of bounded operators under this norm:
\begin{equation*}
\mathcal{O}_{J} := \overline{\left\{ X \;\middle|\; X : \mathcal{H}_{J} \rightarrow \mathcal{H}_{J} \text{ bounded  } \right\}}.
\end{equation*}
We emphasize that by definition, $\mathcal{O}_{J}$ is a Hilbert space, which possibly contains unbounded operators. 
Moreover, it is separable since $\mathcal{H}_{J}$ is separable.

We can similarly define the Hilbert space $\mathcal{O}_{J'}$ of operators acting on $\cH_{J'}$. Then, as will be shown below, $\mathcal O_{J'}$ is isomorphic to $\mathcal O_J$. 

Let $Y: \mathcal{H}_{J'} \rightarrow \mathcal{H}_{J'}$ be a bounded operator. We define the linear map $\Gamma(Y) = \tilde{Y}: \mathcal{H}_{J} \rightarrow \mathcal{H}_{J}$ by considering its action on basis vectors given by 
\begin{equation*}
 \tilde{Y} | i \rangle = \frac{1}{\lambda_{i}} \sum_j \lambda_{j}\langle j | Y^\dagger | i \rangle^* \ket j_J= \frac{1}{\lambda_{i}}\sum_j \lambda_{j} \langle i | Y | j \rangle \ket j_J.
\end{equation*}
Evaluating the norm of $\tilde{Y}$ we have
\begin{align*}
\|\tilde{Y}\|_{J}^2 &= \sum_{i} \lambda_{i}^2 \|\tilde{Y}|i\rangle\|^2 \\
&= \sum_{i} \lambda_{i}^2 \left( \frac{1}{\lambda_{i}^2} \sum_{j} \lambda_{j}^2 |\langle i|Y|j\rangle|^2 \right) \\
&= \sum_{i,j} \lambda_{j}^2 |\langle i|Y|j\rangle|^2 \\
&= \sum_{j} \lambda_{j}^2 \|Y|j\rangle\|^2 = \langle\zeta| \id_J \otimes Y^\dagger Y |\zeta\rangle = \|Y\|_{J'}^2.
\end{align*}
Moreover, this mapping satisfies 
\begin{align*}
(\tilde{Y} \otimes \id) |\zeta\rangle_{JJ'} &= \sum_{i} \lambda_{i} (\tilde{Y}|i\rangle) \otimes |i\rangle_{J'} \\
&= \sum_{i} \sum_{j} \lambda_{j} \langle j|Y|i\rangle |j\rangle_{J} \otimes |i\rangle_{J'} \\
&= \sum_{j} \lambda_{j} |j\rangle_{J} \otimes (Y|j\rangle_{J'}) \\
&= (\id \otimes Y) |\zeta\rangle_{JJ'},
\end{align*}
for bounded $Y$. Extending $\Gamma$ by continuity to the whole $\mathcal O_{J'}$, then $\Gamma$ is an isometry from $\mathcal{O}_{J'}$ to $\mathcal{O}_{J}$ since an isometry on a dense subspace extends to the completion. 

The next step is to understand the behavior of these Hilbert spaces of operators under tensor product. 
Let $|\zeta_1\rangle_{J_1J'_1}$ and $|\zeta_{2}\rangle_{J_2J'_2}$ be two bipartite states. Then, we can consider the Hilbert spaces $\mathcal{O}_{J_1}, \mathcal{O}_{J'_1}, \mathcal{O}_{J_2}$ and $\mathcal{O}_{J'_2}$ as above. 
We can also consider the tensor product state
\begin{equation*}
|\psi\rangle_{KK'} = |\zeta_1\rangle_{J_1J'_1} \otimes |\zeta_{2}\rangle_{J_2J'_2},
\end{equation*}
with the bipartition $K = J_1J_2$ and $K' = J'_1J'_2$. Then, we can construct the Hilbert spaces $\mathcal O_K, \mathcal O_{K'}$, which satisfy
\begin{equation*}
\mathcal O_K=\mathcal{O}_{J_1J_2} = \mathcal{O}_{J_1} \otimes \mathcal{O}_{J_2}, \qquad \mathcal O_{K'}=\mathcal{O}_{J'_1J'_2} = \mathcal{O}_{J'_1} \otimes \mathcal{O}_{J'_2},
\end{equation*}
with the canonical Hilbert-space tensor product.
In particular, for $X\in \mathcal O_{J_1}$ and $Y\in \mathcal O_{J_2}$ we have 
\begin{equation*}
\|X \otimes Y\|_{K}^2 = \|X\|_{J_1}^2 \cdot \|Y\|_{J_2}^2.
\end{equation*}
More importantly, the isometry $\Gamma_K:\mathcal O_{J'_1J'_2}\to \mathcal O_{J_1J_2}$ is easily verified to be of the form
$$\Gamma_K = \Gamma_{J_1}\otimes \Gamma_{J_2}.$$
Furthermore, for any bounded operator $X_{J'_1J'_2}\in \mathcal O_{J'_1J'_2}$, we have
\begin{equation}\label{eq:X-zeta-12-isometry}
(\id_{J_1J_2} \otimes X) |\zeta_1\rangle |\zeta_{2}\rangle = (\tilde{X} \otimes \id_{J'_1J'_2}) |\zeta_{1}\rangle |\zeta_{2}\rangle,
\end{equation}
where $\tilde{X}=\Gamma_K(X) \in \mathcal O_{J_1J_2}$.

\begin{lemma}\label{lem:ProductStructureLemmaInfDim}
Suppose that $\ket{\zeta_1}_{J_1J'_1}, \dots, \ket{\zeta_n}_{J_nJ'_n}$ be $n$ bipartie quantum states and we have
\begin{equation*}
\left(\bigotimes_{i:\text{ odd}}U_{i}\right) |\zeta_{1}\rangle \otimes \dots \otimes |\zeta_{n}\rangle = \left(\bigotimes_{i:\text{ even}}U_{i}\right) |\zeta_{1}\rangle \otimes \cdots \otimes |\zeta_{n}\rangle,
\end{equation*}
where $U_{i}$, for any $1\leq i\leq n$, is a unitary operator acting on $\mathcal{H}_{J_{i}'} \otimes \mathcal{H}_{J_{i+1}}$ with indices taken modulo $n$.
Then, each $U_i$ is of product form, i.e., product of unitaries acting on $\mathcal{H}_{J_{i}'}$ and $\mathcal{H}_{J_{i+1}}$.
\end{lemma}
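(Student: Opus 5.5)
The plan is to turn the identity into an operator identity on the unprimed subsystems $J_1\cdots J_n$ using the isometries $\Gamma$ built above. We then read off the operator-Schmidt rank of each $U_i$. Throughout, write $\ket{Z}:=\ket{\zeta_1}\otimes\cdots\otimes\ket{\zeta_n}$; note that the alternating cyclic structure forces $n$ to be even. First we move all even unitaries to the left:
\begin{equation*}
\Big(\bigotimes_{i\ \mathrm{even}}U_i\Big)^{\dagger}\Big(\bigotimes_{i\ \mathrm{odd}}U_i\Big)\ket{Z}=\ket{Z}.
\end{equation*}
We also record that, since all Schmidt coefficients $\lambda$ are strictly positive, the map $X\mapsto (X\otimes\id_{J'_1\cdots J'_n})\ket{Z}$ is an isometry from $\mathcal O_{J_1\cdots J_n}$ onto $\mathcal H_{J_1J'_1\cdots J_nJ'_n}$. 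In particular it is injective, so an identity of the form $(W\otimes\id)\ket{Z}=\ket{Z}$ forces $W=\id$ as an element of $\mathcal O_{J_1\cdots J_n}$.

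Second, each $U_i$ lies in $\mathcal O_{J'_i}\otimes\mathcal O_{J_{i+1}}$ (bounded, hence of finite norm), so we expand it in an operator-Schmidt decomposition $U_i=\sum_k A^{(i)}_k\otimes B^{(i)}_k$ in that Hilbert space. We then apply $\Gamma_{J_i}\otimes\id$, i.e.\ \cref{eq:X-zeta-12-isometry}, to replace each $A^{(i)}_k$ on $J'_i$ by $\tilde A^{(i)}_k$ on $J_i$. This transports $U_i$ to an element $\tilde U_i\in\mathcal O_{J_i}\otimes\mathcal O_{J_{i+1}}$ with the same Schmidt coefficients, since $\Gamma$ is an isometry. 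We do this one factor at a time, letting each primed factor act directly on $\ket{Z}$. The transport reverses the order of products on each $J_i$ (it is a transpose-type map), so on each $J_i$ we obtain a product of one transported factor from $U_{i-1}$ and one from $U_i$, in a definite order. The outcome is
\begin{equation*}
(W\otimes\id)\ket{Z}=\ket{Z},
\end{equation*}
where $W$ is the cyclic ``matrix product'' of the $\tilde U_i$'s and $\tilde U_i^\dagger$'s around the ring $J_1,\dots,J_n$. By the injectivity above, $W=\id$.

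Third, we extract product form from $W=\id$. Cut the ring between $J_1$ and $J_2$ and consider the bipartition $\{J_2\}\,|\,\{J_1,J_3,\dots,J_n\}$. The identity has operator-Schmidt rank one across this cut. In $W$, the only operators straddling the cut at $J_2$ come from $\tilde U_1$ and $\tilde U_2$. Contracting the $J_2$-slot against a suitable test vector in $\mathcal O_{J_2}$, namely taking inner products in $\mathcal O_{J_2}$, we aim to show that the Schmidt rank of $\tilde U_1$ across $J_1|J_2$ cannot exceed that of $W$. This uses that the other factors are unitary, hence invertible, and can be peeled off. We conclude that $\tilde U_1$, and hence $U_1$ (because $\Gamma$ preserves the operator-Schmidt decomposition), has Schmidt rank one. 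Its two factors are then automatically unitary up to a scalar, which we absorb. By cyclic symmetry the same holds for every $U_i$.

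The main obstacle is this last step in infinite dimensions. The transported operators $\tilde A^{(i)}_k$ may be unbounded elements of $\mathcal O_J$, so products and the peeling-off of unitaries are not automatically well defined. I would handle this by never multiplying two transported operators. Instead I would perform the rank argument directly on vectors: compute the reduced density operator of $(\bigotimes_{\mathrm{odd}}U_i)\ket{Z}$ on the bipartition $J'_1\,|\,$rest, and compare it with the same quantity for $(\bigotimes_{\mathrm{even}}U_i)\ket{Z}$, in which $J'_1$ is touched by no unitary. A mismatch in the ranks would then follow from the Schmidt rank of $U_1$ together with the strict positivity of the $\lambda$'s, avoiding products of unbounded operators.
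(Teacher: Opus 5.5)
There is a genuine gap, and it sits exactly where the content of the lemma lies.

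\textbf{Step 3 is not proved, and its mechanism fails.} Step 3 only states an aim. The mechanism you offer, peeling off ``the other factors'' because they are unitary, does not work: those factors are not unitaries. Here $\tilde U_i=(\Gamma_{J_i}\otimes\id)(U_i)$ is a weighted partial transpose of $U_i$. For maximally entangled $\zeta_i$ it is literally the partial transpose, and such objects need not be invertible, let alone bounded. For example, the partial transpose of SWAP is $d\,\ketbra{\Phi}{\Phi}$, which has rank one.

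\textbf{$W$ has no consistent bond ordering.} On every site $J_i$, the transported leg of bond $i$ stands to the right of the direct leg of bond $i-1$. The bond order is therefore cyclic, so $W$ is not an ordered product of bond operators from which a single bond could be removed by multiplying with an inverse.

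\textbf{The fallback does not work either.} Both sides of the hypothesis are the same vector. The marginal on $J'_1$ also cannot detect whether $U_1$ is entangling: with maximally mixed marginals, every $U_1$ leaves it unchanged.

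\textbf{A concrete check.} Nothing in your sketch uses the length of the ring. Yet for $n=2$ the rank bound you aim for is false. Take $\zeta_1,\zeta_2$ maximally entangled, $U_1$ the CNOT from $J'_1$ to $J_2$, and $U_2$ the CNOT from $J_1$ to $J'_2$. Then $U_1\ket{Z}=U_2\ket{Z}$, so $W=\id$, while $\tilde U_1$ has operator-Schmidt rank $2$. (So the statement tacitly excludes $n=2$.) Any correct argument must exploit the ring having more than two bonds.

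\textbf{Odd $n$ is dropped.} Your claim that the cyclic structure forces $n$ to be even is a misreading. For odd $n$, $U_n$ (on $J'_nJ_1$) and $U_1$ (on $J'_1J_2$) are both odd-indexed but act on disjoint systems. So $\bigotimes_{i\ \mathrm{odd}}U_i$ is well defined and the hypothesis is meaningful. The paper treats this case by merging $S_1$ with $S_{2m+1}$ and reducing to the even case.

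\textbf{The missing idea.} The idea that removes precisely the unboundedness obstacle you identify is to transport only the even-indexed unitaries, and both of their legs: $J'_{2k}\to J_{2k}$ and $J_{2k+1}\to J'_{2k+1}$. The odd unitaries are left untouched.
\begin{itemize}
\item Both sides then act only on the alternating family $S_1S_2S_3S_4\cdots=J'_1J_2J'_3J_4\cdots$. This family is one half of a bipartition of $\ket{Z}$ with full Schmidt rank.
\item Separation therefore yields the vector identity $U_1\otimes U_3\otimes\cdots=\tilde U_2\otimes\tilde U_4\otimes\cdots$ in $\mathcal O_{S_1}\otimes\cdots\otimes\mathcal O_{S_n}$.
\item Each $S_i$ carries one leg from each side of the equation, never two on the same side. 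Hence no product of transported, possibly unbounded, operators ever appears.
\item Expand $U_{2k-1}$ and $\tilde U_{2k}$ in an orthonormal basis of $\mathcal O_{S_{2k}}$ and match coefficients, varying one bond index while another is held fixed (this needs at least two odd bonds). This shows that all $S_{2k-1}$-components of $U_{2k-1}$ are proportional, i.e.\ each bond is a product.
\item Since $\Gamma$ is an isometry compatible with tensor products, the product form transfers back to the even $U_{2k}$. Unitarity of the factors then follows from unitarity of the $U_i$.
\end{itemize}
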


\begin{proof}
The first step is to apply the isometry $\Gamma$ constructed above on $U_i$'s with even indices $i$ to obtain $\tilde U_i$. Then, using~\cref{eq:X-zeta-12-isometry}, the identity in the assumption can be written as 
\begin{equation}\label{eq:U-odd-even-zeta}
(U_{1} \otimes U_{3} \otimes \cdots) |\zeta_{1}\rangle \otimes \cdots \otimes |\zeta_{n}\rangle = (\tilde{U}_{2} \otimes \tilde{U}_{4} \otimes \cdots) |\zeta_{1}\rangle \otimes \cdots \otimes |\zeta_{n}\rangle.
\end{equation}
We note that $\tilde{U}_{2i}$ acts on $\mathcal{H}_{J_{2i}} \otimes \mathcal{H}_{J_{2i+1}'}$. Moreover, in the above equation, the operators $U_i, \tilde U_i$ act nontrivially only on subsystems $J'_i$ with odd $i$ and $J_i$ with even $i$. This motivates a relabeling of the underlying subsystems to 
\begin{equation*}
S_{i} = \begin{cases}
J_{i}' & \text{if } i \text{ is odd} \\
J_{i} & \text{if } i \text{ is even}
\end{cases}
\end{equation*}
Under this convention, the odd unitaries $U_{2k+1}$ act on $\mathcal{H}_{S_{2k+1}} \otimes \mathcal{H}_{S_{2k+2}}$, while the transformed even unitaries $\tilde{U}_{2k}$ act on $\mathcal{H}_{S_{2k}} \otimes \mathcal{H}_{S_{2k+1}}$. Moreover, since full rank Schmidt vectors are separating, \cref{eq:U-odd-even-zeta} can equivalently be written as
$$U_{1} \otimes U_{3} \otimes \cdots = \tilde{U}_{2} \otimes \tilde{U}_{4} \otimes \cdots.$$
We emphasize that  $U_{2k+1}$ belongs to $\mathcal{O}_{S_{2k+1}, S_{2k+2}}$ and $\tilde U_{2k}$ belongs to $\mathcal{O}_{S_{2k}, S_{2k+1}}$. Moreover, the above identity should be understood inside the tensor product Hilbert space $\mathcal O_{S_1}\otimes \cdots\otimes \mathcal O_{S_n}$. This observation allows us to use some basic tools from linear algebra such as expanding a vector in an orthonormal basis.

It is more convenient to split the rest of the proof into two cases of even of odd $n$.
In the following, product $U_{1} U_{3} \dots U_{2m-1}$ is to be understood as tensor products, for simplicity.

\subsubsection*{Case 1: Even number of subsystems ($n = 2m$)}
In this case, our identity takes the form
\begin{equation*}
U_{1} U_{3} \dots U_{2m-1} = \tilde{U}_{2} \tilde{U}_{4} \dots \tilde{U}_{2m}.
\end{equation*}
Let $\{X_{S_{k}, i} : i\in \mathbb N\}$ be an orthonormal basis for the Hilbert space $\mathcal{O}_{S_{k}}$ thanks to its separability. We can decompose the odd unitaries along this basis as
\begin{equation*}
U_{2k-1} = \sum_{i_{k}} Y_{S_{2k-1}, i_{k}} \otimes X_{S_{2k}, i_{k}},
\end{equation*}
for some $Y_{S_{2k-1}, i_{k}} \in \mathcal{O}_{S_{2k-1}}$.
Similarly, for the transformed even unitaries we can write
\begin{equation*}
\tilde{U}_{2k} = \sum_{j_{k}} X_{S_{2k}, j_{k}} \otimes Z_{S_{2k+1}, j_{k}},
\end{equation*}
where $Z_{S_{2k+1}, j_{k}} \in \mathcal{O}_{S_{2k+1}}$.
Substituting these expansions into the previous equation yields
\begin{align*}
\sum_{i_{1}, \dots, i_{m}} Y_{S_{1}, i_{1}} \otimes X_{S_{2}, i_{1}} \otimes Y_{S_{3}, i_{2}} \dots \otimes Y_{S_{2m-1}, i_{m}} \otimes X_{S_{2m}, i_{m}} \nonumber \\
\qquad = \sum_{j_{1}, \dots, j_{m}} X_{S_{1}, j_{1}} \otimes Z_{S_{2}, j_{1}} \otimes \dots \otimes X_{S_{2m}, j_{m}} \otimes Z_{S_{2m+1}, j_{m}}.
\end{align*}
Since these expressions lie within the well-defined tensor product Hilbert space $\mathcal{O}_{S_{1}} \otimes \dots \otimes \mathcal{O}_{S_{2m}}$, matching the independent basis components implies that
\begin{equation*}
Y_{S_{1}, i_{1}} \otimes Y_{S_{3}, i_{2}} \dots \otimes Y_{S_{2m-1}, i_{m}} = Z_{S_{3}, j_{1}} \otimes Z_{S_{5}, j_{2}} \dots \otimes Z_{S_{2m+1}, j_{m}},
\end{equation*}
for all indices $i_1, \dots, i_m, j_1, \dots, j_m$. Since both sides are of tensor product form, this means that for any $k$ and all indices $i, j$
the operators $Y_{S_{2k-1}, i}$ and $Z_{S_{2k-1}, j}$ are proportional. Therefore, for any $k$, and any $i, i'$ the operators  $Y_{S_{2k-1}, i}, Y_{S_{2k-1}, i'}$ are proportional. Using this in the expansion $U_{2k-1} = \sum_{i_{k}} Y_{S_{2k-1}, i_{k}} \otimes X_{S_{2k}, i_{k}}$, we realize that $U_{2k-1}$ is a product operator as desired. 
Indeed, this shows that $U_{2k-1} =  Y'_{2k-1}\otimes X'_{2k}$ for operators $Y'_{2k-1}\in \mathcal O_{S_{2k-1}}$ and $X'_{2k}\in \mathcal O_{S_{2k}}$. Next, using the fact that $U_{2k-1}$ is bounded and unitary, it is readily verified that $Y'_{2k-1}, X'_{2k}$ are also bounded and can be taken to be unitaries. By a similar argument, $\tilde U_{2k}$ and hence $U_{2k}$ are product operators as desired. 

\subsubsection*{Case 2: Odd Number of Subsystems ($n = 2m+1$)}
When the number of systems is odd, we reduce the problem to the even case by merging two subsystems. To this end, we define a combined system and a grouped unitary as
\begin{equation*}
S_{1}' = S_{1} S_{2m+1}, \qquad U_{1}' = U_{1} U_{2m+1}
\end{equation*}
The starting identity then becomes
\begin{align*}
\tilde{U}_{2} \cdots \tilde{U}_{2m} = U_{1} U_{3} \cdots U_{2m+1} 
= (U_{1} U_{2m+1}) U_{3} \dots U_{2m-1} = U_{1}' U_{3} \dots U_{2m-1},
\end{align*}
which takes us back to the even case. Therefore, applying the result from the previous case, it follows that $\tilde{U}_{2}, \dots, \tilde{U}_{2m}$ and $U_{1}', U_{3}, \dots, U_{2m-1}$ are all product operators. In particular, for the merged operator we have
\begin{equation*}
U_{1} U_{2m+1} = U_{1}' = V_{S_{1}S_{2m+1}} \otimes W_{S_{2}},
\end{equation*}
for some unitaries $V_{S_{1}S_{2m+1}}, W_{S_{2}}$. 
This reveals that 
\begin{equation*}
U_{1} = (U_{2m+1}^\dagger V_{S_{1}S_{2m+1}}) \otimes W_{S_{2}}, 
\end{equation*}
is a product operator as $U_1$ acts only on $S_1$ and $S_2$. By symmetry, $U_{2m+1}$ is also a product operator, completing the proof for all cases.

\end{proof}

\end{document}